\DeclareMathOperator\RR{\mathbb{R}}
\DeclareMathOperator\HH{\mathcal{H}}
\newenvironment{proof}{\par \noindent \textbf{Proof.} }{\hfill$\Box$\medskip}
\newtheorem{theorem}{Theorem}
\newtheorem{corollary}[theorem]{Corollary}
\newtheorem{problem}[theorem]{Problem}
\newtheorem{lemma}[theorem]{Lemma}
\newtheorem{remark}[theorem]{Remark}
\newtheorem{claim}[theorem]{Claim}
\title[CBU graphs]{Contact graphs of boxes with unidirectional contacts\thanks{This research is partially supported by ANR project GATO (ANR-16-CE40-0009), and ANR project GRALMECO (ANR-21-CE48-0004-01).}}
\author[D. Gon\c{c}alves \and V. Limouzy \and P. Ochem]{Daniel Gon\c{c}alves\affiliationmark{1}
\and Vincent Limouzy\affiliationmark{2}
\and Pascal Ochem\affiliationmark{1}}
\affiliation{
LIRMM, Universit\'e de Montpellier, CNRS, Montpellier, France\\
Universit\'e Clermont Auvergne, Clermont Auvergne INP, CNRS, Mines Saint-Etienne, \textsc{Limos}, F-63000 Clermont-Ferrand, France}
\keywords{intersection graphs}
\begin{document}
\publicationdata{vol. 25:3 special issue ICGT'22}{2023}{5}{10.46298/dmtcs.10805}{2023-01-13; 2023-01-13; 2023-06-21}{2023-10-19}
\maketitle

\begin{abstract}
This paper is devoted to the study of particular geometrically defined intersection classes of graphs. Those were previously studied by Magnant and Martin, who proved that these graphs have arbitrary large chromatic number, while being triangle-free. We give several structural properties of these graphs, and we raise several questions.
\end{abstract} 

\section{Introduction}

A lot of graph classes studied in the literature are defined by a geometric model, 
where vertices are represented by geometric objects (e.g. intervals on a line, disks in the plane, 
chords inscribed in a circle...)  and the adjacency of two vertices is determined according 
to the relation between the corresponding objects.  A large amount of graph classes consider 
the intersection relation (e.g. interval graphs, disk graphs or circle graphs). However 
some other relations might be considered such as the containment, the overlap or also the contact
between objects. 
Recently several groups of authors started to study graph classes defined by contact models, 
as for example Contact of Paths in a grid (CPG) \cite{DenizGMR18}, Contact of $L$ shapes in $\mathbb{R}^2$
or even contact of triangles in the plane \cite{FraysseixOR94}. 
In this note we consider a new class defined by a contact model. More precisely we consider the
class of graphs defined  by contact of axis parallel boxes in $\mathbb{R}^d$ where the contact occurs on $(d-1)$-dimensional 
object in only one direction (CBU) .

When considering  graph defined by axis-parallel boxes in $\mathbb{R}^d$ and the adjacency 
relation is given by the intersection it corresponds to the important notion of boxicity 
introduced by Roberts \cite{Roberts69}, when the adjacency relation is given by the 
containment relation it correspond to comparability graphs and it is connected to the 
poset dimension introduced by Dushnik \& Miller \cite{DushnikM41}

The motivation for this class of graphs originate from an article of Magnant and Martin \cite{MagnantM11} 
where a wireless channel assignment is considered. The problem considers rectangular rooms in a building 
and asks to find a channel assignment for each room. In order to avoid interferences 
rooms sharing the same wall, floor or ceiling need to use different channels. The question was to determine 
whether a constant number of channels would suffice to answer this problem. The first negative answer 
was provided by Reed and Allwright \cite{ReedA08} that a constant number of channels is not sufficient. 
Magnant and Martin strengthened their result that for any integer $k$ there exists a \emph{building} 
that requires exactly $k$ channels. In addition, their construction only requires floor-ceiling contacts.

We provide the first structural properties of this class. We first establish some links 
with the well-known notion of boxicity in Section~\ref{sec:boxicity}. Then in Section~\ref{sec:recognition}
we consider the recognition problem  and we prove that it is NP-complete to determine if a graph is $d$-CBU
for any integer $d\geq 3$. Then we provide a characterization in terms of an acyclic orientation of the class of general CBU.
Thanks to this characterization, it is immediate to realize that the class of CBU constitutes a proper 
sub-class of Hasse diagram graph (A Hasse diagram graph is the undirected 
graph obtained from a Hasse diagram associated to a poset).
Finally we prove in Section~\ref{sec:complexity} that several well studied optimization problems remains NP-hard
on either $2$- or $3$-CBU graphs. 

\section{Preliminaries}

We consider $\RR^d$ and $d$ orthogonal vectors $e_1,\ldots,e_d$ and we introduce a new class of geometric intersection graphs. 
Here, the vertices correspond to interior disjoint $d$-dimensional axis-parallel boxes in $\RR^d$, and two such boxes are only allowed to intersect on a $(d-1)$-dimensional box orthogonal to $e_1$. This class of graphs is denoted by \emph{$d$-CBU}, for \emph{Contact} graphs of $d$-dimensional \emph{Boxes} with \emph{Unidirectional} contacts. We denote \emph{CBU} the union of $d$-CBU for all $d$.

Note that 1-CBU correspond to the forests of paths.

\begin{claim}
For every $d\ge 1$, $d$-CBU graphs are triangle-free.
\end{claim}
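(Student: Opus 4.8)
The plan is to reduce the geometry to a one-dimensional statement about the $e_1$-coordinates of the boxes and then to finish with a short counting argument. For each vertex $v$, write $[a_v,b_v]$ with $a_v<b_v$ for the orthogonal projection of the box $B_v$ onto the $e_1$-axis. The first step is to understand what a contact imposes on these intervals: if $u$ and $v$ are adjacent, their boxes meet in a $(d-1)$-dimensional box orthogonal to $e_1$, hence in a single hyperplane $\{x_1=c\}$. Since the boxes have disjoint interiors, the projections $[a_u,b_u]$ and $[a_v,b_v]$ cannot overlap in a nondegenerate interval, so they must meet in exactly the point $c$. Two intervals with disjoint interiors sharing a single point share an endpoint of each, and the only configuration compatible with $a_u<b_u$ and $a_v<b_v$ is $c=b_u=a_v$ or $c=b_v=a_u$. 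I would record this as the key lemma: every edge $uv$ satisfies $b_u=a_v$ or $b_v=a_u$.

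The second step orients each edge by this relation: orient $u\to v$ when $b_u=a_v$, that is, when $B_u$ lies below $B_v$ along $e_1$. The crucial monotonicity observation is that $b_u=a_v$ forces $b_u=a_v<b_v$, so an edge always points from the box with the smaller top-coordinate to the one with the larger top-coordinate. In particular adjacent boxes have distinct values of $b$, and the orientation of each edge is completely determined by comparing these top-coordinates.

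Finally, suppose for contradiction that $\{u,v,w\}$ is a triangle. Since the three top-coordinates are pairwise distinct, relabel so that $b_u<b_v<b_w$. By the monotonicity step each edge points towards the larger top-coordinate, giving $b_u=a_v$ (from edge $uv$), $b_u=a_w$ (from edge $uw$), and $b_v=a_w$ (from edge $vw$). The last two equalities yield $b_u=a_w=b_v$, contradicting $b_u<b_v$. Hence no triangle exists.

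The genuinely delicate point is the first step: translating ``contact is a $(d-1)$-dimensional box orthogonal to $e_1$'' together with interior-disjointness into the clean endpoint condition $b_u=a_v$ or $b_v=a_u$. Once this one-dimensional reformulation is in place the combinatorics is immediate and, pleasantly, entirely independent of $d$, which explains why the statement holds uniformly for all $d\ge 1$.
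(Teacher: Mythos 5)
Your proof is correct and is essentially the paper's argument in different notation: your endpoint lemma ($b_u=a_v$ or $b_v=a_u$ for every edge $uv$) is exactly the paper's \emph{homogeneous arc labeling} --- orient along $e_1$ and label each arc by the coordinate of the contact hyperplane, so every vertex $v$ has a single in-label $a_v$ and a single out-label $b_v$ with $a_v<b_v$ --- and your triangle contradiction $b_u=a_w=b_v$ is precisely the paper's clash of labels at the two arcs entering the sink of the triangle. One small wording fix: the fact that the $e_1$-projections of adjacent boxes meet in the single point $c$ follows from the contact being contained in the hyperplane $\{x_1=c\}$ (the definition of a contact orthogonal to $e_1$), not from interior-disjointness alone, since interior-disjoint boxes can have $e_1$-projections overlapping in a nondegenerate interval when they are separated in another coordinate.
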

Indeed, note that orienting the edges according to vector $e_1$ and labeling each arc with the coordinate of the corresponding $(d-1)$-hyperplane, one obtains an acyclic orientation such that for every vertex, all the outgoing arcs have the same label, all the ingoing arcs have the same label, and the label of ingoing arcs is smaller than the label of outgoing arcs. We call such a labeling of the arcs an \emph{homogeneous arc labeling}. Note that an oriented cycle cannot admit such a labeling. A triangle $abc$ oriented acyclically is, up to automorphism, such that $d^+(a)=2$, $d^+(b)=1$, and $d^+(c)=0$. Now $ab$ and $ac$ should have the same label, such as $ac$ and $bc$, but $ab$ and $bc$ should be distinct, a contradiction. Thus a triangle cannot admit a homogeneous arc labeling. This completes the proof of the claim.

With similar arguments one obtains the following for short cycles.
See Figure~\ref{fig:GoodBadOrientation}.
\begin{claim}\label{cl:45-cycles}
For any homogeneous arc labeling of a graph $G$, its restriction to a short cycle is as follows.
\begin{itemize}
    \item For a 4-cycle, the orientation is either such that there are two sources and two sinks, or
    it is such that there is one source and one sink linked by two oriented paths of length 2.
    \item For a 5-cycle, the orientation is such that there is one source and one sink linked by two oriented paths, one of length 2 and one of length 3.
\end{itemize}
\end{claim}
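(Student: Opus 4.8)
The plan is to exploit the single quantitative feature of a homogeneous arc labeling: labels strictly increase along any directed path. Write $\ell(uv)$ for the label of an arc $u\to v$, and for a vertex $v$ let $\mathrm{out}(v)$ (resp.\ $\mathrm{in}(v)$) denote the common label of its outgoing (resp.\ incoming) arcs. Along a directed path $v_0\to v_1\to\cdots\to v_k$ one has $\ell(v_{i-1}v_i)=\mathrm{out}(v_{i-1})=\mathrm{in}(v_i)<\mathrm{out}(v_i)=\ell(v_iv_{i+1})$, so the arc labels are strictly increasing; in particular a directed path of length at least $2$ has a strictly smaller label on its first arc than on its last. The one lemma I would extract is: if $s$ and $t$ are joined by two internally disjoint directed paths, then neither can have length $1$. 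Otherwise the length-$1$ path forces $\mathrm{out}(s)=\ell(st)=\mathrm{in}(t)$, while the other (being longer) forces $\mathrm{out}(s)<\mathrm{in}(t)$, a contradiction.

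Since the global orientation is acyclic, the restriction to any cycle is not a directed cycle, so its local sources and local sinks alternate and are equal in number, say $k\ge 1$. First I would treat the $4$-cycle, where $k\in\{1,2\}$. If $k=2$ the two sources and two sinks alternate, giving the ``two sources, two sinks'' orientation; here all four arcs are forced to carry one common label, so the case is consistent. If $k=1$, the cycle splits into two directed paths from the unique source to the unique sink, of lengths summing to $4$, i.e.\ $(1,3)$ or $(2,2)$. The lemma rules out a length-$1$ path, so only $(2,2)$ survives, which is precisely the ``one source and one sink linked by two paths of length $2$'' case.

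For the $5$-cycle the same dichotomy gives $k\in\{1,2\}$. When $k=1$ the two directed $s$--$t$ paths have lengths summing to $5$ and, by the lemma, each has length at least $2$; hence they form a path of length $2$ and one of length $3$, as claimed. The crux is to eliminate $k=2$. There the four maximal directed segments $s_1\to t_1$, $s_2\to t_1$, $s_2\to t_2$, $s_1\to t_2$ have lengths summing to $5$, so exactly one has length $2$ and the other three have length $1$. Each unit segment $s_i\to t_j$ forces $\mathrm{out}(s_i)=\mathrm{in}(t_j)$, and the three unit segments connect all of $s_1,s_2,t_1,t_2$ into a single path, collapsing $\mathrm{out}(s_1),\mathrm{in}(t_1),\mathrm{out}(s_2),\mathrm{in}(t_2)$ to one common value; the remaining length-$2$ segment then demands a strict increase between two labels that have just been equated, a contradiction. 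The four choices of which segment is long are equivalent under the dihedral symmetry of the configuration, so $k=2$ is impossible and the $5$-cycle statement follows.

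The main obstacle is exactly this $k=2$ case for the $5$-cycle: unlike the other configurations it is not dispatched by a single use of the lemma, and one must track how the equalities imposed at the two sinks and two sources by the three short segments propagate to contradict the strict inequality on the long segment. The analogous bookkeeping for the $4$-cycle is immediate, and I would only note that the symmetry remark lets a single computation settle all placements of the length-$2$ segment.
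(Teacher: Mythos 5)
Your proposal is correct, and it is essentially the argument the paper intends: the paper dispatches this claim with ``similar arguments'' to the triangle case, namely propagating the forced label equalities (common in-label, common out-label, strict increase along directed paths) around the cycle until one derives $x<x$, exactly as your three unit segments collapse the labels and the length-$2$ segment forces the contradictory strict inequality in the two-source, two-sink $C_5$ case. Your extraction of the no-length-$1$-path lemma and the explicit count of alternating sources and sinks is just a cleaner systematization of the same label-propagation argument illustrated in the paper's figure.
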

\begin{figure}
\centering
\includegraphics{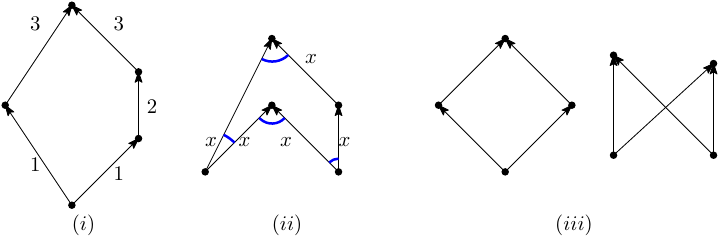}
\caption{$(i)$ Example of a good orientation of a $C_5$ with some valid 
labels, $(ii)$ example of bad orientation. Once a label $x$ is fixed for one arc, 
this label is propagated to all the arcs leading to the conclusion that $x < x$. $(iii)$ the two valid orientations of a $C_4$}
\label{fig:GoodBadOrientation}
\end{figure}

\section{Relation with Cover Graphs}
An undirected  graph is a \emph{cover graph} if it is the underlying graph of the Hasse diagram of some partial order.
It was shown by Brightwell \cite{Brightwell93} and also by  Ne\v{s}et\v{r}il and R\"odl~\cite{NesetrilR87,NesetrilR93,NesetrilR95}
that deciding whether a graph is a cover graph is NP-complete. 
However, they came up with a simple characterization in terms of acyclic orientations. 

Their characterization states that  a graph is a cover graph 
if and only if there exists an acyclic orientation without a quasi-cycle.
A \emph{quasi-cycle}, being an orientation of a cycle $(v_1,v_2,\ldots,v_n)$ with
the arcs $(v_i,v_{i+1})$ for all $1\leq n-1$ plus the arc $(v_1,v_n)$.
From a homogeneous arc labelling, the orientation provided by the labelling 
clearly fulfills the above defined condition. 
\begin{claim}\label{cl:large-cycles}
For any homogeneous arc labeling of a graph $G$, the orientation of $G$
does not contain any quasi-cycle.
\end{claim}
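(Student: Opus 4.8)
The plan is to prove the contrapositive: I assume the orientation induced by a homogeneous arc labeling contains a quasi-cycle and derive a contradiction from the labeling constraints. Recall the definition of a homogeneous arc labeling: the orientation is acyclic, and at every vertex all outgoing arcs share a common label, all incoming arcs share a common label, and the incoming label is strictly smaller than the outgoing label.

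First I would set up notation. Suppose there is a quasi-cycle on the cycle $(v_1,v_2,\ldots,v_n)$, meaning the arcs are $(v_i,v_{i+1})$ for $1\le i\le n-1$ together with the arc $(v_1,v_n)$. So $v_1$ is a source of this configuration (both $v_1v_2$ and $v_1v_n$ point out of $v_1$), the vertices $v_2,\ldots,v_{n-1}$ form a directed path $v_1\to v_2\to\cdots\to v_n$, and $v_n$ is a sink (both $(v_{n-1},v_n)$ and $(v_1,v_n)$ point into $v_n$). The key idea is to propagate the label around this structure, exactly as illustrated in Figure~\ref{fig:GoodBadOrientation}$(ii)$.

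The main step is the label propagation. Let $x$ denote the label of the arc $(v_1,v_2)$. Since $v_1$ is a source of the quasi-cycle, the arc $(v_1,v_n)$ is also outgoing at $v_1$, so by homogeneity of outgoing labels it carries the same label $x$. Now I walk along the directed path $v_1\to v_2\to\cdots\to v_n$. At each internal vertex $v_i$ ($2\le i\le n-1$) the arc $(v_{i-1},v_i)$ is incoming and the arc $(v_i,v_{i+1})$ is outgoing, so the homogeneity-plus-inequality condition forces $\mathrm{label}(v_{i-1},v_i)<\mathrm{label}(v_i,v_{i+1})$. Chaining these strict inequalities along the whole path gives $x=\mathrm{label}(v_1,v_2)<\mathrm{label}(v_{n-1},v_n)$. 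On the other hand, the arc $(v_{n-1},v_n)$ and the arc $(v_1,v_n)$ are both incoming at the sink $v_n$, so by homogeneity of incoming labels they share the same label, namely $x$; hence $\mathrm{label}(v_{n-1},v_n)=x$. Combining, $x<x$, a contradiction.

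I do not expect any real obstacle here, since the argument is a direct strict-monotonicity propagation identical in spirit to the triangle and short-cycle cases already handled. The one point to state carefully is that the path $v_1\to\cdots\to v_n$ has length at least $2$ (so that at least one strict inequality is produced), which holds because a quasi-cycle on a cycle of length $n\ge 3$ has its long directed path spanning $n-1\ge 2$ arcs; thus the chain of inequalities is nonempty and the conclusion $x<x$ genuinely follows. This completes the contradiction and proves the claim.
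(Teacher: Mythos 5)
Your proof is correct and is exactly the argument the paper intends: the paper states Claim~\ref{cl:large-cycles} without a formal proof, remarking only that the orientation from a homogeneous arc labeling ``clearly'' avoids quasi-cycles and pointing to the label-propagation picture of Figure~\ref{fig:GoodBadOrientation}$(ii)$, and your write-up is precisely that propagation made explicit (outgoing homogeneity at the source $v_1$, a chain of strict inequalities along the directed path, incoming homogeneity at the sink $v_n$, yielding $x<x$). Your careful note that the directed path has at least two arcs, so the chain of strict inequalities is nonempty, is a correct and welcome detail the paper leaves implicit.
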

\begin{corollary}
The class of CBU graphs is contained in the class of cover graphs.
\end{corollary}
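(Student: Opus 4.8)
The plan is to combine the geometric-to-combinatorial translation already used for the triangle-free claim with the Brightwell and Ne\v{s}et\v{r}il--R\"odl characterization of cover graphs recalled above. That characterization says that a graph is a cover graph precisely when it admits an acyclic orientation with no quasi-cycle. So it suffices to exhibit, for an arbitrary CBU graph $G$, one acyclic orientation of $G$ that avoids quasi-cycles, and the whole argument reduces to feeding the right orientation into that characterization.

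First I would observe that every CBU graph carries a homogeneous arc labeling essentially by definition. Indeed, fix a $d$-CBU representation of $G$; orienting each contact edge in the direction of $e_1$ and labeling each resulting arc with the $e_1$-coordinate of the corresponding $(d-1)$-dimensional contact box yields, as noted in the proof of the triangle-free claim, a homogeneous arc labeling. The same discussion already records that the underlying orientation is acyclic: an oriented cycle would force its common label to strictly increase around the cycle, giving the absurdity $x < x$.

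Second, I would invoke Claim~\ref{cl:large-cycles}, which states that the orientation underlying any homogeneous arc labeling contains no quasi-cycle. Applying it to the labeling produced in the previous step, the orientation of $G$ is simultaneously acyclic and quasi-cycle-free, so by the characterization $G$ is a cover graph. Since $G$ was an arbitrary CBU graph, the class CBU is contained in the class of cover graphs, which is the claim.

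In truth there is no genuinely hard step here: the corollary is a direct consequence of Claim~\ref{cl:large-cycles} together with the cited characterization, both of which I am entitled to assume. The only points meriting a line of justification are that a CBU representation does induce a homogeneous arc labeling and that this labeling's orientation is acyclic, and both were already established in the preceding discussion. If one wanted the argument completely self-contained, the single thing to be careful about would be the logical direction of the characterization: here we use only the \emph{easy} direction, namely that an acyclic quasi-cycle-free orientation certifies a cover graph.
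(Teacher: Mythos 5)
Your proposal is correct and follows exactly the paper's route: it derives a homogeneous arc labeling from a CBU representation, notes the induced orientation is acyclic, and feeds Claim~\ref{cl:large-cycles} into the easy direction of the Brightwell and Ne\v{s}et\v{r}il--R\"odl characterization, which is precisely how the paper obtains the corollary. Your extra care in spelling out acyclicity and the direction of the characterization is a welcome elaboration of what the paper leaves implicit, but it is not a different argument.
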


From the previous result, it is natural to ask whether both classes are equivalent.
The following remark provides the answer.

\begin{remark}
The class of CBU graphs is strictly contained in the class of cover graphs.
In Lemma \ref{lem-planar-not-CBU} we will exhibit a graph that is not CBU but is a cover graph.
\end{remark}
We will see in the following that an orientation of a triangle-free graph $G$,
fulfilling the conditions of Claim~\ref{cl:45-cycles} 
and of Claim~\ref{cl:large-cycles} may not admit a homogeneous arc labeling.

\section{Boxicity} \label{sec:boxicity}

The \emph{boxicity} $box(G)$ of a graph $G$, is the minimum dimension $d$ such that
$G$ admits an intersection representation with axis-aligned boxes. Of course, the graphs 
in $d$-CBU have boxicity at most $d$. The converse cannot hold for the graphs containing
a triangle, as those are not in CBU. However, some 
relations hold for triangle-free graphs. Let us begin with bipartite graphs.
\begin{theorem}\label{thm:box-CBU-bip}
Every bipartite graphs of boxicity $b$ belongs to $(b+1)$-CBU.
\end{theorem}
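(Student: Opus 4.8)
The plan is to reuse the $b$ coordinates of a boxicity representation to encode the adjacencies, and to spend the single extra coordinate on separating the two sides of the bipartition so that every edge becomes a facet-to-facet contact. Fix a bipartition $V(G)=X\cup Y$ into independent sets, and let $(R_v)_{v\in V(G)}$ be a representation of $G$ by non-degenerate axis-parallel boxes in $\RR^{b}$ (spanned by $e_2,\dots,e_{b+1}$) with $uv\in E(G)$ if and only if $R_u\cap R_v\neq\emptyset$. The key point is that, since $G$ is bipartite, any two vertices lying in the same part are non-adjacent, so their boxes are disjoint; this is what lets the two parts share a common contact hyperplane without creating spurious contacts inside a part.

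Before building the model I would normalise the representation so that adjacency corresponds to a \emph{full-dimensional} overlap. Let $\delta>0$ be the minimum, over all non-adjacent pairs $u,v$, of the distance between the disjoint boxes $R_u$ and $R_v$ (this is positive and attained, there being finitely many pairs). Enlarging every box by $\delta/3$ in each of the $b$ directions keeps every non-adjacent pair at positive distance, hence disjoint, while every adjacent pair, whose boxes already met, now overlaps with non-empty interior. After this step, $u,v$ are adjacent if and only if $R_u\cap R_v$ is a $b$-dimensional box, and non-adjacent if and only if $R_u\cap R_v=\emptyset$. I regard this normalisation as the only delicate point: once the contacts are guaranteed to be genuinely $b=(d-1)$-dimensional and the same-part boxes are guaranteed not even to touch, everything that follows is forced.

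Now define boxes in $\RR^{b+1}$, with $e_1$ as the contact direction, by $B_v=[0,1]\times R_v$ for $v\in X$ and $B_v=[1,2]\times R_v$ for $v\in Y$, so that each $B_v$ is a genuine $(b+1)$-dimensional box. I would then check the three kinds of pairs. For two vertices in the same part, the first coordinates coincide but $R_u\cap R_v=\emptyset$, so $B_u\cap B_v=\emptyset$. For a non-adjacent pair with $u\in X$ and $v\in Y$, the first coordinates meet only at $x_1=1$ while $R_u\cap R_v=\emptyset$, so again $B_u\cap B_v=\emptyset$. Finally, for an edge $uv$ with $u\in X$ and $v\in Y$, we get $B_u\cap B_v=\{1\}\times(R_u\cap R_v)$, a $b$-dimensional box lying in the hyperplane $x_1=1$ orthogonal to $e_1$, with disjoint interiors; this is exactly a legal unidirectional contact.

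Hence the contact graph of $(B_v)_{v\in V(G)}$ is precisely $G$, the boxes are interior-disjoint, and every contact is a $(d-1)$-dimensional box orthogonal to $e_1$, which is the definition of a $(b+1)$-CBU representation. I expect no further obstacle; as a sanity check, the orientation induced by $e_1$ points every edge from its $X$-endpoint (whose top face is $x_1=1$) to its $Y$-endpoint (whose bottom face is $x_1=1$), so all arcs carry the same label, in agreement with the homogeneous arc labeling used for the triangle-free claim.
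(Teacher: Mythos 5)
Your proof is correct and takes essentially the same route as the paper's: normalise the boxicity-$b$ representation by a small expansion so that adjacent boxes overlap full-dimensionally while non-adjacent boxes stay disjoint, then prepend the intervals $[0,1]$ and $[1,2]$ for the two parts so that every edge becomes a $b$-dimensional contact in the hyperplane $x_1=1$ orthogonal to $e_1$. One small caution in your normalisation, which the paper leaves informal (``slightly expand each box without creating new intersections''): take $\delta$ to be the minimum over non-adjacent pairs of the largest \emph{per-coordinate} gap (the $L_\infty$ distance between the boxes) rather than the Euclidean distance, since a pair of disjoint boxes separated by small gaps in many coordinates simultaneously can have Euclidean distance $\delta$ yet be forced to intersect after a $\delta/3$ expansion in every direction, whereas with the $L_\infty$ reading the gap shrinks by only $2\delta/3<\delta$ in the separating coordinate and the argument goes through verbatim.
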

\begin{proof}
Consider a bipartite graph $G$ with vertex sets $A$ and $B$.
Consider a boxicity $b$ representation of $G$ and slightly expand each box in 
such a way that the intersection graph remains unchanged (we stop the expansion of 
the boxes before creating new intersections). Now, any two intersecting boxes intersect
on a $b$-dimensional box.
Assume that this representation is drawn in the space spanned by 
$e_2,\ldots,e_{b+1}$, and let us set for the first dimension (spanned by $e_1$) that 
the vertices of $A$ and $B$, correspond to the intervals $[0,1]$ and $[1,2]$, 
respectively. As $A$ and $B$ are independent sets, it is clear that the boxes in the representation are interior disjoint and that any two intersecting boxes intersect
on a $b$-dimensional box orthogonal to $e_1$. The obtained representation is thus 
a $(b+1)$-CBU representation of $G$.
\end{proof}

Theorem~\ref{thm:box-CBU-bip} does not extend to triangle-free graphs. We will see in the 
following section that there exists triangle-free graphs with bounded boxicity
that are not $d$-CBU, for any value $d$. Actually, Lemma~\ref{lem-planar-not-CBU}
tells that there exists such graphs with girth 5.
In other words, for a $3\le g\le 5$, there is no function $f_g$ such 
that every graph $G$, of girth at least $g$ and of boxicity $b$ belongs to $(f_g(b))$-CBU. 

\begin{problem}
For $g\ge 6$, is there a function $f_g$ such 
that every graph $G$, of girth at least $g$ and of boxicity $b$ belongs to $(f_g(b))$-CBU?
\end{problem}
By Theorem~\ref{thm:series-parallel}, we know that if $f_6$ exists, then $f_6(2)$ is at 
least 3.
Nevertheless,
the following theorem shows that subdividing the edges enables to consider every 
triangle-free graph.
An intersection representation is said \emph{proper} if two objects intersect if and only if 
some point of the representation belongs to these 2 objects, only.

\begin{theorem}\label{thm:box-proper-CBU}
For every graph $G$ having a proper intersection representation with axis-parallel boxes in 
$\RR^b$, the 1-subdivision of $G$ belongs to $(b+1)$-CBU.
\end{theorem}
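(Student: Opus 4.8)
The plan is to mimic the construction used in the proof of Theorem~\ref{thm:box-CBU-bip}, but to exploit the subdivision to turn an arbitrary proper box representation into a bipartite-like situation. Recall that in the bipartite case we simply added one extra dimension ($e_1$) and placed the two sides of the bipartition at intervals $[0,1]$ and $[1,2]$, so that every contact occurred on a hyperplane orthogonal to $e_1$. The obstruction to doing this directly for a general graph $G$ is that $G$ need not be bipartite, so there is no consistent way to assign the two intervals $[0,1]$ and $[1,2]$ to the endpoints of each edge. The 1-subdivision fixes exactly this: subdividing every edge once produces a bipartite graph in which one side is the set $V(G)$ of original vertices and the other side is the set of subdivision vertices, one per edge.

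First I would take the proper representation of $G$ in $\RR^b$, in which each vertex $v$ is a box $B_v$, and expand the boxes slightly (as in the previous proof) so that every intersection is a full $b$-dimensional box. Properness guarantees that for each edge $uv$ there is a region of the intersection $B_u \cap B_v$ that belongs to $B_u$ and $B_v$ only; I would pick a small $b$-dimensional sub-box $S_{uv}$ inside this private region to serve as the footprint of the subdivision vertex. The representation will live in the space spanned by $e_2,\ldots,e_{b+1}$, with $e_1$ reserved for the new contact direction. I would then place the original vertices on the interval $[0,1]$ in the $e_1$-direction and the subdivision vertices on $[1,2]$: concretely, the box for an original vertex $v$ is $[0,1]\times B_v$, and the box for the subdivision vertex $w_{uv}$ of edge $uv$ is $[1,2]\times S_{uv}$.

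The key verification is that this is a valid $(b+1)$-CBU representation of the 1-subdivision. Each subdivision box $[1,2]\times S_{uv}$ meets exactly the two original boxes $[0,1]\times B_u$ and $[0,1]\times B_v$, and it meets each of them on the hyperplane $\{1\}\times S_{uv}$, which is $b$-dimensional and orthogonal to $e_1$ — giving precisely the two desired edges $uw_{uv}$ and $vw_{uv}$. Because the $S_{uv}$ sit in the private regions of distinct edges, two subdivision boxes do not intersect (they share the same $e_1$-range $[1,2]$, so any contact would have to be a genuine overlap of their footprints, which properness and the choice of disjoint private sub-boxes rule out). Similarly, two original boxes $[0,1]\times B_u$, $[0,1]\times B_v$ overlap in $e_1$ but their contact, if any, would be along the original intersection $B_u\cap B_v$, which is not orthogonal to $e_1$; however we must ensure these do not create spurious edges or interior overlaps.

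The main obstacle — and the step deserving the most care — is exactly this last point: the original boxes may genuinely overlap each other in $\RR^b$, so after the product with $[0,1]$ they would share interior, violating the interior-disjointness required of a CBU representation. The fix is to shrink each original box off of its contacts: instead of using $[0,1]\times B_v$, I would retract $B_v$ slightly to a box $B_v'\subset B_v$ whose interior is disjoint from all other $B_u'$, while keeping $B_v'$ large enough to still contain all the private sub-boxes $S_{uv}$ for edges incident to $v$. In other words, I shrink the original boxes until they are pairwise interior-disjoint but still cover the chosen footprints of the incident subdivision vertices; properness is what guarantees these footprints can be chosen in the interior of each $B_v$, so enough room remains. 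Once the original boxes are interior-disjoint and meet only the subdivision boxes (and only on hyperplanes orthogonal to $e_1$), all contacts are unidirectional and $(d-1)$-dimensional, so the representation is a genuine $(b+1)$-CBU representation of the 1-subdivision of $G$, completing the argument.
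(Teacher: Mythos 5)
Your construction breaks at exactly the point you flagged, and the proposed repair is self-contradictory. You place every original vertex on the same $e_1$-interval $[0,1]$, so adjacent original boxes overlap with full $(b+1)$-dimensional interior, and you propose to fix this by shrinking each $B_v$ to a box $B_v'$ so that the $B_v'$ are pairwise interior-disjoint \emph{while still containing} the footprints $S_{uv}$ of all incident edges. But $S_{uv}$ must be a full $b$-dimensional box (the contact $\{1\}\times (S_{uv}\cap B_v')$ has to be $(d-1)$-dimensional for $d=b+1$), so if $S_{uv}\subseteq B_u'$ and $S_{uv}\subseteq B_v'$ then $B_u'\cap B_v'$ has nonempty interior --- precisely the interior overlap you were trying to remove. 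Even under the charitable reading where $S_{uv}$ merely straddles the two shrunken boxes, meeting each in a $b$-dimensional piece, you would be constructing in $\RR^b$ a system of interior-disjoint vertex boxes plus pairwise disjoint edge ``bridges'' $S_{uv}$, each meeting exactly its two endpoint boxes (disjointness of the bridges is forced, since all subdivision boxes share the $e_1$-range $[1,2]$). That amounts to converting an \emph{intersection} representation into a \emph{contact}-type representation in the same dimension $b$, which is impossible in general: $K_{5,5}$ has a proper box intersection representation in $\RR^2$ (thicken a grid of crossing segments; each crossing square is private), yet disjoint vertex regions joined by internally disjoint bridges in the plane would yield a planar embedding of $K_{5,5}$. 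So no amount of shrinking can make your two-slab layout work.

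The paper's proof spends the new dimension where you economized. Instead of two slabs $[0,1]$ and $[1,2]$, it places the original vertices on pairwise \emph{disjoint} intervals along $e_1$: vertex $v_i$ gets $[2i,2i+1]\times B_{v_i}$, so the original boxes are interior-disjoint no matter how badly they overlap in the coordinates $e_2,\ldots,e_{b+1}$. The subdivision vertex of an edge $v_iv_j$ with $i<j$ is then the long box $[2i+1,2j]\times B_{ij}$, where $B_{ij}$ is a $b$-dimensional box in the private part of $B_{v_i}\cap B_{v_j}$ guaranteed by properness (your use of properness to choose private footprints is the one ingredient you have in common with the paper). This box touches $v_i$'s box on $\{2i+1\}\times B_{ij}$ and $v_j$'s box on $\{2j\}\times B_{ij}$, both orthogonal to $e_1$; it crosses the $e_1$-ranges of intermediate vertices $v_k$ harmlessly because $B_{ij}$ avoids $B_{v_k}$, and two footprints $B_{ij}$, $B_{kl}$ of distinct edges are disjoint because a common point would lie in some third box while lying only in $B_{v_i}$ and $B_{v_j}$. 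Your write-up is missing this idea --- distinct $e_1$-intervals for the original vertices, with subdivision boxes spanning between them --- and without it the argument does not go through.
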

\begin{proof}
Consider such a representation of $G$ and slightly expand each box in 
such a way that the intersection graph remains unchanged, and any two 
intersecting boxes intersect on a $b$-dimensional box.
Assume that this representation is drawn in the space spanned by 
$e_2,\ldots,e_{b+1}$, and for the first dimension (spanned by $e_1$) 
let us consider any vertex ordering, $v_1,\ldots,v_n$.
For the first dimension, a vertex $v_i$, corresponds to the interval $[2i,2i+1]$.
Clearly, none of these boxes intersect. 
Let us now add the boxes for the vertices added by subdividing 
the edges of $G$. For any edge $v_iv_j$, in the space spanned by $e_2,\ldots,e_{b+1}$, 
the expansion ensured that the intersection of $v_i$ and $v_j$ contains a box $B_{ij}$, 
that does not intersect any other box of the representation.
If $i<j$, the subdivision vertex of $v_iv_j$, is represented by $[2i+1,2j]\times B_{i,j}$. 
The obtained representation is clearly a $(b+1)$-CBU representation of the subdivision of $G$.
\end{proof}

\begin{corollary}\label{cor:box-CBU}
For every triangle-free graph $G$ of boxicity $b$, the 1-subdivision of $G$ 
belongs to $(b+1)$-CBU.
\end{corollary}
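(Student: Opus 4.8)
The plan is to deduce the corollary directly from Theorem~\ref{thm:box-proper-CBU} by showing that, for triangle-free graphs, \emph{any} boxicity representation is already a proper one. So first I would fix a triangle-free graph $G$ of boxicity $b$ and take an arbitrary intersection representation of $G$ by axis-parallel boxes $\{B_v\}_{v\in V(G)}$ in $\RR^b$, so that $B_u\cap B_v\neq\emptyset$ holds exactly when $uv\in E(G)$.

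The key observation I would then establish is that no point of $\RR^b$ lies in three or more of these boxes. Indeed, if a point $p$ belonged to three distinct boxes $B_u$, $B_v$, $B_w$, then $p\in B_u\cap B_v$, $p\in B_u\cap B_w$ and $p\in B_v\cap B_w$, so the three vertices $u,v,w$ would be pairwise adjacent, forming a triangle in $G$ and contradicting triangle-freeness. Note that this does not even require a Helly-type argument: a single common point already witnesses the pairwise intersection of the three boxes.

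Next I would use this to verify properness. For every edge $uv$, the intersection $B_u\cap B_v$ is a nonempty box, and by the previous paragraph each of its points belongs to $B_u$ and $B_v$ only. Hence every intersection is witnessed by a point exclusive to the two boxes involved, which is precisely the defining condition for the representation to be proper. Thus $G$ admits a proper intersection representation with axis-parallel boxes in $\RR^b$, and applying Theorem~\ref{thm:box-proper-CBU} immediately yields that the 1-subdivision of $G$ belongs to $(b+1)$-CBU.

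I do not expect a genuine obstacle here: the entire content is the short observation that triangle-freeness forbids three boxes from sharing a common point, which upgrades any boxicity representation to a proper one for free, while all the work of turning a proper representation into a $(b+1)$-CBU representation has already been carried out in Theorem~\ref{thm:box-proper-CBU}. The only point requiring a little care is the exact reading of the definition of \emph{proper}, to confirm that the witness point demanded (one belonging to the two boxes and to no other) is exactly what the triangle-free hypothesis supplies.
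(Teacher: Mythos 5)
Your proof is correct and follows exactly the route the paper intends: the corollary is stated there without proof as an immediate consequence of Theorem~\ref{thm:box-proper-CBU}, the implicit justification being precisely your observation that triangle-freeness prevents any point from lying in three boxes, so every boxicity-$b$ representation of $G$ is automatically proper. Your care in checking the paper's definition of \emph{proper} (a witness point belonging to the two boxes only) is exactly the right point to verify, and no Helly-type argument is needed, as you note.
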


\section{Planar graphs}

While planar graphs have boxicity at most 3~\cite{T86,FelsnerMathews,cuboidsGD12}, 
many subclasses of planar graphs are known to have boxicity at most 2. 
This is the case for 4-connected planar graphs~\cite{T84}, and their subgraphs. 
The subgraphs of 4-connected graphs include every triangle-free planar graph 
(see Lemma 4.1 in~\cite{Gon-L}). As observed earlier, for those the representation is necessarily 
proper. 
For general planar graphs, the representation in $\RR^3$ provided in~\cite{FelsnerMathews} 
is clearly proper. So Theorem~\ref{thm:box-proper-CBU} implies the following.
\begin{corollary}\label{cor:planar-subd}
For every planar graph $G$, the 1-subdivision of $G$ 
belongs to $4$-CBU. Furthermore, if $G$ is triangle-free
then it even belongs to $3$-CBU.
\end{corollary}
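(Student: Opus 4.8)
The plan is to derive both statements by feeding suitable proper box representations into Theorem~\ref{thm:box-proper-CBU}. Since that theorem turns a proper representation in $\RR^b$ into a $(b+1)$-CBU representation of the 1-subdivision, the entire argument reduces to exhibiting, for each of the two cases, a \emph{proper} box representation of the correct dimension. No new geometric construction is needed; the work lies entirely in certifying the dimension and the properness of representations that are already available.

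For an arbitrary planar graph $G$, I would invoke the three-dimensional box representation of planar graphs from~\cite{FelsnerMathews}. The key point to record is that this representation is proper: no point of $\RR^3$ lies in more than two boxes, so two boxes touch precisely when some point belongs to exactly those two. Granting this, $G$ has a proper representation with $b=3$, and Theorem~\ref{thm:box-proper-CBU} immediately yields that the 1-subdivision of $G$ lies in $4$-CBU.

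For the triangle-free case I would argue that such graphs admit a proper representation already in $\RR^2$, that is, with $b=2$. Every triangle-free planar graph is a subgraph of a 4-connected planar graph, and those have boxicity at most $2$; hence $G$ has a (not necessarily proper) box representation in $\RR^2$. Properness is then forced by the absence of triangles: axis-parallel boxes satisfy the Helly property with Helly number $2$, so three pairwise intersecting boxes would share a common point and create a triangle in $G$. Consequently every intersection point of two boxes belongs to those two boxes only, so the representation is proper. Applying Theorem~\ref{thm:box-proper-CBU} with $b=2$ (equivalently, invoking Corollary~\ref{cor:box-CBU}) shows that the 1-subdivision of $G$ belongs to $3$-CBU.

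The only delicate step is the verification of properness. In the triangle-free case it follows automatically from the Helly property of boxes together with triangle-freeness, so there is nothing to check beyond recording this observation. In the general planar case properness is not automatic and must be read off from the specific construction of~\cite{FelsnerMathews}; this is the single place where I would rely on the structure of that particular representation rather than on boxicity alone.
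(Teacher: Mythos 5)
Your proposal is correct and matches the paper's own argument: the paper likewise feeds the proper representation of~\cite{FelsnerMathews} in $\RR^3$ into Theorem~\ref{thm:box-proper-CBU} for the general case, and for triangle-free planar graphs uses their boxicity-$2$ representations (as subgraphs of 4-connected planar graphs) together with the observation that triangle-freeness forces properness. Note only that your appeal to the Helly property is superfluous, since a point lying in three boxes would already yield three pairwise intersecting boxes and hence a triangle.
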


\subsection{2-CBU graphs}

Given a 2-CBU representation of a graph $G$, and a vertical line $\ell$, the \emph{top box} of this representation with respect to $\ell$ is the highest box intersecting $\ell$. Now, the \emph{top sequence} of a 2-CBU representation is the sequence of top boxes obtained when parsing the representation with $\ell$ from left to right (see Figure~\ref{fig:2cbu-top}).

\begin{figure}
    \centering
    \includegraphics[width=0.6\textwidth]{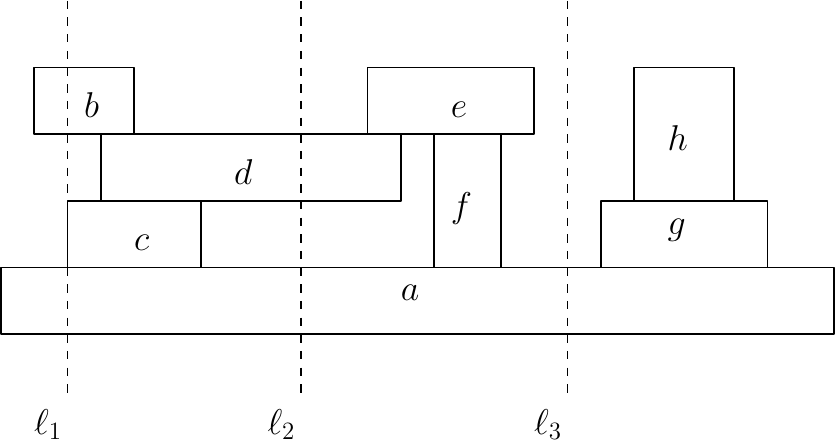}
    \caption{The top boxes with respect to $\ell_1$, $\ell_2$, and $\ell_3$ are $b$, $d$, and $a$, respectively. The top sequence of this 2-CBU representation is $a,b,d,e,a,g,h,g,a$.}
    \label{fig:2cbu-top}
\end{figure}

One can easily see that 2-CBU graphs are planar graphs, and that every forest 
is a 2-CBU graph. Actually, this class contains every triangle-free outerplanar graph.

\begin{theorem}
Every triangle-free outerplanar graph is 2-CBU.
\end{theorem}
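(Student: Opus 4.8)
The plan is to prove that every triangle-free outerplanar graph is 2-CBU by giving an explicit construction of a 2-dimensional box representation with unidirectional (horizontal) contacts. Since 2-CBU means rectangles in the plane whose contacts occur only along vertical segments (i.e.\ orthogonal to $e_1$), the whole task reduces to producing, for each vertex, an axis-parallel rectangle so that two rectangles share a vertical boundary segment exactly when the corresponding vertices are adjacent, and otherwise have disjoint interiors and no horizontal contact. First I would recall that an outerplanar graph admits a planar embedding with all vertices on the outer face, and that (being triangle-free) it is a subgraph of a maximal triangle-free outerplanar graph; by adding edges I may assume $G$ is 2-connected with an outer Hamiltonian cycle and internal chords, as long as I verify that deleting rectangles for the added vertices/edges cannot destroy the representation. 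The cleanest route, however, is an induction on the structure of the outerplanar graph.

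The key structural fact I would exploit is that a triangle-free outerplanar graph, viewed through its weak dual (the tree whose nodes are the bounded faces), has all bounded faces of length at least $4$. I would set up an induction on the number of internal faces (equivalently on the number of chords). In the base case $G$ is a forest, and forests are already known to be 2-CBU as noted in the text. For the inductive step I would locate a ``leaf'' bounded face $F$ in the weak dual, i.e.\ a face bounded by a single chord $uv$ and a path $P$ of the outer cycle, where $P$ has length at least $3$ because the face has length at least $4$. Removing the internal vertices of $P$ (all the path vertices strictly between $u$ and $v$ on that face) yields a smaller triangle-free outerplanar graph $G'$ to which induction applies, giving a 2-CBU representation. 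The plan is then to re-insert the removed path as a ``staircase'' of rectangles hugging the vertical contact segment already present between the rectangles of $u$ and $v$, stacking them in the $e_1$-direction and shifting them in the $e_2$-direction so that consecutive path-rectangles make vertical contacts forming the path, the endpoints attach to the rectangles of $u$ and $v$, and—crucially—no horizontal contact or spurious intersection is created.

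The main obstacle I expect is precisely this re-insertion: guaranteeing that the new staircase of rectangles can be placed in a thin slab near the $uv$-contact without intersecting any of the previously placed rectangles and without creating horizontal (i.e.\ $e_1$-orthogonal) contacts, which are forbidden. To control this I would maintain a stronger inductive invariant than mere realizability—for instance, that in the representation of $G'$ the contact segment between $u$ and $v$ has a free neighborhood (an open rectangular region abutting that segment and meeting no other rectangle), into which the entire staircase can be drawn. Establishing that such a free region always exists, and that the induction can restore the invariant after insertion, is the technical heart; the outerplanarity guarantees that each chord bounds the outer face on the side where the removed path lived, which is exactly what provides the needed empty space. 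Once the invariant is set up correctly, the geometric placement of the path rectangles is routine: one builds a monotone staircase whose successive vertical overlaps encode the edges of $P$ and whose horizontal extents are chosen generic enough (all distinct) to avoid any unwanted contact, completing the induction and hence the proof.
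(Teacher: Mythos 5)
Your skeleton---peel off a leaf face of the weak dual, apply induction, and re-insert the removed path as a staircase of boxes, maintaining a strengthened invariant that guarantees room for the insertion---is the same strategy as the paper's proof, but the invariant is the technical heart and yours is both deferred and, as stated, not workable. First, there is no open region abutting the $uv$ contact segment: along the interior of that segment both sides are occupied by the interiors of the boxes of $u$ and $v$ themselves, so free space can exist only beyond the segment's endpoints, and what you actually need is exposed portions of the sides of the boxes of $u$ and $v$ on the outer-face side, long enough to attach the two endpoint contacts of an arbitrarily long ear. Second, and more fundamentally, your induction invokes the hypothesis on $G'$, which cannot know which chord will serve as the base of the next insertion; so the invariant must hold uniformly along the whole outer boundary and must be restored after every insertion (the new staircase consumes free space near $u$ and $v$ and may shadow other contacts). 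The paper resolves exactly this with a global invariant: the representation is built so that, scanning in the $e_2$ direction, the topmost box with respect to $e_1$ runs through the outer facial walk $v_1,\ldots,v_k$ in order. This ``skyline'' invariant exposes every outer vertex in the correct left-to-right order, makes inserting an ear between two consecutive walk vertices routine, and is visibly restored afterwards since the new boxes join the skyline; your local free-neighborhood invariant gives none of this for free.

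There is also a genuine gap in your case analysis. ``Number of internal faces'' and ``number of chords'' are not equivalent (a chordless cycle has one bounded face and no chord), and your inductive step applies only when the leaf face is bounded by a chord, which presupposes 2-connectivity and at least one chord; it does not cover a pendant cycle attached at a cut vertex, a pendant tree hanging off a cycle, or a lone cycle (none of which are forests). Your fallback reduction---completing $G$ to a 2-connected graph and deleting afterwards---is also unsound as written: deleting rectangles removes \emph{vertices}, but 2-CBU is not known to be closed under \emph{edge} deletion (Theorem~\ref{thm:CBU-subgraph} pays an extra dimension precisely to delete edges), so a completion that adds edges between original vertices cannot simply be undone. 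The paper instead treats three cases directly: a degree-one vertex, a path of length at least $3$ whose internal vertices have degree $2$ (the length bound coming from triangle-freeness, which guarantees faces of length at least $4$), and a pendant cycle of length at least $4$; your proposal covers only the middle case. Both gaps are repairable within your framework---essentially by adopting the paper's skyline invariant and the fuller case split---but as written the proof is incomplete.
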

\begin{proof}
Let us prove that for any connected outerplanar graph $G$, and any facial walk $v_1,v_2,\ldots, v_k, v_{k+1}=v_1$ on the outerboundary of $G$ (with separating vertices appearing several times in this walk), there exists a 2-CBU representation of $G$ with top sequence $v_1,v_2,\ldots, v_k,v_{k+1}$.

We proceed by induction on the number of vertices in $G$.
The statement clearly holds if $G$ has only one vertex.
A connected triangle-free outerplanar graph with more vertices contains either a vertex $v_i$ of degree one, or a cycle $v_i,\ldots,v_j$ of length at least four (i.e. $j-i\ge 3$), whose vertices $v_{i+1},\ldots, v_{j-1}$ have degree two in $G$.

\begin{figure}
    \centering
    \includegraphics[width=0.75\textwidth]{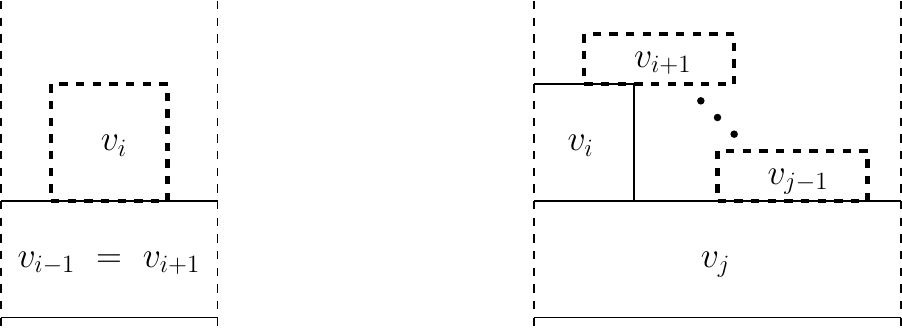}
    \caption{Left part : Adding a degree one vertex $v_i$ in the representation. This is done in the vertical stripe where the neighbor of $v_i$, $v_{i-1}=v_{i+1}$, is the top box. Right part : Adding $v_{i+1},\ldots,v_{j-1}$. This is done in the vertical stripe where $v_i$ and $v_{j}$ are the top box, successively.}
    \label{fig:outer}
\end{figure}  

In the first case we can add the box of $v_i$ in the representation of $G\setminus v_i$ obtained by induction (see Figure~\ref{fig:outer}, left). In the second case we consider the representation of $G\setminus \{v_{i+1},\ldots,v_{j-1}\}$ obtained by induction. Note that $v_i$ and $v_j$ now appear consecutively in its outerboundary, such as in the top sequence. It is thus easy to add the boxes of $v_{i+1},\ldots,v_{j-1}$ in the representation (see Figure~\ref{fig:outer}, right).
\end{proof}

\begin{figure}
\centering
\includegraphics[width=\textwidth]{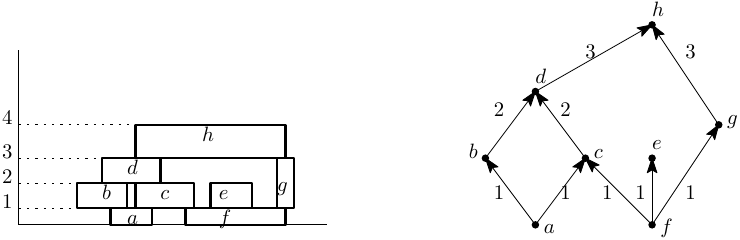}
\caption{An example of $2$-CBU graph and its associated acyclic orientation}
\label{fig:my_label}
\end{figure}

\begin{figure}
\centering
\includegraphics[width=0.45\textwidth,page=1]{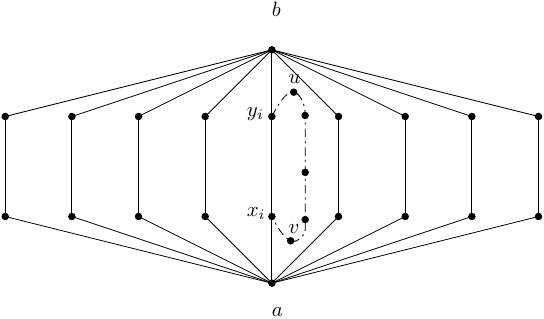}
\includegraphics[width=0.54\textwidth,page=2]{SeriesParallelNotCBU}
\caption{The series-parallel graph $G$ of Theorem~\ref{thm:series-parallel}.}
\label{fig:series-parallel}
\end{figure}

\begin{theorem}~\label{thm:series-parallel}
There are series-parallel graphs of girth 6 that are not 2-CBU. 
\end{theorem}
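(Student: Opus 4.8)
The plan is to work by contradiction with the specific series-parallel graph $G$ of Figure~\ref{fig:series-parallel}. First I would dispatch the two elementary assertions of the statement directly from the figure: that $G$ is series-parallel (it is obtained from a single edge by successive series and parallel compositions) and that it has girth $6$ (every cycle is a parallel composition of two internally disjoint paths, the shortest such pair summing to $6$). In particular $G$ is planar and, being of girth $6$, triangle-free, so none of the earlier obstructions apply and the whole difficulty is geometric: I must show that no $2$-CBU representation of $G$ exists even though $G$ is planar.

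Next I would extract, from a hypothetical $2$-CBU representation, two pieces of data. Orienting each contact from the left box to the right box and labelling it by the common first coordinate of the shared vertical side gives, exactly as in the proof that $d$-CBU graphs are triangle free, a \emph{homogeneous arc labeling}: writing $\ell(v)$ and $r(v)<r(v)$ wait, $\ell(v)<r(v)$ for the first coordinates of the left and right sides of the box of $v$, every outgoing arc of $v$ carries $r(v)$ and every incoming arc carries $\ell(v)$. Moreover, interior-disjoint boxes in $\RR^2$ induce a \emph{plane} embedding of $G$ in which, around each box, the incoming arcs (touching its left side) and the outgoing arcs (touching its right side) each form a contiguous block of the rotation. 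Consequently each bounded face is traversed, from the box of locally smallest first coordinate to that of locally largest first coordinate, by exactly two directed paths — precisely the one-source/one-sink behaviour already isolated for short faces in Claim~\ref{cl:45-cycles}. The key refinement I would record here is that \emph{no bounded face may have a side of length one}: if the face-source $s$ and face-sink $t$ were joined on the face by a single edge, that edge would force $r(s)=\ell(t)$, while the other side, a directed path, forces $r(s)<\ell(t)$.

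The core of the argument is then to force such a bad face. I would use the face condition together with the gadgets of $G$ to pin down, up to the symmetries of $G$ and the choice of outer face, the orientation of the edges incident to the distinguished hexagon of $G$: the attached structure forces, on this hexagon, a local source $s$ and a local sink $t$ that are \emph{adjacent} along the hexagon, i.e.\ one side of this bounded face is the single edge $st$. The girth-$6$ hypothesis then makes the other side a directed path of length at least $5$, so propagating the labels along it exactly as in the bad orientation of Figure~\ref{fig:GoodBadOrientation} yields $r(s)<\ell(t)$, contradicting the equality $r(s)=\ell(t)$ read off the edge $st$. This is the $x<x$ phenomenon of that figure, lifted from $C_5$ to girth $6$; note that it is available only for $2$-CBU, where an honest plane subdivision into faces exists, and not for higher-dimensional CBU.

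The step I expect to be the main obstacle is making the forcing of the adjacent source/sink pair robust against \emph{all} admissible orientations and embeddings at once. Since $G$ is series-parallel, hence far from $3$-connected, it has many plane embeddings and a priori many orientations satisfying the homogeneous-labeling conditions (Claims~\ref{cl:45-cycles} and~\ref{cl:large-cycles}), and one must exclude every combination rather than a single one. The crux is therefore a rigidity lemma showing that, whatever embedding is chosen, the gadgets of $G$ leave the orientation of the central hexagon essentially no freedom, so that some bounded face is always split as $1+(\ge 5)$. Establishing this rigidity is exactly where $2$-CBU departs from the purely combinatorial class of cover graphs (which by the Corollary contains CBU), and it is the part of the proof I would expect to require the most careful case analysis.
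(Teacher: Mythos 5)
Your outline has a genuine gap at exactly the step you flag as the main obstacle: the ``rigidity lemma'' forcing some bounded face to be split as $1+(\ge 5)$ is never proved, only announced, and it is the entire content of the theorem. Note moreover that the contradiction you derive from such a face (a single edge $st$ giving $r(s)=\ell(t)$ against a directed path giving $r(s)<\ell(t)$) is just the quasi-cycle exclusion of Claim~\ref{cl:large-cycles}, which holds for \emph{every} CBU representation in every dimension. So no purely orientation-theoretic argument can succeed here --- the theorem only asserts non-membership in $2$-CBU, and indeed the paper uses this graph to show $f_6(2)\ge 3$, i.e.\ the obstruction must be genuinely two-dimensional. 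Your only $2$-dimensional input is the assertion that in a plane box representation every bounded face has exactly one source and one sink; this is unproven, and as you state it (``precisely the one-source/one-sink behaviour already isolated \ldots\ in Claim~\ref{cl:45-cycles}'') it misreads that claim, which explicitly \emph{allows} $4$-cycles with two sources and two sinks. (It happens that such $4$-cycles cannot occur in $\RR^2$, since homogeneity forces all four contacts onto one vertical line and the resulting interval-intersection pattern is infeasible --- but that is an argument you would have to make, and its extension to longer faces and to the global forcing of a quasi-cycle face across all embeddings and orientations is precisely what is missing.)

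The paper's proof avoids controlling orientations globally and is much more local. Its graph $G$ has two vertices $a,b$ joined by \emph{nine} internally disjoint length-$3$ paths $ax_iy_ib$, with a length-$5$ path attached across each edge $x_iy_i$. The key counting step is geometric: a rectangle has only four corners, so at most four neighbors of $a$ can meet $a$ in a contact containing a corner of $a$, and likewise for $b$; by pigeonhole among the nine paths there is an $i$ with one side of $x_i$ contained in a side of $a$ and one side of $y_i$ contained in a side of $b$. A short case analysis of the contact between $x_i$ and $y_i$ then finishes: if a side of one is contained in a side of the other, there is no room left to attach the third neighbor needed for the length-$5$ path; if the contact contains a corner of each, the third neighbors $u,v$ exist but the $uv$-path must route around $a$ or $b$ and would cross the paths $ax_jy_jb$, $j\neq i$. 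You should either carry out your forcing lemma in full (including the two-source/two-sink exclusion for all face lengths, for all embeddings of this far-from-$3$-connected graph), or redirect to a corner-counting argument of this kind, which sidesteps the orientation rigidity problem entirely.
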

\begin{proof}
Consider a graph $G$ with two vertices, $a$ and $b$, linked by 9 disjoint $ab$-paths 
of length three $ax_iy_ib$, for $i\in\{1,\ldots,9\}$. Then for each edge $x_iy_i$ add
a length 5 path from $x_i$ to $y_i$. The obtained graph has girth 6 and is 
series-parallel (see Figure~\ref{fig:series-parallel}).

Note that in a 2-CBU representation of $G$, the box of $a$ (resp. $b$) has at most four neighbors such that their intersection contains a corner of $a$  (resp. $b$). Thus, there 
exists an $i\in\{1,\ldots,9\}$ such that one side of $x_i$ is contained 
in one side of $a$, and one side of $y_i$ is contained in one side of $b$. 
Now, whatever the way $x_i$ and $y_i$ intersect (a side of $x_i$ may be contained 
in a side of $y_i$, or the other way around, or also their intersection may contain 
a corner of each box), it is not possible to have the length 5 $x_iy_i$-path 
(see Figure~\ref{fig:series-parallel}, right). If 
a side of $x_i$ is contained in a side of $y_i$ there is no place left around $x_i$ 
to draw a third neighbor. If the intersection of $x_i$ and $y_i$ contains a corner 
of $x_i$ and a corner of $y_i$, there is space to draw a third neighbor for these 
vertices, say $u$ and $v$ respectively, but in that case the $uv$-path should go around
$a$ or $b$, but it would intersect the paths $ax_jy_jb$ with $j\neq i$. Thus $G$ does not
admit a 2-CBU representation.
\end{proof}

\begin{problem}
Is there a girth $g$ such that every series parallel graph of girth at least $g$ belongs to 2-CBU ?
\end{problem}

For planar graphs, the following theorem shows that such a bound on the girth does not exist.
Let us denote by $W^2_g$ the double wheel graph, obtained from a cycle $C_g$ by adding two 
non-adjacent vertices, each of them being adjacent to every vertex of $C_g$.
An edge incident to one of these two vertices (i.e., an edge not contained in $C_g$)
is called a \emph{ray}. Now, let $W'_g$ be the graph obtained from $W^2_g$ by subdividing $\lfloor g/2 \rfloor$ times every ray (see Figure~\ref{fig:W'6}). This graph is planar and has girth $g$.
\begin{theorem}~\label{thm:W'_g}
The graph $W'_g$ does not belong to 2-CBU. 
\end{theorem}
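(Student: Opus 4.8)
The plan is to assume a 2-CBU representation of $W'_g$ exists and derive a contradiction from the geometry at the two poles together with planarity. Fix $e_1$ as the horizontal axis, so that every contact is a vertical segment and each box $B_v$ touches its neighbours only along its left and right sides. Consequently the neighbours of any box appear, in rotation around that box, as (at most) two vertical stacks — the right-neighbours read bottom-to-top followed by the left-neighbours read top-to-bottom — and at most four of them, namely the top and bottom box of each stack, can meet a corner of the box; all others meet a side \emph{strictly between} two consecutive neighbours.

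First I would pin down the embedding. Since $W^2_g$ is $3$-connected and $W'_g$ is obtained from it by subdividing edges, $W'_g$ is $2$-connected and has an essentially unique planar embedding; as recalled in the excerpt, every 2-CBU graph is planar and its box representation induces a planar embedding, so this embedding must be, up to reflection, the one in which the non-subdivided cycle $C_g$ is separating, with the pole $p$ and its $g$ subdivided rays $P_1^p,\dots,P_g^p$ (where $P_i^p$ joins $p$ to $c_i$) drawn inside $C_g$, and the pole $q$ with its rays $P_i^q$ drawn outside. In particular the rays from $p$ reach $C_g$ in the cyclic order $c_1,\dots,c_g$, which is exactly the rotation at $p$, while the rotation at $q$ is the reverse cyclic order.

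Next, the local structure at a pole. Writing $a^i$ for the neighbour of $p$ on $P_i^p$, the rotation $a^1,\dots,a^g$ around $B_p$ must coincide with the two-stack order above; hence the indices with $a^i$ on the right of $B_p$ form a cyclic interval and those on the left the complementary interval, so all but the four stack-ends are \emph{sandwiched}. For a sandwiched $a^i$, the ray $P_i^p$ starts inside the pocket bounded by $B_p$ and by the two neighbouring rays; this pocket is one of the subdivided-triangle faces of the embedding meeting $p$, and in it both edges at $p$ point towards increasing $x$ when the pocket lies on the right of $B_p$ and towards decreasing $x$ when it lies on the left, matching the source/sink behaviour enforced on short cycles in Claim~\ref{cl:45-cycles}. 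The identical analysis applies to $q$. Thus at least $g-4$ of the $c_i$ are sandwiched at $p$ and at least $g-4$ are sandwiched at $q$.

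The contradiction — and the hard part — is to combine the two pocket structures. For $g$ large enough that $(g-4)+(g-4)>g$, some index $i$ is sandwiched at both poles at once. Then $P_i^p$ is confined to a pocket of $B_p$ lying strictly between the rays to $c_{i-1}$ and $c_{i+1}$, while $P_i^q$ is confined to a pocket of $B_q$ lying, because the rotation at $q$ is reversed, on the opposite side; yet both rays must terminate at the common box $B_{c_i}$, which also carries the two cycle-contacts to $c_{i-1}$ and $c_{i+1}$. I expect to show that no planar placement of $B_{c_i}$ can open simultaneously onto both pockets and onto its two cycle-neighbours without forcing two of the four incident paths to cross, contradicting planarity — this is the double-wheel analogue of the ``the $uv$-path must go around $a$ or $b$'' obstruction used in Theorem~\ref{thm:series-parallel}. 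The delicate point, and the main obstacle, is that a ray is only \emph{initially} trapped in its pocket and could a priori wind out of it; ruling this out is exactly where the choice of $\lfloor g/2\rfloor$ subdivisions per ray enters, since it forces girth $g$ and hence forbids the short-cut cycle that any such detour would create, so each ray must stay within its triangle face and reach $c_i$ directly. Finally I would dispose of the small values of $g$, where the common-index argument fails, by a direct version of the same corner-counting analysis.
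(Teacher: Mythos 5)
There is a genuine gap, and it sits exactly where you flag ``the hard part.'' Your whole plan funnels into the claim that an index $i$ sandwiched at both poles forces two of the four paths at $B_{c_i}$ to cross, but you only announce this step (``I expect to show\dots''), and the mechanism you propose for the delicate point is unsound. The ``pocket'' between two consecutive boxes of a stack at $B_p$ is not a bounded region --- it is open on the side away from $B_p$ --- so nothing confines a ray to it a priori; and a ray that winds out of it creates no cycle at all, so neither the girth of $W'_g$ nor the number $\lfloor g/2\rfloor$ of subdivisions can ``forbid the short-cut cycle that any such detour would create'': girth constrains cycles, planarity constrains crossings, and your detour is a crossing-free path question, not a cycle question. (Indeed, in the paper the subdivisions play no role in the impossibility argument; they are there only so that $W'_g$ has girth $g$.) More structurally, since $W'_g$ \emph{is} planar, everything you establish about embeddings, reversed rotations at $q$, and two-stack rotations at the poles is realizable by an honest planar drawing, so no contradiction can be extracted from this combinatorial data alone; the proof must use the unidirectional-contact geometry somewhere other than locally at the poles, and your sketch never does. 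Finally, your counting needs $2(g-4)>g$, i.e.\ $g\ge 9$, and the promised ``direct corner-counting'' for small $g$ is not supplied, while the statement concerns all $g$.

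The paper's proof places the geometric obstruction on the cycle $C$ itself rather than at the poles, and it is short. Among the boxes of $C$, take the box $R$ that is extremal in the direction along which the contact segments extend (say, the one with the leftmost right side). Sweeping from any interior point of the bounded inner region of $C$ in that direction, one must be stopped by the corresponding side of some box of $C$, so the inner region lies entirely beyond $R$'s extremal side; consequently the two sides of $R$ on which contacts may occur are incident only to the outer region, and $R$ cannot touch \emph{any} box drawn in the inner region --- in particular no ray, sandwiched or not. Since no planar embedding of $W'_g$ has $C$ bounding a face, one pole together with all $g$ of its rays lies inside $C$, so every vertex of $C$ needs a neighbour in the inner region, a contradiction for every $g$ with no case split. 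If you want to rescue your framework, the missing confinement lemma should simply be replaced by this extremal observation: it is the analogue, on the cycle, of the corner-counting you borrowed from Theorem~\ref{thm:series-parallel}, but it needs no control over where rays wander.
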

\begin{proof}
For any 2-CBU representation of the cycle $C$ of length $g$ there is a rectangle $R$, for example the one with the leftmost right side, such that none of the top or bottom side of $R$ is incident to the inner region. It is thus impossible to connect $R$ with a ray in the inner region. On the other hand, there is no planar embedding of $W'_g$ where $C$ bounds an inner face.
\end{proof}
\begin{figure}
    \centering
    \includegraphics{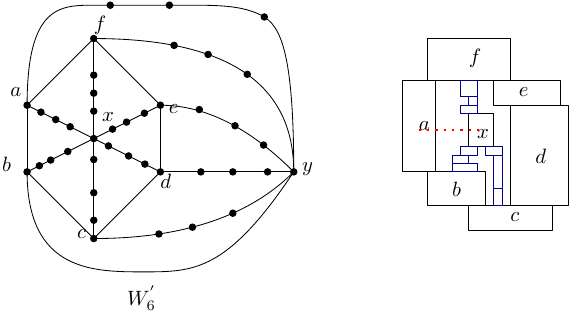}
    \caption{The graph $W'_6$.}
    \label{fig:W'6}
\end{figure}

\subsection{3-CBU graphs}

Bipartite planar graphs are known to be contact graphs of axis-aligned segments
in $\mathbb{R}^2$~\cite{Ben91,Fraysseix94}, and their boxicity is thus at most two. 
By  Theorem~\ref{thm:box-CBU-bip}, we thus have the following.

\begin{corollary}\label{cor:bip-planar}
Every bipartite planar graph belongs to 3-CBU.
\end{corollary}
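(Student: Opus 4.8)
The plan is to derive the statement as a direct consequence of Theorem~\ref{thm:box-CBU-bip}. First I would recall the cited result \cite{Ben91,Fraysseix94} that every bipartite planar graph admits a contact representation by axis-parallel segments in $\RR^2$. Viewing each segment as a (degenerate) axis-parallel box in $\RR^2$, the contacts become genuine intersections of closed boxes while non-adjacent vertices stay interior disjoint; this is exactly an intersection representation by boxes in two dimensions, so every bipartite planar graph has boxicity at most $2$.

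The second step is immediate: Theorem~\ref{thm:box-CBU-bip} asserts that every bipartite graph of boxicity $b$ lies in $(b+1)$-CBU, and applying it with $b=2$ places every bipartite planar graph in $3$-CBU.

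There is no real obstacle here, since the result is a corollary of two facts already at hand. The only point worth a sentence of care is the transition from the segment contact model to a box intersection model certifying boxicity $\le 2$; the slight expansion of boxes performed inside the proof of Theorem~\ref{thm:box-CBU-bip} then takes over to supply the needed $(d-1)$-dimensional contacts orthogonal to $e_1$, so nothing further is required.
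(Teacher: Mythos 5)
Your proposal is correct and follows exactly the paper's own route: invoke the segment-contact representation of bipartite planar graphs from \cite{Ben91,Fraysseix94} to get boxicity at most $2$ (segments being degenerate axis-parallel boxes), then apply Theorem~\ref{thm:box-CBU-bip} with $b=2$. Your extra remark about the expansion step inside the proof of Theorem~\ref{thm:box-CBU-bip} handling the degenerate boxes is a fair point of care, but it changes nothing in substance.
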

The following lemma tells us that this property does not generalize, in a strong sense, 
to triangle-free planar graphs.

\begin{lemma}~\label{lem-planar-not-CBU}
There exists a girth 4 planar graph that is not CBU.
\end{lemma}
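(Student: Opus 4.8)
The plan is to prove the contrapositive via the combinatorial certificate. Recall that the proof of the triangle-free claim extracts from any CBU representation (orient along $e_1$, label each arc by its contact coordinate) an \emph{homogeneous arc labeling}: an acyclic orientation with $\mathrm{out}(u)=\mathrm{in}(v)$ for every arc $u\to v$ and $\mathrm{in}(v)<\mathrm{out}(v)$ at every internal vertex. Hence it suffices to exhibit a planar graph $G$ of girth $4$ admitting \emph{no} homogeneous arc labeling; then $G\notin\mathrm{CBU}$. As a first sanity check, Corollary~\ref{cor:bip-planar} forces $G$ to be non-bipartite, so $G$ must contain $4$-cycles together with an odd cycle of length at least $5$. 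Accordingly I would take $G$ to be the pentagonal prism $C_5\times K_2$ (more generally, an odd ring of quadrilaterals): it is planar, has all faces of length $4$ except the two odd pentagons, is triangle-free of girth $4$, and is non-bipartite, so it survives the test of Corollary~\ref{cor:bip-planar}.

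Next I would translate a hypothetical homogeneous arc labeling into the one-dimensional ($e_1$) \emph{interval picture}: each vertex $v$ receives an interval $[\mathrm{in}(v),\mathrm{out}(v)]$ with $\mathrm{in}(v)<\mathrm{out}(v)$, adjacent vertices touch at a shared endpoint ($\mathrm{out}(\text{tail})=\mathrm{in}(\text{head})$), and homogeneity is automatic since an interval has only two endpoints. The key structural input is Claim~\ref{cl:45-cycles}: every square face is oriented either with two sources and two sinks, in which case \emph{all four} of its arcs get the same label, or as a source/sink joined by two length-$2$ paths, in which case the two arcs at the source share a label $\alpha$, the two arcs at the sink share a label $\beta$, and $\alpha<\beta$. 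Thus each square either \emph{equalizes} or \emph{strictly separates} the endpoint-values it carries from the outer pentagon to the inner pentagon through its two spokes.

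The heart of the argument is the propagation around the odd ring. Running along the two pentagons and the spokes, homogeneity forces every edge to point ``toward the larger $\mathrm{in}$-value'' and ties, through each spoke, an endpoint of $O_j$ to an endpoint of $I_j$; following these forced equalities and the strict inequalities $\mathrm{in}<\mathrm{out}$ once around produces a monotone chain of endpoint-values that, because the ring has \emph{odd} length, cannot close up consistently. Concretely one reaches a vertex whose two outgoing arcs are forced to carry two different labels (equivalently, some vertex is squeezed to $\mathrm{in}(v)=\mathrm{out}(v)$), which is exactly the ``$x<x$'' collapse illustrated for the bad $C_5$ in Figure~\ref{fig:GoodBadOrientation}. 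Since this collapse must be derived for \emph{every} acyclic orientation, $G$ admits no homogeneous arc labeling and hence $G\notin\mathrm{CBU}$. Finally, to match the Remark preceding Lemma~\ref{lem-planar-not-CBU}, I would exhibit one explicit acyclic orientation of $G$ with no quasi-cycle (orient both pentagons in their $(2,3)$ source/sink pattern and the spokes coherently), so that Claim~\ref{cl:large-cycles} holds and $G$ is nonetheless a cover graph, witnessing the strict inclusion $\mathrm{CBU}\subsetneq\text{cover graphs}$.

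I expect the main obstacle to be \emph{rigidity}: a bare odd cycle, or a single square, always admits a homogeneous labeling, so the difficulty is to show that the local two-way choice of Claim~\ref{cl:45-cycles} at each square can never be reconciled globally. Carrying this out amounts to a finite case analysis over the orientation type of each square together with the forced $\mathrm{in}(v)<\mathrm{out}(v)$ inequalities threaded through the two odd pentagons, checking that no combination of equalizing/separating choices escapes the collapse. Should the bare prism turn out to be realizable for some exotic orientation, the remedy—keeping girth $4$, planarity, and non-bipartiteness—is to rigidify it by attaching pendant quadrilaterals (length-$2$ paths pinning selected $\mathrm{in}/\mathrm{out}$ values), which only tightens the propagation and preserves the cover-graph orientation above.
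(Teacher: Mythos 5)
Your reduction to homogeneous arc labelings is fine (it is exactly the paper's equivalence $a)\Leftrightarrow b)$ of Theorem~\ref{thm:CBU_as_SHIFT}), but the heart of your argument fails: the pentagonal prism $C_5\,\square\,K_2$ \emph{is} in CBU, so the claimed parity obstruction around an odd ring of quadrilaterals does not exist. The prism admits a homomorphism onto $C_5$ (map $o_j\mapsto j$ and $i_j\mapsto j+1 \pmod 5$; every spoke and every inner edge lands on an edge of $C_5$, each square face folding onto a path $j,j+1,j+2,j+1$), and since $C_5\in$ CBU, Theorem~\ref{thm:CBU-homo} puts the prism in CBU; equivalently $\chi_c(C_5\,\square\,K_2)=5/2$, so Theorem~\ref{thm:CBU-chi-c} applies. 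Concretely, pulling back the good $C_5$ labeling gives an explicit homogeneous arc labeling via intervals $[\mathrm{in}(v),\mathrm{out}(v)]$: take $o_1=[0,1]$, $o_2=[1,2]$, $o_3=[2,3]$, $o_4=[1.5,2]$, $o_5=[1,1.5]$ and $i_j$ equal to the interval of $o_{j+1}$; one checks that every edge of the prism is a head-to-tail contact of intervals, and labels strictly increase along directed paths. Your intuition that the odd ring ``cannot close up'' is refuted by exactly the mechanism you describe as the separating regime: each square may shift values strictly, and the two pentagons close up for the same reason the good orientation of $C_5$ in Figure~\ref{fig:GoodBadOrientation}$(i)$ does. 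Your fallback of attaching pendant quadrilaterals does not help either: 4-cycles are bipartite, so any such attachment extends the homomorphism to $C_5$ and the resulting graph stays in CBU by Theorem~\ref{thm:CBU-homo}. Moreover, even setting the counterexample aside, your text never carries out the promised case analysis (``I expect the main obstacle\ldots''), so no graph is actually proved to be outside CBU.

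The paper's proof is structured quite differently, and the difference is instructive. Rather than hoping for a global parity obstruction, it builds \emph{rigid gadgets} in which Claim~\ref{cl:45-cycles} forces partial orientations: a graph $G_1$ in which the 4-cycle $abcd$ cannot have $a$ as source and $d$ as sink (because two 5-cycles through auxiliary vertices $e,f$ would then acquire two sources and two sinks); a graph $G_2$, obtained by adding a length-3 path between $a$ and $d$, in which every valid orientation must make $a,d$ sources and $b,c$ sinks (or the converse); and finally $G_3$, obtained by gluing two copies of $G_2$ so that the forced pattern of one copy recreates, on a 4-cycle of the other copy, precisely the forbidden source/sink placement of $G_1$. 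Every candidate orientation is thus killed by a short chain of forcings, with no global counting needed. If you want to salvage your plan, you must replace the prism by a graph whose squares and 5-cycles \emph{interact} so as to eliminate the shifting escape route --- which is exactly what the gadget construction achieves; any candidate must in particular have no homomorphism to $C_5$ and circular chromatic number exceeding $5/2$, a sanity check your graph (and any graph folding onto $C_5$) fails.
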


\begin{figure}
\centering
\includegraphics[page=1]{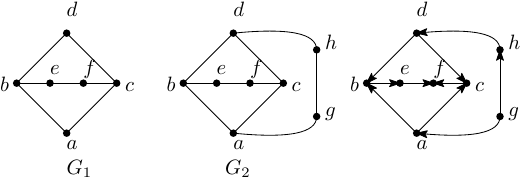}
\includegraphics[page=2]{PlanarNonCBU-2}
\includegraphics[page=3]{PlanarNonCBU-2.pdf}
\caption{The graph $G_3$ is planar and non CBU. It is however a cover graph. }
\label{fig:PlanarNonCBU}
\end{figure}

\begin{proof}
Let us consider the graph $G_1$ represented in Figure \ref{fig:PlanarNonCBU}. 
One can show that this graph does not admit any valid CBU orientation where 
in the $C_4$ induced by $a,b,c$ and $d$, $a$ is a source and $d$ is a sink (nor
the converse). Let us assume that there exists a CBU orientation 
such that $a$ is a source and $d$ is a sink. By fixing the orientation 
of edges $ab$ and $ac$ from $a$ to $b$ and from $a$ to $c$ respectively, and by Claim~\ref{cl:45-cycles}, the edges $be$ and $cf$ have to be oriented from $b$ to $e$ and from 
$c$ to $f$. The edge $ef$ is oriented in any direction, \emph{w.l.o.g.} 
let us say from $e$ to $f$. But in that case, in the $C_5$ induced by $b,e,f,c$ and 
$d$ contains two sources and two sinks, which is not a valid CBU orientation (By Claim~\ref{cl:45-cycles}).

By adding a path of length $3$ between $a$ and $d$, we obtain the graph $G_2$.
From the previous observation, we can conclude the same property 
for vertices $b$ and $c$. Hence, in the valid orientation of $G_2$, $a$ and $d$ 
are sources and $b$ and $c$ are sinks (or the converse). 
Let us now consider the graph $G_3$ obtained by gluing two copies of $G_2$ in a special 
manner (see Figure \ref{fig:PlanarNonCBU}). Let us remark that the graph 
obtained is planar.
Let us consider, \emph{w.l.o.g.}, that $a$ and $d$ are sources and $b$ and $c$ are
sinks in the $C_4$ induced by $a,b,c,$ and $d$.
Recall that a valid orientation of a $C_5$ contains exactly one source and one sink.
Thus, in the $C_5$ induced by the vertices $a,c,d,k,$ and $j$, the vertex $c$ has to be
a sink from the already fixed orientation. Hence, it forces the edge $aj$ to be oriented 
from $j$ to $a$ and the edge $dk$ from $k$ to $d$ (the edge $k,j$ can be oriented 
in any direction), since the length of a path from a source to a sink in an orientation 
is exactly $2$ for one path and $3$ for the other. 
 
Then in the partial orientation obtained, we can conclude that in the $C_4$ induced 
by $a,c,i,$ and $j$, the vertex $c$ will be a sink and vertex $j$ will be a source.
However, as mentioned in the beginning of this proof, this orientation will not 
lead to valid orientation, since $j$ and $c$ play the same role 
as $a$ and $d$ in $G_1$. Hence, $G_3$ does not admit a valid CBU orientation.
\end{proof}

\begin{remark}
The graph $G_3$ used in the proof of Lemma \ref{lem-planar-not-CBU} 
is actually a cover graph. In Figure \ref{fig:PlanarNonCBU}, the bottom picture depicts 
its Hasse diagram.
\end{remark}

Since every planar graph with girth at least 10 has circular chromatic number at most $5/2$~\cite{girth10}, the forthcoming Theorem~\ref{thm:CBU-chi-c} implies that such a graph necessarily belongs to CBU.
\begin{problem}
What is the lowest $g\in [5,\ldots,10]$ such that every planar graph $G$ with girth at least $g$ belongs to CBU.
\end{problem}

\section{Structural properties of {$d$}-CBU and CBU}

\begin{theorem}\label{thm:CBU-d-d+1}
For every $d\ge 1$, the class of $d$-CBU graphs is strictly contained in the class of $(d+1)$-CBU graphs.
\end{theorem}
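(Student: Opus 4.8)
=== PROOF PROPOSAL ===

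The plan is to prove the statement in two parts: containment ($d$-CBU $\subseteq$ $(d+1)$-CBU) and strictness (some $(d+1)$-CBU graph is not $d$-CBU).

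For the containment, I would take any $d$-CBU representation living in the space spanned by $e_1,\ldots,e_d$ and simply embed it into $\RR^{d+1}$ by assigning to every box the same unit interval, say $[0,1]$, in the new coordinate direction $e_{d+1}$. Since every box receives the identical extent along $e_{d+1}$, no new intersections are created and every existing contact, which was a $(d-1)$-dimensional box orthogonal to $e_1$, becomes a $d$-dimensional box orthogonal to $e_1$ (its extent along $e_{d+1}$ is now the full $[0,1]$). The interior-disjointness is preserved because it already held in the lower-dimensional slice. This gives a valid $(d+1)$-CBU representation, so the inclusion holds; this step is routine.

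The substance of the theorem is strictness, and this is where the main obstacle lies: I must exhibit, for each $d$, a graph that is $(d+1)$-CBU but not $d$-CBU. The cleanest candidate to separate consecutive levels is a complete bipartite graph $K_{m,n}$ for suitably large $m,n$. Every $K_{m,n}$ is bipartite, so by Theorem~\ref{thm:box-CBU-bip} it lies in $(box(K_{m,n})+1)$-CBU, and the boxicity of complete bipartite graphs is known to grow (indeed $box(K_{n,n})$ tends to infinity), giving a handle on the upper direction. The difficulty is the lower bound: showing that a specific such graph is \emph{not} $d$-CBU. For this I would argue that a $d$-CBU representation would force a boxicity-$d$ representation of a related graph, and then invoke an explicit lower bound on boxicity to derive a contradiction. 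Concretely, one expects that being $d$-CBU constrains the boxicity (the contact structure in the $e_1$ direction is rigid, so the remaining $d-1$ free directions must carry all the intersection information), yielding an inequality of the form $box(G) \le d-1$ or similar for the non-contact part; choosing $G$ with boxicity exceeding that bound forecloses a $d$-CBU representation.

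The hard part will be making the lower-bound argument precise: translating the geometric rigidity of unidirectional contacts into a clean numerical obstruction that is sensitive enough to separate $d$ from $d+1$ rather than merely $d$ from some larger value. The key lemma to isolate is a statement that the ``transversal'' structure of a $d$-CBU representation (the arrangement seen in the hyperplane orthogonal to $e_1$) has boxicity at most $d-1$, so that a graph forcing high transversal boxicity cannot be squeezed into dimension $d$. Once such a lemma is in hand, picking a family like $K_{n,n}$ (or a carefully chosen bipartite graph realizing the extremal boxicity) with the boxicity tuned precisely to $d$ completes the separation, and combined with the trivial containment above yields the strict inclusion for every $d\ge 1$.
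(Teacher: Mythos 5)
Your containment step is fine (and essentially immediate; the paper does not even spell it out), but the strictness half of your proposal is a plan rather than a proof, and two of its load-bearing claims are wrong. First, the candidate family: $box(K_{n,n})$ does \emph{not} tend to infinity --- every complete bipartite graph has boxicity at most $2$ (pairwise-disjoint horizontal strips crossing pairwise-disjoint vertical strips), which is precisely why the paper instead uses $K_{2b,2b}$ \emph{minus a perfect matching}, which has boxicity exactly $b$ by~\cite{CDS09}. Second, and more seriously, the key lemma you hope to isolate --- that the ``transversal'' structure of any $d$-CBU representation has boxicity at most $d-1$, obtained by projecting along $e_1$ --- is false as stated. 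In a general $d$-CBU representation the contacts occur at many different $e_1$-coordinates, so projecting along $e_1$ creates spurious intersections and certifies nothing (already a path in 1-CBU projects to a single point). Note also that a bipartite graph $B$ with $box(B)=d$ could a priori admit a $d$-CBU representation, since a $d$-CBU representation is in particular a boxicity-$d$ representation; so the plain extremal graph yields no contradiction and cannot by itself separate $d$-CBU from $(d+1)$-CBU.

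The paper closes exactly this gap with a gadget (Lemma~\ref{lem:bip-exact-CBU}): from a connected bipartite $B$ with parts $X,Y$ it builds $B'$ by adding a path $xzy$ with $x$ joined to all of $X$ and $y$ to all of $Y$, and proves that $box(B)\le d$ if and only if $B'\in (d+1)$-CBU. The whole point of the gadget is the rigidity of homogeneous arc labelings on $4$- and $5$-cycles (Claim~\ref{cl:45-cycles}): the added vertices force all arcs between $X$ and $Y$ to be oriented the same way with the same label, hence all $X$--$Y$ contacts lie in a \emph{single} hyperplane orthogonal to $e_1$, and only then does projecting along $e_1$ yield a valid boxicity-$d$ representation of $B$. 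Taking $B = K_{2d,2d}$ minus a perfect matching then gives $B'\in (d+1)$-CBU $\setminus\ d$-CBU for each $d\ge 2$, the case $d=1$ being handled separately (1-CBU graphs are exactly forests of paths, so e.g.\ any cycle in 2-CBU separates). So your overall strategy --- translating boxicity lower bounds into CBU separations via a key lemma of exactly this shape --- is indeed the paper's, but without the gadget that pins all contacts to one hyperplane, the lower-bound step you yourself flagged as unresolved does not go through.
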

The case $d=1$ follows from the earlier observation that 1-CBU graphs correspond to forests of paths, 
and from the many examples of 2-CBU graphs provided above. 
For $d\ge 2$, the following structural lemma allows us to
translate the strict containment of boxicity $b$ bipartite graphs, 
to the strict containment of $d$-CBU. Indeed, it is known that the graph obtained from the complete
bipartite graph $K_{2b,2b}$ by removing a perfect matching has boxicity exactly $b$~\cite{CDS09}.

\begin{lemma}\label{lem:bip-exact-CBU}
Given a connected bipartite graph $B$, with parts $X$ and $Y$, let $B'$ be the graph
obtained from $B$, by adding a path $xzy$ and by connecting $x$ and $y$ to every 
vertex in $X$ and $Y$, respectively. Then, $B$ has boxicity at most $d$ if and only if
$B'$ belongs to $(d+1)$-CBU.
\end{lemma}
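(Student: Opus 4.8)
The plan is to prove both directions of the equivalence separately, building on the $(b+1)$-CBU construction for bipartite graphs from Theorem~\ref{thm:box-CBU-bip}.

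\medskip

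\noindent\textbf{The forward direction ($B$ has boxicity $\le d$ $\Rightarrow$ $B'\in(d{+}1)$-CBU).}
First I would take a boxicity-$d$ representation of $B$ in the space spanned by $e_2,\ldots,e_{d+1}$, slightly expanded as in Theorem~\ref{thm:box-CBU-bip} so that intersecting boxes meet on a $d$-dimensional box. As in that theorem, I will use the first coordinate $e_1$ to separate the two parts: assign interval $[0,1]$ to every vertex of $X$ and $[1,2]$ to every vertex of $Y$, so that an edge of $B$ between $X$ and $Y$ becomes a contact orthogonal to $e_1$. The new vertices $x,z,y$ must be placed so that $x$ sees all of $X$, $y$ sees all of $Y$, $x$ and $y$ each touch $z$, and no spurious contacts appear. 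The natural idea is to make $z$ a large box occupying an $e_1$-interval $[-1,0]$ (just below $X$) and spanning the whole bounding box in $e_2,\ldots,e_{d+1}$ extended slightly, then place $x$ as a thin box on $e_1$-interval $[-2,-1]$ whose $e_2,\ldots,e_{d+1}$ extent covers all of $X$'s boxes, so it contacts $z$ from below and then — but here one must be careful, since $x$ needs to touch every vertex of $X$ from the $[0,1]$ side. The cleaner route is to give $X$ the $e_1$-interval $[1,2]$, $Y$ the interval $[3,4]$, let $z$ occupy $[2,3]$ spanning everything (so $z$ touches nothing in $B$ but will glue $x$ and $y$), place $x$ on $[0,1]$ spanning the union of $X$-boxes so it touches each $X$-box across the hyperplane $e_1=1$, and place $y$ on $[4,5]$ spanning the union of $Y$-boxes. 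Since $z$ spans the full extent, $x$ touches $z$ across $e_1=1$? No: $x$ is $[0,1]$ and $z$ is $[2,3]$, they do not touch. I would instead let $x=[1,2]$-adjacent; the bookkeeping of exactly which intervals make $x$--$z$ and $z$--$y$ contacts while keeping $x\!\not\sim\!Y$, $y\!\not\sim\!X$, and $x\!\not\sim\!y$ is the delicate part, and I expect this to be \textbf{the main obstacle}. The key geometric fact to exploit is that $x$ and $y$ are apex-type vertices whose boxes can be made to span the entire $(e_2,\ldots,e_{d+1})$ extent, so their only freedom is along $e_1$, and along a single axis one can linearly order $x,z,y$ relative to $X,Y$ to realize exactly the path $xzy$ plus the required stars.

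\medskip

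\noindent\textbf{The reverse direction ($B'\in(d{+}1)$-CBU $\Rightarrow$ $B$ has boxicity $\le d$).}
Here I would start from a $(d+1)$-CBU representation of $B'$ and argue that the contacts of $B$ are forced to respect a single direction, which I can then project out to obtain a boxicity-$d$ representation of $B$. The homogeneous arc labeling from the Claim preceding Claim~\ref{cl:45-cycles} orients every edge along $e_1$ with the source/sink structure at each vertex. The crucial structural point is that $x$ is adjacent to all of $X$ and $y$ to all of $Y$, while $xzy$ is a path: I would analyze the forced orientation, showing that $z$ is (say) a sink or source relative to both $x$ and $y$, which pins down the $e_1$-coordinates of $x$ and $y$ on opposite ``sides'' of the configuration and forces every edge of $B$ (between $X$ and $Y$) to be realized as an $e_1$-orthogonal contact of boxes whose $e_1$-intervals are nested consistently. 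Once every contact of $B$ uses the $e_1$-direction uniformly, deleting the $e_1$-coordinate (intersecting each box with a generic hyperplane orthogonal to $e_1$ that meets all of them, or simply forgetting $e_1$ and keeping the $e_2,\ldots,e_{d+1}$ extents) yields $d$ boxes per vertex of $B$ that intersect exactly when the original boxes were in $e_1$-contact; connectedness of $B$ is what guarantees the $e_1$-coordinates can be globally reconciled so that non-adjacent pairs of $B$ do not spuriously intersect after projection. I expect the reverse direction to require the connectedness hypothesis essentially, and the verification that no new intersections are created upon projection to be the second delicate point.

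\medskip

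Finally, I would remark how this lemma yields Theorem~\ref{thm:CBU-d-d+1} for $d\ge 2$: taking $B=K_{2d,2d}$ minus a perfect matching, which is connected and has boxicity exactly $d$, the associated $B'$ lies in $(d{+}1)$-CBU but not in $d$-CBU, since otherwise the reverse direction would give $B$ boxicity at most $d-1$, a contradiction.
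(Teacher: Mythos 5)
Your write-up correctly identifies the two delicate points, but it resolves neither of them, so both directions have genuine gaps. In the forward direction you never produce a valid placement of $x$, $y$, $z$ (you say so yourself), and the fallback principle you propose is a dead end: if $x$, $y$ \emph{and} $z$ all span the full $(e_2,\ldots,e_{d+1})$-extent, then to touch both $x$ and $y$ the $e_1$-interval of $z$ must cover the $e_1$-range occupied by $X\cup Y$, so $z$ would overlap the interiors of the boxes of $B$; hence no linear ordering along $e_1$ alone can work. The missing idea is to shrink $z$ in the \emph{cross-section}, not along $e_1$: with $X$ on $[0,1]$ and $Y$ on $[1,2]$ in the first coordinate as in Theorem~\ref{thm:box-CBU-bip}, take $x=[-1,0]\times[-\Omega,\Omega]^d$, $y=[2,3]\times[-\Omega,\Omega]^d$, and $z=[0,2]\times[\Omega-1,\Omega]^d$ for $\Omega$ large. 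Then $x$ (resp.\ $y$) meets every box of $X$ (resp.\ $Y$) in the hyperplane $e_1=0$ (resp.\ $e_1=2$), $z$ bridges $x$ and $y$ through those same two hyperplanes while sitting in a corner avoided by every box of $B$, and all contacts are orthogonal to $e_1$.

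In the reverse direction your skeleton (uniform direction, then project out $e_1$) matches the paper's, but the crucial step is asserted rather than proved: nothing in your argument shows that \emph{all} arcs between $X$ and $Y$ are oriented the same way, and the claim that ``$z$ is a sink or source relative to both $x$ and $y$'' does not yield it. The paper's forcing argument is a concrete small-cycle contradiction: if some $y_2\in Y$ had an incoming arc from $x_1$ and an outgoing arc to $x_3$, the 4-cycle through $x,x_1,y_2,x_3$ forces $x_1\rightarrow x\rightarrow x_3$ by Claim~\ref{cl:45-cycles}, and then the two 5-cycles through $x$, $z$, $y$, $y_2$ (one closed via $x_1$, the other via $x_3$) impose opposite orientations on the edge $zy$ --- this is exactly where the gadget $xzy$ is used, and it is absent from your proposal. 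Once the orientation is uniform, homogeneity at shared endpoints plus connectedness of $B$ gives that all arcs of $B$ carry a \emph{single} label, i.e.\ every contact lies in one hyperplane $\HH$ orthogonal to $e_1$; this, not a ``generic hyperplane'' or a global reconciliation of coordinates, is what makes the projection safe. Indeed, two boxes on the same side of $\HH$ whose projections touched would intersect along a set extending in the $e_1$-direction, which the CBU model forbids, so no spurious adjacency is created. (Your closing remark deducing Theorem~\ref{thm:CBU-d-d+1} from the lemma is correct and matches the paper's intended use.)
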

\begin{proof}
Let us begin with the simpler "only if" part. We proceed as in the proof of 
Theorem~\ref{thm:box-CBU-bip} in order to obtain a $(d+1)$-CBU representation of $B$ 
such that every vertex of $X$ (resp. $Y$) corresponds to $[0,1]$ (resp. $[1,2]$) 
in the space spanned by $e_1$. Then it suffices to add the boxes for $x,y$ and $z$.
For a sufficiently large $\Omega$, 
$x$ is represented by $[-1,0]\times [-\Omega,+\Omega]\times \ldots \times [-\Omega,+\Omega]$,
$y$ is represented by $[2,3]\times [-\Omega,+\Omega]\times \ldots \times [-\Omega,+\Omega]$, and
$z$ is represented by $[0,2]\times [\Omega-1,\Omega]\times \ldots \times [\Omega-1,\Omega]$.

For the "if" part, consider a $(d+1)$-CBU representation of $B'$, and the homogeneous arc labeling
of $B'$ induced by this representation. We first prove that all the arcs between $X$ and $Y$ are oriented in 
the same direction. Towards a contradiction, consider a path $x_1y_2x_3$ with $x_1,x_3\in X$ and $y_2\in Y$,
and such that the edges are oriented from $x_1$ to $y_2$, and from $y_2$ to $x_3$. This forces the remaining 
edges of the 4-cycle $yx_1y_2x_3$ to be oriented from $x_1$ to $y$, and from $y$ to $x_3$
(see Claim~\ref{cl:45-cycles}). Now we cannot orient the edge $y_2x, xz, zy$ in such a way to fulfill
Claim~\ref{cl:45-cycles} for the 5-cycles $xzyx_1y_2$ and $xzyx_3y_2$. Indeed for the first one, $yz$ 
should be oriented from $y$ to $z$, while for the second one it should be oriented from $z$ to $y$,
a contradiction.

This orientation ensures that the labels of all the arcs is the same. This implies that there is an hyperplane
$\HH$ orthogonal to $e_1$ such that for any pair of intersecting 
boxes $x'\in X$ and $y'\in Y$, their intersection belongs to $\HH$. 
This implies that projecting the $(d+1)$-CBU representation (restricted to $B$)
along $e_1$ leads to a boxicity $d$ representation of $B$.
\end{proof}

It is clear that CBU is hereditary (i.e. closed under induced subgraphs) 
but actually it is also closed under subgraphs.
\begin{theorem}~\label{thm:CBU-subgraph}
For any subgraph $H$ of $G$,  $G\in$ CBU implies that $H\in$ CBU. More precisely, 
if there is a complete bipartite graph $K_{a,b}$ such that $V(K_{a,b}) \subseteq V(G)$, 
and such that $E(H) = E(G) \setminus E(K_{a,b})$,
then if $G$ belongs to $d$-CBU then $H$ belongs to $(d+1)$-CBU.
\end{theorem}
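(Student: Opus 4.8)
The plan is to establish the quantitative statement directly, by a short geometric construction that adds a single dimension, and then to read off the opening sentence as a consequence. So suppose $G\in d$-CBU and let $A$ and $B$ be the two parts (of sizes $a$ and $b$) of the complete bipartite subgraph $K_{a,b}$, so that $H=G-E(K_{a,b})$. First I would record that $A$ and $B$ are independent sets of $G$: were two vertices of $A$ adjacent, then together with any vertex of $B$ (both vertices of $A$ are joined to it in $K_{a,b}$) they would form a triangle, contradicting that $d$-CBU graphs are triangle free. Consequently, deleting $E(K_{a,b})$ removes only edges running between the two independent sets $A$ and $B$, and leaves every other adjacency of $G$ intact.

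Next I would take a $d$-CBU representation of $G$ in the space spanned by $e_1,\ldots,e_d$, with all contacts orthogonal to $e_1$, and lift it to $\RR^{d+1}$ by adjoining a fresh direction $e_{d+1}$ while keeping $e_1$ as the contact direction. The single new coordinate is used only to pull $A$ away from $B$: I keep every box unchanged in the first $d$ coordinates and prescribe its extent along $e_{d+1}$ by giving each vertex of $A$ the interval $[0,1]$, each vertex of $B$ the interval $[2,3]$, and each remaining vertex the interval $[-1,4]$, which strictly contains both of the former two.

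It then remains to verify that this is a $(d+1)$-CBU representation of $H$. In the lifted picture, two boxes meet exactly when they met in the old representation \emph{and} their $e_{d+1}$-intervals overlap. Since the $A$- and $B$-intervals are separated by a genuine gap, every former $A$--$B$ contact vanishes, which is precisely the edge set $E(K_{a,b})$ we wished to remove; and no new intersection is ever created, since restricting to the extra coordinate can only shrink intersections. For every surviving pair, at least one of the boxes carries the spanning interval $[-1,4]$, or both lie in a common class and share the same interval, so their $e_{d+1}$-extents overlap in a nondegenerate interval; hence the old relation is preserved, interior-disjointness is inherited, and a surviving contact, being an old $(d-1)$-face orthogonal to $e_1$ multiplied by a nondegenerate interval, is a $d$-face still lying in a hyperplane $x_1=\text{const}$, i.e. orthogonal to $e_1$. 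This gives the required $(d+1)$-CBU representation, proving the quantitative claim. The opening sentence then follows: a single edge $uv$ is the edge set of $K_{1,1}$ on $\{u,v\}$, so one edge deletion keeps a graph in CBU (raising the dimension by at most one), vertex deletions are handled by heredity, and iterating these two operations reaches every subgraph $H$.

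The only delicate point is the bookkeeping of which contacts survive: the $A$- and $B$-intervals must be separated by a true gap so that the deleted intersections become empty rather than degenerate lower-dimensional touchings (which the model forbids), while the interval assigned to the remaining vertices must strictly contain both so that no contact incident to them is destroyed. Everything else is a routine check.
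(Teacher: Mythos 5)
Your construction is essentially the paper's own proof: keep the first $d$ intervals of each box and use one fresh coordinate with $[0,1]$ for $A$, $[2,3]$ for $B$, and an interval containing both for all other vertices (the paper takes $[0,3]$ where you take $[-1,4]$, an immaterial difference), verified by the same routine check of which intersections survive and that contacts stay orthogonal to $e_1$. One small remark: your opening claim that $A$ and $B$ are independent in $G$ is never actually used, since your case analysis explicitly handles adjacent pairs within a common class --- which is just as well, because its triangle argument tacitly assumes $E(K_{a,b})\subseteq E(G)$, a hypothesis the statement does not impose.
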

\begin{proof}
Let $A,B$ be the parts of  $K_{a,b}$. Given a CBU representation of $G$ in $\RR^d$ we are going to build a CBU representation of $H$ in $\RR^{d+1}$. For this, the first $d$ intervals defining each $d$-box remain unchanged while the last interval is $[0,1]$ for the vertices in $A$, $[2,3]$ for the vertices in $B$, and $[0,3]$ for the remaining vertices. It is now easy to check that two boxes intersect if and only if they intersect in $G$ and if they are not adjacent in  $K_{a,b}$. It is also clear that the intersections occur on planes orthogonal to $e_1$.
\end{proof}

The graph class CBU is also closed by the addition of false twins.
\begin{theorem}~\label{thm:CBU-twin}
For any graph $G$ and any vertex $v$ of $G$, consider the graph $G^v$ obtained from $G$ 
by adding a new vertex $v'$ such that $N(v') = N(v)$. Then $G\in$ CBU if and only if 
$G^v\in$ CBU. Furthermore, if $G\in d$-CBU then $G^v\in (d+1)$-CBU.
\end{theorem}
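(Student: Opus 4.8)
The statement to prove is Theorem~\ref{thm:CBU-twin}: that CBU is closed under adding false twins, and that the dimension increases by at most one. The plan is to prove the two implications separately. The backward direction ($G^v \in$ CBU implies $G \in$ CBU) is immediate, since $G$ is an induced subgraph of $G^v$ and CBU is hereditary, so I would dispatch this in a single sentence. The substance is the forward direction together with the quantitative claim, so I would focus on showing that a $d$-CBU representation of $G$ can be turned into a $(d+1)$-CBU representation of $G^v$.

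\textbf{Main construction.} First I would fix a $d$-CBU representation of $G$, in which the contacts occur on hyperplanes orthogonal to $e_1$. The idea is to place $v'$ as a copy of $v$ that realizes exactly the same contacts with $N(v) = N(v')$, but separated from $v$ in a brand new dimension so that $v$ and $v'$ themselves do not touch and no spurious contacts are created. Concretely, I would keep the existing $d$ intervals of every box of $G$ unchanged and embed everything in $\RR^{d+1}$ by appending a $(d+1)$-th coordinate (a new direction, say $e_{d+1}$). For every vertex other than $v$, I would use a long interval $[0,3]$ in this new coordinate; for $v$ itself I would use $[0,1]$, and for the new twin $v'$ I would use $[2,3]$. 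This way $v$ and $v'$ are separated in the new dimension and cannot intersect, while each of them still meets every box of $N(v)$ exactly as $v$ did in the original representation, because all other boxes span the full interval $[0,3]$ in the new coordinate. I would then verify that the new coordinate creates no unwanted contacts: two boxes coming from $V(G)$ other than $v$ both span $[0,3]$ so their adjacency is unchanged, and any new contact must involve $v$ or $v'$, whose neighborhoods are exactly $N(v)$ as desired.

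\textbf{Checking the contact direction.} The final point to verify is that the representation is genuinely CBU, i.e. that all contacts still occur on hyperplanes orthogonal to $e_1$, not on hyperplanes orthogonal to the newly added $e_{d+1}$. This follows because the new intervals I assign in the $e_{d+1}$ direction are chosen so that whenever two boxes touch, their overlap in the new coordinate has nonempty interior: the neighbors of $v$ (and of $v'$) span $[0,3]$, which strictly contains $[0,1]$ and $[2,3]$, so the contact in the $(d+1)$-th coordinate is a genuine overlap rather than a boundary contact. Hence the boundary contact is inherited from the original $d$-CBU representation and remains orthogonal to $e_1$.

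\textbf{Expected obstacle.} I expect the only delicate step to be confirming that appending the new dimension does not merge two boxes of $V(G) \setminus \{v\}$ that were previously interior-disjoint but non-adjacent, or create a contact orthogonal to $e_1$ that collapses to a lower-dimensional face. Since all such boxes share the full interval $[0,3]$ in the new coordinate, their mutual relationship in the first $d$ coordinates is exactly as in the original representation, so this is really a routine check rather than a genuine difficulty; the construction essentially mirrors the false-twin and subgraph arguments (Theorems~\ref{thm:CBU-subgraph}) already used in the excerpt, where an extra coordinate with intervals $[0,1]$, $[2,3]$, and $[0,3]$ plays the same separating role.
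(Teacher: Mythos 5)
Your proposal is correct and takes essentially the same approach as the paper: the identical construction of appending a $(d+1)$-th coordinate with intervals $[0,1]$ for $v$, $[2,3]$ for $v'$, and $[0,3]$ for all other vertices (with $v'$ inheriting the first $d$ intervals of $v$), and the same one-sentence hereditary argument for the converse. Your explicit check that the new-coordinate overlaps have nonempty interior, so that all contacts remain orthogonal to $e_1$, is precisely the verification the paper declares ``easy to check.''
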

\begin{proof}
The "if"  part is obvious as $G$ is an induced subgraph of $G^v$.
For the "only if" part, given a CBU representation of $G$ in $\RR^d$ we are going to build a CBU representation of $G^v$ in $\RR^{d+1}$. For this, the first $d$ intervals defining each $d$-box remain unchanged, and those of $v'$ are the same as 
those of $v$. The last interval is $[0,1]$ for $v$, $[2,3]$ for $v'$, and $[0,3]$ for all the remaining vertices. It is now easy to check that two boxes intersect if and only if they intersected and if one of them is distinct from $v$ or $v'$. It is also clear that the intersections occur on planes orthogonal to $e_1$.
\end{proof}

\emph{Shift graphs} were introduced by P. Erd\H{o}s and A. Hajnal in~\cite{EH64} 
(see Theorem 6 therein). Those are the graphs $H_m$ whose vertices are the ordered pairs 
$(i,j)$ satisfying $1\le i < j\le m$, and where two pairs $(i,j$) and $(k,l)$
form an edge if and only if $j=k$ or $l=i$.
Note that such graphs admit a homogeneous arc labeling $\ell$ defined by 
$\ell(\{(i,j),(j,k)\}) = j$, and by orienting any edge $\{(i,j),(j,k)\}$ from $(i,j)$ to $(j,k)$.

\begin{theorem}~\label{thm:D_n}
The graph $H_m$ belongs to $(m-1)$-CBU. Furthermore, $H_m$ has a CBU representation such 
that in the first dimension the vertex $(i,j)$ corresponds to interval $[i,j]$.
\end{theorem}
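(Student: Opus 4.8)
The plan is to build the representation by hand, keeping $[i,j]$ in the first coordinate as prescribed and then spending the remaining $m-2$ coordinates to kill all non-edges. The key preliminary observation is a clean reduction. Put the first coordinate of the box of $(i,j)$ equal to $[i,j]$. Since all endpoints are integers, two first-intervals $[i,j]$ and $[k,l]$ meet in a single point exactly when $j=k$ or $l=i$, which by definition of $H_m$ is exactly the condition that $(i,j)$ and $(k,l)$ are adjacent; and a contact orthogonal to $e_1$ can occur only when the first intervals meet in a single point. Hence no edge can ever appear between two vertices whose first intervals are disjoint or overlap in their interiors, and it remains only to arrange the coordinates $e_2,\dots,e_{m-1}$ so that (a) for every edge $(i,j)(j,k)$ the two boxes overlap solidly in each of $e_2,\dots,e_{m-1}$, so that their contact is a genuine $(m-2)$-dimensional face orthogonal to $e_1$, and (b) for every pair of vertices whose first intervals overlap in their interiors the boxes are made interior-disjoint in at least one coordinate, which is exactly what is needed for the whole family to be interior-disjoint. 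Pairs with disjoint first intervals require no further attention.

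For (a) and (b) I would use one coordinate per unit cell $[t,t+1]$. For $t=2,\dots,m-1$ let $C_t=\{(i,j): i\le t\le j-1\}$ be the vertices whose interval contains the cell $[t,t+1]$; in coordinate $e_t$ give every vertex of $C_t$ its own tiny pairwise-disjoint slot and every vertex outside $C_t$ one big interval covering all those slots. In $e_t$ two boxes are then interior-disjoint precisely when both lie in $C_t$, i.e.\ precisely for the pairs whose first intervals both contain $[t,t+1]$; every other pair overlaps there. Now the intervals of an edge $(i,j)(j,k)$ meet only at the point $j$ and hence share no cell, so the two endpoints of an edge never lie together in any $C_t$; therefore they overlap in every coordinate, giving (a). For (b), any two first-intervals that overlap in their interiors share at least one cell $[t,t+1]$, and they are separated in the corresponding $e_t$ — as long as $t\ge 2$.

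The only case left uncovered is a pair whose first intervals share the cell $[1,2]$ and no other cell; these are exactly the pairs $(1,2),(1,k)$ with $k\ge 3$, all involving the single vertex $(1,2)$. This boundary cell is the crux of the argument: there are $m-1$ cells but only $m-2$ coordinates available, so one cell must be absorbed, and $[1,2]$ is the natural one to drop. I would absorb it into $e_2$ by splitting the slots of $C_2$ into the block of vertices with left endpoint $1$ and the block with left endpoint $2$, and then redefining the $e_2$-interval of $(1,2)$ to be exactly the left-endpoint-$2$ block. Since the only neighbours of $(1,2)$ are the vertices $(2,k)$, this makes $(1,2)$ overlap all its neighbours while being disjoint from every $(1,k)$ with $k\ge 3$, completing (b), and it leaves all other separations and edges handled by $e_2$ intact. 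The main thing to verify carefully is precisely this local modification: that splitting the slots and re-placing $(1,2)$ is consistent with every separation the cell coordinates are responsible for. Once that is checked, the boxes $[i,j]$ times the $e_2,\dots,e_{m-1}$ intervals constructed above form the required $(m-1)$-CBU representation with first coordinate $[i,j]$.
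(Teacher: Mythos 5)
Your construction is correct, and it is genuinely different from the paper's proof. The paper proceeds by induction on $m$: from a representation of $H_{m-1}$ it creates each vertex $(i,m)$ with $i<m-1$ as a false twin of $(i,m-1)$ whose first interval is stretched from $[i,m-1]$ to $[i,m]$, represents $(m-1,m)$ by a huge box $[m-1,m]\times[-\Omega,+\Omega]\times\cdots\times[-\Omega,+\Omega]$, and then adds one fresh coordinate (intervals $[1,2]$, $[3,4]$, $[1,4]$) to separate $(m-1,m)$ and the old vertices $(i,m-1)$ from the new twins $(i,m)$; one dimension per step yields $m-1$ in total. You instead give a one-shot explicit representation: the first coordinate $[i,j]$ already forces the single-point meeting $\{j\}$ exactly on the edges of $H_m$, and you spend one coordinate per interior unit cell $[t,t+1]$, $t=2,\ldots,m-1$, separating the vertices of $C_t$ by disjoint slots. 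Your case analysis is complete: edge endpoints never share a cell, so they overlap solidly everywhere and the contact is an $(m-2)$-box orthogonal to $e_1$; interior-overlapping non-edges share a cell with $t\ge 2$ except the pairs $(1,2),(1,k)$, and your block-splitting of $e_2$ handles exactly those while preserving the overlaps of $(1,2)$ with its neighbours $(2,l)$ and all other separations. One small point to make explicit: the slots must be disjoint as closed intervals (not merely interior-disjoint), since a touching in coordinate $e_t$ with $t\neq 1$ would be a forbidden contact; your phrasing already suggests this. Comparing the two, the paper's induction is shorter to state and reuses the false-twin mechanism of Theorem~\ref{thm:CBU-twin}, but leaves the geometry of the separations implicit; your construction makes the dimension count transparent (the $m-1$ cells of $[1,m]$, with the boundary cell $[1,2]$ absorbable because it concerns only the single vertex $(1,2)$) and yields a verification that is entirely local and checkable pair by pair.
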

\begin{proof}
This clearly holds for the one vertex graph $H_2$.
By induction on $m$ consider a representation of $H_{m-1}$, add a false twin for every vertex 
$(i,m-1)$ and modify the first interval of these new twins, so that the interval $[i,m-1]$ becomes 
$[i,m]$. These boxes correspond to the vertices $(i,m)$  with $i<m-1$.
For the vertex $(m-1,m)$, one should add a box $[m-1,m]\times [-\Omega,+\Omega]\times\ldots \times [-\Omega,+\Omega]$, for a sufficiently large $\Omega$. To deal with the intersections between this box and the boxes of the other vertices $(i,m)$, we add a new dimension such that vertex 
$(m-1,m)$ has interval $[1,2]$, the vertices $(i,m-1)$ have interval $[1,2]$, the vertices $(i,m)$ with $i<m-1$ have interval $[3,4]$, and all the other vertices have interval $[1,4]$.
\end{proof}

\begin{theorem}~\label{thm:CBU_as_SHIFT}
For every $n$-vertex graph $G$ the following properties are equivalent.
\begin{itemize}
    \item[a)] $G$ belongs to CBU.
    \item[b)] $G$ admits a homogeneous arc labeling.
    \item[c)] $G$ is the subgraph of a graph $H_m^t$, obtained from the shift graph $H_m$ by iteratively adding $t$ false twins, for some values $m,t$ such that $m+t\le n+1$.
    \item[d)] $G$ belongs to $(2n-1)$-CBU.
\end{itemize}
\end{theorem}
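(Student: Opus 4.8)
The plan is to prove the equivalence by establishing the cycle of implications $a) \Rightarrow b) \Rightarrow c) \Rightarrow d) \Rightarrow a)$, exploiting the structural results already built up in the paper.

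\medskip

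\noindent\textbf{Proof plan.} For $a) \Rightarrow b)$, I would simply recall the observation made right after the triangle-free claim at the start of the paper: given a CBU representation, orienting each edge according to $e_1$ and labeling each arc by the coordinate of the separating $(d-1)$-hyperplane produces a homogeneous arc labeling. This direction is essentially already argued, so it reduces to citing that construction.

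\medskip

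\noindent The implication $b) \Rightarrow c)$ is the conceptual heart of the argument. Suppose $G$ has a homogeneous arc labeling. The idea is to identify each vertex $v$ with a pair of labels, namely the common label $\ell^-(v)$ of its incoming arcs and the common label $\ell^+(v)$ of its outgoing arcs; the homogeneity condition guarantees these are well defined and that $\ell^-(v) < \ell^+(v)$. After normalizing the labels to consecutive integers $\{1, \ldots, m\}$, this gives a map $v \mapsto (\ell^-(v), \ell^+(v))$ into vertex-pairs of a shift graph $H_m$, and an edge from $u$ to $v$ forces $\ell^+(u) = \ell^-(v)$, which is exactly the shift-graph adjacency condition. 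The subtlety is that several vertices of $G$ may receive the same pair $(i,j)$, so the map is not injective; this is precisely why false twins are needed. Distinct vertices mapping to the same pair have the same in-label and out-label but may have different (though compatible) neighborhoods, so one views $G$ as a subgraph of the graph obtained from $H_m$ by adding, for each repeated pair, the requisite false twins. I would bound $m$ by the number of distinct labels and $t$ by the number of twin-additions, checking that $m + t \le n + 1$ since each distinct label beyond the first and each twin beyond the first representative is charged to a vertex. This counting, and the care needed to verify that $G$ is genuinely a subgraph (not merely an induced subgraph) of $H_m^t$, is where I expect the main work to lie.

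\medskip

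\noindent For $c) \Rightarrow d)$, I would assemble the dimension bound from the earlier constructive theorems. By Theorem~\ref{thm:D_n}, $H_m$ belongs to $(m-1)$-CBU. Each false-twin addition costs one dimension by Theorem~\ref{thm:CBU-twin}, so $H_m^t$ belongs to $(m-1+t)$-CBU. Using $m + t \le n + 1$ gives $m - 1 + t \le n$, placing $H_m^t$ in $n$-CBU. Finally, passing from $H_m^t$ to its subgraph $G$ costs at most one further dimension via Theorem~\ref{thm:CBU-subgraph} (the deleted edges form a complete bipartite graph, or one applies it iteratively/absorbs the deletion), and a crude accounting yields the stated $(2n-1)$-CBU bound, which need not be tight. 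The implication $d) \Rightarrow a)$ is immediate, since $(2n-1)$-CBU is by definition contained in CBU. The delicate point throughout is the bookkeeping in $b) \Rightarrow c)$ and ensuring the subgraph relation in $c)$ is compatible with the edge-deletion hypothesis of Theorem~\ref{thm:CBU-subgraph}; everything else is direct invocation of prior results.
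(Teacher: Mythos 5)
Your proposal follows the paper's proof essentially step for step: the same implication cycle $a)\Rightarrow b)\Rightarrow c)\Rightarrow d)\Rightarrow a)$, with $a)\Rightarrow b)$ by citing the labeling extracted from a representation, $b)\Rightarrow c)$ via the map $v\mapsto(\ell^-(v),\ell^+(v))$ into pairs of labels (the paper reserves the extreme labels $1$ and $m$ for sources and sinks and gets the bound from the minimality of $m$, which gives $m-2\le n-1$ --- your charging argument is the same counting in different words), and $c)\Rightarrow d)$ by combining Theorems~\ref{thm:D_n}, \ref{thm:CBU-twin} and~\ref{thm:CBU-subgraph}. One local correction in $c)\Rightarrow d)$: the edges of $H_m^t$ missing from $G$ do \emph{not} in general form a single complete bipartite graph, so your ``costs at most one further dimension'' option fails --- and indeed it is arithmetically inconsistent with your own conclusion, since one extra dimension would place $G$ in $(n+1)$-CBU rather than $(2n-1)$-CBU. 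The paper's route is exactly your iterative fallback: the excess edges incident to any fixed vertex form a star $K_{1,b}$, so one deletes them as $n-1$ stars, each deletion costing one dimension by Theorem~\ref{thm:CBU-subgraph}, and $n+(n-1)=2n-1$ is precisely where the stated bound comes from.
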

\begin{proof}
We have already seen that $a)\Rightarrow b)$. Let us show $b)\Rightarrow c)$.
Consider a homogeneous arc labeling of $G$, with labels in $[2,m-1]$, for the minimum $m$.
By minimality of $m$, note that all the labels are used, and thus $m-2\le n-1$. 
Let $H_m^t$ be the graph obtained from 
the shift graph $H_m$ by adding $t_{i,j}$ false twins of vertex $(i,j)$ if there are $t_{i,j} +1$
vertices of $G$ whose incoming arcs are labeled $i$, and whose outgoing arcs are labeled $j$.
For the vertices without incoming (resp. outgoing) arcs assume that those are labeled 1 (resp. $m$).
Consider now an injective mapping $\gamma\ :\ V(G) \longrightarrow V(H_m^t)$, such that any vertex 
with incoming and outgoing arcs labeled $i,j$ is mapped to $(i,j)$ or one of its twins.
This mapping ensures us that $G$ is a subgraph of $H_m^t$. Indeed, for any two 
adjacent vertices $u,v$ of $G$ linked by an edge labelled $j$ oriented from $u$ to  $v$, 
their incoming and outgoing arcs are labeled $i,j$ and $j,k$ respectively, for some $i <j<k$,
and thus the vertices $\gamma(u)$ and $\gamma(v)$ of $H_m^t$ are adjacent, as they
correspond to or are twins of $(i,j)$ and $(j,k)$.

We now show $c)\Rightarrow d)$. Consider a graph $H_m^t$ containing $G$ as a subgraph, 
for some $m,t$ such that $m+t\le n+1$.
By Theorem~\ref{thm:D_n} and Theorem~\ref{thm:CBU-twin} we have that  $H_m^t$ belongs to 
$(m-1+t)$-CBU, and so to $n$-CBU. Starting from $H_m^t$ one can obtain $G$ by successively deleting 
$n-1$ stars $K_{1,b}$, so by Theorem~\ref{thm:CBU-subgraph}, we have that $G$ belongs to $(2n-1)$-CBU.
Finally, $d)\Rightarrow a)$ is obvious.
\end{proof}

It is easy to see that every complete bipartite graph belongs to 3-CBU. By  Theorem~\ref{thm:CBU-subgraph}, removing stars  $K_{1,b}$
centered on the smallest part, one obtains that every $n$-vertex bipartite graph 
belongs to $(\lfloor n/2 \rfloor +3)$-CBU. One can reach a slightly better bound
from Theorem~\ref{thm:box-CBU-bip}, 
and the fact that for every graph $G$, $box(G)\le \lfloor n/2\rfloor$~\cite{Roberts69}.
\begin{corollary}
Every bipartite graph $G$ belongs to CBU. Furthemore, 
if $|V(G)| = n$ then $G$ belongs to $(\lfloor n/2 \rfloor +1)$-CBU.
\end{corollary}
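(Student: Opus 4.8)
The plan is to establish the quantitative bound first, since the qualitative assertion that every bipartite graph belongs to CBU is an immediate consequence of \emph{any} finite upper bound on the required dimension. The approach combines two ingredients that are already available: Roberts' classical bound that every $n$-vertex graph $G$ satisfies $box(G)\le \lfloor n/2\rfloor$~\cite{Roberts69}, and Theorem~\ref{thm:box-CBU-bip}, which turns a boxicity-$b$ representation of a bipartite graph into a $(b+1)$-CBU representation.

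First I would apply Roberts' bound to the bipartite graph $G$ on $n$ vertices to conclude $box(G)\le \lfloor n/2\rfloor$, yielding an axis-parallel box representation of $G$ in $\RR^{\lfloor n/2\rfloor}$. Then I would feed this representation into Theorem~\ref{thm:box-CBU-bip}: since $G$ is bipartite, the theorem guarantees that $G$ belongs to $(\lfloor n/2\rfloor +1)$-CBU, which is exactly the desired quantitative statement. The first assertion then follows for free, because $(\lfloor n/2\rfloor +1)$-CBU is contained in CBU by the definition of CBU as the union of the $d$-CBU classes.

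There is essentially no obstacle here, as both ingredients are already established; the only mild point to verify is that the two inputs compose correctly. Concretely, the construction in Theorem~\ref{thm:box-CBU-bip} merely adds one coordinate (spanned by $e_1$) to separate the two parts $A$ and $B$, placing $A$ on $[0,1]$ and $B$ on $[1,2]$, so it is precisely the bipartiteness of $G$ that makes a single extra dimension suffice to route all contacts through the hyperplane at coordinate $1$. I would also note that Roberts' bound is stated for all graphs (it needs no connectivity assumption), and that the slight expansion of boxes required so that intersecting boxes meet in a full $b$-dimensional box is already carried out inside the proof of Theorem~\ref{thm:box-CBU-bip}, so nothing further is needed. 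This matches the remark preceding the corollary that one reaches a bound of $\lfloor n/2\rfloor +1$, improving on the $\lfloor n/2\rfloor +3$ obtainable by instead deleting stars from a complete bipartite graph via Theorem~\ref{thm:CBU-subgraph}.
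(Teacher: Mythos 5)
Your proof is correct and follows exactly the route the paper intends: combining Roberts' bound $box(G)\le \lfloor n/2\rfloor$ with Theorem~\ref{thm:box-CBU-bip} to place $G$ in $(\lfloor n/2\rfloor+1)$-CBU, with membership in CBU following immediately. Your added remarks (that the expansion step is internal to Theorem~\ref{thm:box-CBU-bip} and that this improves on the $\lfloor n/2\rfloor+3$ bound from the star-deletion argument via Theorem~\ref{thm:CBU-subgraph}) match the paper's own discussion preceding the corollary.
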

As already mentioned, some bipartite graphs have arbitrary large boxicity, and thus 
there is no fixed $d$ such that every bipartite graph belongs to $d$-CBU.
For large girth graphs it is a different.
\begin{theorem}\label{thm:not-CBU-girth}
For any $g\ge 3$, there exist graphs of girth $g$ not contained in CBU.
\end{theorem}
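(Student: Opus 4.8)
The plan is to split on the girth. For $g=3$ the statement is immediate: a graph of girth $3$ contains a triangle, and since every $d$-CBU graph is triangle-free (the first Claim of the preliminaries), no graph containing a triangle can lie in CBU. For $4\le g\le 5$ I would simply invoke the explicit planar examples: Lemma~\ref{lem-planar-not-CBU} produces a non-CBU planar graph of girth $4$, and, as pointed out in the paragraph preceding it, a girth-$5$ variant as well. So the real content is the range $g\ge 6$, and there I would exploit that a graph can be simultaneously complex in two ways that CBU cannot reconcile: high girth and high chromatic number.

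For the large-girth regime I would combine two ingredients. The first is the classical theorem of Erdős that for all $g$ and $k$ there is a graph of girth at least $g$ and chromatic number at least $k$. The second, which is the heart of the matter, is the structural claim that \emph{for every $g$ there is a constant $c_g$ such that every CBU graph of girth at least $g$ has chromatic number at most $c_g$}. Granting this, the conclusion is quick: pick a graph $R$ of girth at least $g$ with $\chi(R)>c_g$; by the structural claim $R\notin\mathrm{CBU}$. To force the girth to be \emph{exactly} $g$, replace $R$ by the disjoint union $R\sqcup C_g$: its girth is $\min(\mathrm{girth}(R),g)=g$, and since $R$ is an induced subgraph and CBU is hereditary, $R\sqcup C_g$ is not CBU either.

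To attack the structural claim I would use the shift-graph characterization of Theorem~\ref{thm:CBU_as_SHIFT}, which lets me view a CBU graph $G$ through its homogeneous arc labeling: every edge carries a \emph{contact value}, and the edges sharing a value $\beta$ join the vertices whose interval \emph{ends} at $\beta$ to those whose interval \emph{starts} at $\beta$, forming a bipartite graph $B_\beta$. By the two-sources–two-sinks case of Claim~\ref{cl:45-cycles}, every $4$-cycle of $G$ has all of its edges labelled by a single value, so it lives inside one $B_\beta$; hence girth at least $5$ is equivalent to every $B_\beta$ being $C_4$-free. Thus $G$ decomposes, by label, into $C_4$-free bipartite pieces glued along vertices, and the linear order on the labels supplies the global acyclicity; forbidding the remaining short cycles $C_5,\dots,C_{g-1}$ then pins down how consecutive pieces $B_\beta$ may interlace.

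The hard part will be converting this local sparsity into a global chromatic bound: a structure built from $C_4$-free bipartite blocks glued at vertices can a priori still be intricate, and one must really use the acyclicity together with the absence of \emph{all} short cycles to exclude high-chromatic behaviour. I expect the cleanest formulation to be in terms of circular chromatic number — showing that for large $g$ such graphs admit a homomorphism onto a long odd cycle, so that their circular chromatic number tends to $2$ — which also explains why the threshold cannot be pushed to $g=4$: the shift graphs are themselves CBU, of girth $4$, with unbounded chromatic number, so $c_4$ does not exist and the case $g=4$ genuinely needs the explicit construction of Lemma~\ref{lem-planar-not-CBU}. Should the chromatic bound prove too delicate for some small $g$, the fallback is a direct construction generalizing $G_3$ of Lemma~\ref{lem-planar-not-CBU}, replacing its $C_4$ and $C_5$ gadgets by rigid cycles of length $g$ and $g+1$ whose forced orientations propagate a label inequality $x<x$ around the glued copies.
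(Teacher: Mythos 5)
Your proposal has a genuine gap at its center: the entire range $g\ge 6$ rests on the structural claim that there is a constant $c_g$ bounding the chromatic number of every CBU graph of girth at least $g$, and you never prove it — you explicitly defer "the hard part" of converting the layer-by-layer $C_4$-free decomposition into a global chromatic bound, and your stronger reformulation (CBU graphs of large girth admit homomorphisms onto long odd cycles, so $\chi_c\to 2$) is nothing the paper establishes; it sits squarely in the territory of the paper's open problems about large-girth CBU graphs. Your own observation that $c_4$ cannot exist (shift graphs are CBU, have girth $4$, and have unbounded chromatic number) should have been the warning sign that ordinary chromatic number is the wrong invariant here, since nothing in the paper distinguishes girth $6$ from girth $4$ in this respect. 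The fallback "generalize $G_3$ with rigid cycles of length $g$ and $g+1$" is likewise only a hope, not an argument: the propagation in Lemma~\ref{lem-planar-not-CBU} depends on the very constrained orientations of $C_4$ and $C_5$ from Claim~\ref{cl:45-cycles}, and longer cycles admit many more homogeneous orientations. (A smaller slip: for $g=5$ you invoke a "girth-$5$ variant" of Lemma~\ref{lem-planar-not-CBU}, but the lemma's graph $G_3$ contains the $4$-cycle on $a,b,c,d$, so the lemma itself only delivers girth $4$; the paper's passing girth-$5$ remark is not backed by a construction you can cite as done.)

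The paper's actual proof is a two-line argument that is uniform in $g$ and sidesteps all of this by using the \emph{fractional} chromatic number. Theorem~\ref{thm:chi_f} gives $\chi_f(G)<4$ for \emph{every} $G$ in CBU, with no girth hypothesis: it follows from the Simonyi--Tardos bound $\chi_f(H_m)<4$ for shift graphs, pushed through the characterization of Theorem~\ref{thm:CBU_as_SHIFT}, since $\chi_f$ is monotone under subgraphs and unchanged by adding false twins. On the other side, the Erd\H{o}s probabilistic construction yields, for every $g\ge 3$, graphs of girth $g$ with $\chi_f\ge 4$ (indeed arbitrarily large, because their independence number is forced to be small, and $\chi_f(G)\ge |V(G)|/\alpha(G)$). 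These two facts immediately exclude such graphs from CBU, for all $g\ge 3$ at once, with no case split and no new structure theory. The one sound ingredient in your write-up — padding with a disjoint $C_g$ to pin the girth to exactly $g$, using that CBU is hereditary — is a legitimate cosmetic supplement, but the load-bearing step of your proof is missing, so the proposal as written does not establish the theorem.
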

\begin{proof}
Indeed, for any $g\ge 3$ there exist graphs of girth $g$ with 
fractional chromatic number at least 4~\cite{Erdos1959}. (Actually, their
fractional chromatic number is arbitrarily large). By Theorem~\ref{thm:chi_f},
such graphs cannot belong to CBU.
\end{proof}
Nevertheless, the following remains open.
\begin{problem}
Are there integers $d, g$ such that every girth $g$ graph $G$ of CBU belongs to $d$-CBU?
\end{problem}

The remarks above imply that testing if a bipartite graph belongs to CBU is obvious (computable 
in constant time), while for girth $g$ graphs the question is more involved, as CBU has such 
graphs included and some other excluded. The following section treats the computational problem
of recognizing CBU graphs.

\section{Recognition}\label{sec:recognition}

Computing the boxicity of a bipartite graph is a difficult problem.
It is known that deciding whether a bipartite graph has boxicity two is NP-complete~\cite{Kra94}.
Furthermore, it is proven in~\cite{AdigaBC10} that it is not possible to approximate the 
boxicity of a bipartite graph within a $O(n^{0.5-\varepsilon})$-factor in polynomial time, unless $NP=ZPP$.
By Lemma~\ref{lem:bip-exact-CBU}, for every bipartite graph $B$ there is a graph $B'$ 
(obtained in polynomial time) such that the minimum value $d$ for which $B'$ belongs to $d$-CBU
is exactly $d=box(B)+1$.
\begin{corollary}\label{cor:3CBU-NPc}
It is NP-complete to decide whether a graph belongs to 3-CBU. Furthermore, 
unless $NP=ZPP$, one cannot approximate in polynomial time and within a 
$O(n^{0.5-\varepsilon})$-factor, the minimum value $d$ for which an input graph $G$ belongs to $d$-CBU.
\end{corollary}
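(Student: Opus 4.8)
The plan is to derive Corollary~\ref{cor:3CBU-NPc} directly from Lemma~\ref{lem:bip-exact-CBU} by a reduction from the two hardness results cited for boxicity of bipartite graphs. The key observation is that Lemma~\ref{lem:bip-exact-CBU} gives, for every connected bipartite graph $B$, a graph $B'$ whose minimum CBU-dimension is exactly $box(B)+1$. First I would check that the construction of $B'$ from $B$ runs in polynomial time: $B'$ is obtained by adding three vertices $x,z,y$, the path $xzy$, and all edges from $x$ to $X$ and from $y$ to $Y$, so it has $|V(B)|+3$ vertices and can clearly be built in linear time. I would also note that $B'$ has $n' = n+3$ vertices, which keeps the polynomial relationship between input sizes tight enough for the inapproximability statement.

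For the NP-completeness of deciding membership in 3-CBU, I would argue as follows. Membership in 3-CBU is in NP: a certificate is a 3-CBU representation, or equivalently (by Theorem~\ref{thm:CBU_as_SHIFT}) a homogeneous arc labeling together with the geometric data, which can be verified in polynomial time. For hardness, I would reduce from the NP-complete problem of deciding whether a bipartite graph $B$ has boxicity at most two~\cite{Kra94}. Given $B$, I may assume $B$ is connected (otherwise treat components, or observe the construction still applies to the relevant instances), form $B'$ in polynomial time, and invoke Lemma~\ref{lem:bip-exact-CBU}: $B'$ belongs to 3-CBU if and only if $box(B)\le 2$. Thus an algorithm deciding 3-CBU membership would decide boxicity-two for bipartite graphs, establishing NP-completeness.

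For the inapproximability part, I would leverage that the minimum $d$ with $B'\in d\text{-CBU}$ equals $box(B)+1$ exactly. Suppose, for contradiction, there were a polynomial-time algorithm approximating this minimum $d$ within a factor $O(n^{0.5-\varepsilon})$. Running it on $B'$ returns a value approximating $box(B)+1$. Since the additive shift by $1$ does not affect the multiplicative approximation guarantee asymptotically (one recovers an approximation of $box(B)$ up to the same order, absorbing the $+1$ and the change from $n$ to $n'=n+3$ into the $O(\cdot)$ and a slightly smaller $\varepsilon$), this would yield a polynomial-time $O(n^{0.5-\varepsilon'})$-approximation of the boxicity of bipartite graphs, contradicting the result of~\cite{AdigaBC10} unless $NP=ZPP$.

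The main obstacle I anticipate is bookkeeping in the inapproximability reduction rather than any conceptual difficulty: one must verify that translating between $box(B)$ and $box(B)+1$, and between $n$ and $n+3$, genuinely preserves the $O(n^{0.5-\varepsilon})$ threshold, so that the contradiction with~\cite{AdigaBC10} is clean. A minor secondary point is handling the connectivity hypothesis of Lemma~\ref{lem:bip-exact-CBU}, since the cited hardness instances for bipartite boxicity need not be connected a priori; I would address this by noting that the boxicity hardness can be taken on connected instances, or by adapting the gadget so that the added vertices $x,z,y$ still force all $X$--$Y$ arcs into a common direction even across components.
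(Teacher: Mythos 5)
Your proposal is correct and follows essentially the same route as the paper, which likewise derives the corollary directly from Lemma~\ref{lem:bip-exact-CBU} combined with the NP-completeness of boxicity-two for bipartite graphs~\cite{Kra94} and the $O(n^{0.5-\varepsilon})$-inapproximability of bipartite boxicity~\cite{AdigaBC10}; your extra bookkeeping (the $+1$ shift, $n$ versus $n+3$, connectivity of the hardness instances) is sound and in fact more careful than the paper's one-line justification. One small caution: a homogeneous arc labeling certifies membership in CBU, not in $3$-CBU specifically (Theorem~\ref{thm:CBU_as_SHIFT} is dimension-blind), so for NP membership you should rely on a polynomially encoded $3$-CBU representation (e.g.\ normalized to integer coordinates) rather than the labeling.
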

This implies that for most values $d$ the problem of deciding whether an input graph belongs to $d$-CBU,
cannot be computed in polynomial time, unless $NP=ZPP$. 
The hypothesis $NP = P$ being stronger than $NP=ZPP$, it would be stronger to know that 
it is NP-complete to decide if an input graph belongs to $d$-CBU.

\begin{problem}\label{pb:dCBU-recogn}
For which values $d$, is it NP-complete to decide whether a graph belongs to $d$-CBU?
Are there values $d$, in particular for $d=2$, for which the problem is polynomial?
\end{problem}
By Lemma~\ref{lem:bip-exact-CBU}, this problem would be solved, for $d\ge 3$, if the following problem admits a positive answer.
\begin{problem}\label{pb:boxicity-poly}
For any $d\ge 3$, is it NP-complete to decide whether a bipartite graph $B$ has boxicity at most $d$?
\end{problem}
Another computational problem is testing the membership in CBU.
\begin{problem}\label{pb:CBU-poly}
Is it polynomial to decide whether a graph belongs to CBU?
\end{problem}

We have seen that some triangle-free planar graphs, or some graphs with arbitrary large girth, 
are not in CBU. We can thus restrict the problem.
\begin{problem}\label{pb:CBU-poly-bis}
Is it polynomial to decide whether a planar graph $G$ belongs to CBU? For some $g\ge 3$, 
is it polynomial to decide whether a graph $G$ of girth at least $g$ belongs to CBU?
\end{problem}

\subsection{Recognition through forbidden induced subgraphs}

As CBU and $d$-CBU are closed under induced subgraphs, they are characterized by a set
of minimal excluded induced subgraphs, $\mathcal{F}_{CBU}$ and $\mathcal{F}_{d-CBU}$.
If one of these sets is finite, then recognizing the corresponding class becomes polynomial-time
tractable.
Thus by Corollary~\ref{cor:3CBU-NPc}, the set $\mathcal{F}_{3-CBU}$ 
(resp. $\mathcal{F}_{d-CBU}$ for $d\ge 4$) is not finite, unless $P=NP$ (resp. unless $NP=ZPP$). 
For the set $\mathcal{F}_{2-CBU}$ (resp. $\mathcal{F}_{CBU}$), we are sure that it is infinite.
Indeed, Theorem~\ref{thm:W'_g} (resp. Theorem~\ref{thm:not-CBU-girth}) provides an infinite 
sequence of graphs $\left(G_i\right)_{i\ge 0}$ not in 2-CBU (resp. not in CBU) such that the girth 
of $G_i$ is at least $i$. If there was an $n$ such that every graph in $\mathcal{F}_{2-CBU}$ (resp. $\mathcal{F}_{CBU}$) has at most $n$ vertices, then to exclude $G_{n+1}$ one would need to have a 
tree in $\mathcal{F}_{2-CBU}$ (resp. $\mathcal{F}_{CBU}$). This is not the case as for every tree $T$,
we have that $T\in 2$-$CBU \subseteq CBU$.

\subsection{Recognition through homogeneous arc labelings}

By Theorem~\ref{thm:CBU_as_SHIFT}, a graph $G$ belongs to CBU
if and only if it admits a homogeneous arc labeling. 
If we are given an orientation of a graph $G$ it is simple to check whether this orientation admits such labeling. 
For example, one can use linear programming. For each arc $uv$, set a variable $\ell_{uv}$ corresponding to a label, and for any two incident arcs, add a constraint. 
For two arcs $uv$ and $uw$ (resp. $uv$ and $wv$), the constraint is $\ell_{uv} = \ell_{uw}$ (resp. $\ell_{uv} = \ell_{wv}$). 
For two arcs $uv$ and $vw$, the constraint is $\ell_{uv}+1 \le \ell_{vw}$.
Problem~\ref{pb:CBU-poly} thus reduces to deciding whether a graph $G$ admits an orientation that 
is homogeneously labelable. In the following we characterize such orientations.

A cycle $(v_0,v_1,\ldots,v_{n-1})$ is said \emph{badly oriented} if there is a vertex $v_i$ 
whose incident arcs are $v_{i-1}v_i$ and $v_iv_{i+1}$, and if there is no vertex $v_j$
whose incident arcs are $v_{j+1}v_j$ and $v_jv_{j-1}$ (indices being considered $\bmod\ n$).
\begin{theorem}\label{thm:labelable-orientation}
An orientation of a graph $G$ admits a homogeneous labeling if and only if there is no badly oriented cycle.
\end{theorem}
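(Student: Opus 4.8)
The plan is to prove both implications by turning the existence of a homogeneous labeling into a feasibility question for a system of difference constraints, and then to read the obstruction off as a badly oriented cycle. First I reformulate: a homogeneous labeling is an assignment of a real number $\ell$ to each arc so that (i) arcs sharing a tail get equal labels, (ii) arcs sharing a head get equal labels, and (iii) $\ell(uv)<\ell(vw)$ whenever $u\to v\to w$. Writing $o(v)$ and $i(v)$ for the common labels of the out-arcs and in-arcs at $v$, conditions (i)--(ii) identify $o(u)=i(w)$ for every arc $uw$, and (iii) becomes $i(v)<o(v)$ at every vertex with both an in- and an out-arc. I build the constraint digraph $D$ whose nodes are the arcs of $G$, with a weight-$0$ arc in each direction between two arcs sharing a tail or sharing a head, and a weight-$1$ arc from $e$ to $f$ whenever $e=uv$ and $f=vw$. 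All weights being nonnegative, the standard longest-path feasibility criterion for difference constraints gives: a homogeneous labeling exists if and only if $D$ has no directed cycle using at least one weight-$1$ (\emph{transit}) arc; when there is none, setting $\ell(e)$ to the maximum weight of a $D$-path ending at $e$ yields a valid labeling.

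For the \emph{only if} direction I argue directly that a labelable orientation has no badly oriented cycle. Given $\ell$ and a cycle $(v_0,\dots,v_{n-1})$, put $L_t=\ell(\{v_t,v_{t+1}\})$ and read the constraint at each $v_i$: if $v_i$ is a local source or sink of the cycle then $L_{i-1}=L_i$; if its incident arcs are $v_{i-1}v_i$ and $v_iv_{i+1}$ then $L_{i-1}<L_i$; if they are $v_{i+1}v_i$ and $v_iv_{i-1}$ then $L_{i-1}>L_i$. In a badly oriented cycle there is at least one strict increase and no strict decrease, so $L_0\le L_1\le\cdots\le L_{n-1}\le L_0$ with one inequality strict, forcing $L_0<L_0$, a contradiction. (A directed cycle is the special case in which every vertex is an increasing transit, so it too is badly oriented and excluded.)

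The substantial direction is the converse, which by the reformulation is the contrapositive: \emph{if $D$ has a directed cycle through a transit arc, then $G$ has a badly oriented cycle.} I would take such a $D$-cycle of minimum length and translate it into a closed walk $W$ in $G$. A transit arc $e\to f$ is realized as a genuine forward passage $u\to v\to w$; an equality arc between two out-arcs of a common vertex $c$ is realized by entering $c$ against the orientation and leaving with it, making $c$ a local source of $W$; symmetrically an equality between two in-arcs makes the shared vertex a local sink. Minimality eliminates the degenerate cases where two successive connections meet the same endpoint of an edge (these can always be short-circuited while preserving a transit arc), so $W$ is an honest closed walk in which every vertex is a source, a sink, or a \emph{forward} transit (passage following the orientation), with at least one forward transit and no backward transit. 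Consequently $W$ decomposes into maximal forward directed runs joined by single backward edges, and the labels-that-would-be strictly increase along each run and are constant across the joins, which is precisely the recorded infeasibility.

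The main obstacle is the final step: extracting from $W$ a \emph{simple} badly oriented cycle. I expect to obtain it by choosing $W$ of minimum total length and proving it is already simple. At a repeated vertex one splits $W$ into two closed sub-walks; the key observation is that the split can create a backward transit at the cut vertex on at most one of the two sides, since creating it on both would force that occurrence to be simultaneously a source and a sink. Hence one side always stays good (no backward transit). The delicate point is to guarantee that a good side still carries a forward transit, so that the induction on length genuinely makes progress; I would secure this by a careful accounting of how the forward transits distribute across the two sides, or equivalently by tracking the helical height that rises by one across each forward transit and is constant elsewhere, and excising a balanced (height-neutral) sub-walk. Once $W$ is simple it is a cycle with a forward transit and no backward transit, i.e.\ badly oriented, completing the contrapositive and hence the theorem.
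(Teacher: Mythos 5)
Your overall architecture is viable and genuinely different from the paper's: the paper proves the ``if'' direction by induction on the number of vertices, picking a source $u$ and merging it into a vertex $u'$ that shares an out-neighbour with $u$, the bulk of the work being a case analysis showing the merged orientation acquires no badly oriented cycle. Your ``only if'' argument coincides with the paper's, and your reduction to difference constraints, the feasibility criterion, and the translation of a positive-weight cycle of the constraint digraph into a closed walk of $G$ all of whose vertex occurrences are sources, sinks or forward transits (with at least one forward transit) are correct, including the short-circuiting of degenerate junctions. But there is a genuine gap exactly where you flag it: extracting a \emph{simple} badly oriented cycle from the badly oriented closed walk $W$. You never prove that the backward-free side of a split at a repeated vertex still carries a forward transit, and your fallback is unsound as stated: the ``helical height'' (net number of forward minus backward transits) is not by itself an obstruction --- by Claim~\ref{cl:45-cycles}, every homogeneously labelable orientation of $C_5$ has, with respect to a fixed cyclic traversal, two forward transits and one backward transit, hence positive net --- so excising a height-neutral sub-walk can leave backward transits behind and settles nothing. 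A smaller slip: if both sides of a split acquired a backward transit at the cut vertex, then both occurrences of that vertex would already be backward transits of $W$ itself (not ``simultaneously a source and a sink''); your claim is true, but the stated reason is not.

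The postponed ``careful accounting'' does close the gap, and it is worth recording since it is the crux of your proof. Let $w_a=w_b$ be a repeated vertex of $W$, set $e_1=w_{a-1}w_a$, $e_2=w_aw_{a+1}$, $e_3=w_{b-1}w_b$, $e_4=w_bw_{b+1}$, and record for each edge whether its arc points out of or into the cut vertex; since $W$ has no backward transit, $(e_1,e_2)$ and $(e_3,e_4)$ each lie in $\{(\mathrm{out},\mathrm{out}),(\mathrm{in},\mathrm{in}),(\mathrm{in},\mathrm{out})\}$. The side $W_1=(w_a,\dots,w_b)$ sees the pair $(e_3,e_2)$ at the cut and $W_2$ sees $(e_1,e_4)$, while interior occurrences keep their types. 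Checking the nine combinations: in the two where one side gets a backward transit at the cut --- which forces $(e_1,e_2)=(\mathrm{in},\mathrm{in})$ and $(e_3,e_4)=(\mathrm{out},\mathrm{out})$, or symmetrically --- the other side automatically gets a \emph{forward} transit at the cut and no backward transit; in the remaining seven, no backward transit is created and the number of forward transits at the cut is conserved, so, as $W$ has a forward transit somewhere, one of the two sides inherits a forward transit and no backward transit. Either way you recurse on a strictly shorter badly oriented closed walk, and since closed walks of length two have no transits, the recursion terminates at a simple cycle of length at least three that is badly oriented. With that half-page inserted your proof is complete, and it buys something the paper's source-merging induction does not make explicit: a direct Bellman--Ford-style construction of the labeling and an explicit minimal obstruction when none exists.
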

\begin{proof}
For the "only if" part, consider a badly oriented cycle  $(v_0,v_1,\ldots,v_{n-1})$ 
with arcs $v_{n-1}v_0$ and $v_0v_{1}$, but with no vertex $v_j\neq v_0$
whose incident arcs are $v_{j+1}v_j$ and $v_jv_{j-1}$. This latter condition
implies that in any homogeneous labeling the sequence of labels for the edges (without considering their 
orientation) $v_0v_1, v_1v_2, \ldots ,v_{n-2}v_{n-1},v_{n-1}v_0$ is non-decreasing, while the former 
condition implies that the label of $v_0v_1$ is greater than the one of $v_{n-1}v_0$, a contradiction.
Thus this orientation of $G$ does not allow any homogeneous labeling.

For the "if" part, consider a graph $G$ oriented without badly oriented cycle, and 
consider a source $u$, and let us denote $v_1,\ldots,v_n$ its out-neighbors. If for every vertex 
$v_i$, $u$ is its unique in-neighbor, then by recurrence on the number of vertices we assume
that $G\setminus \{u\}$ has a homogeneous labeling, and we label the arcs incident to $u$ 
with a sufficiently small value, say $-\Omega$. In that case it is easy to check that this
labeling is homogeneous. 

Otherwise, let $v_i$ and $u'$ be vertices such that $G$ has arcs from both $u$ and $u'$ 
toward vertex $v_i$. In that case, consider the oriented graph $G'$ obtained from 
$G\setminus \{u\}$ by adding the arcs $u'v_1, \ldots , u'v_n$, if missing. 
\begin{claim}
$G'$ has no badly oriented cycle.
\end{claim}
\begin{proof}
If $G'$ had a badly oriented cycle $C$, this one should go through a newly added 
arc $u'v_j$. If $v_i\notin C$, by replacing the arc $u'v_j$ by the path $(u',v_i, u, v_j)$ 
one would obtain a badly oriented cycle in $G$, a contradiction. We thus assume that
$v_i\notin C$, and now by replacing the arc $u'v_j$ by the path $(u',v_i, u, v_j)$ we
obtain a badly oriented closed walk $W$ (that is a walk where there are consecutive "forward" arcs, 
but no consecutive "backward" arcs). Let us denote $P$ and $P'$ the sub-paths of 
$C\setminus \{u'v_j\} \subsetneq G$ linking $v_i$ and $v_j$, and linking $u'$ and $v_i$, respectively. 

Let us show that if the edge incident to $v_i$ in $P'$ is oriented from $v_i$ to the other end, 
denoted $v$, then this arc is backward with respect to $C$. Indeed, the cycle $C_{P'}$ of $G$ 
formed by $P'$ and the arc $u'v_i$, has consecutive arcs oriented in the same direction, 
$u'v_i$ and $v_iv$, and (as $G$ contains no badly oriented cycles) 
has consecutive arcs oriented in the other direction. The latter pair of arcs belonging both 
to $P'\subset C$, they are forward with respect to $C$, thus $v_iv$ is backward.

Similarly, let us show that if the edge incident to $v_i$ in $P$ is oriented from $v_i$ to the 
other end, denoted $w$, then this arc is backward with respect to $C$. Indeed, they cycle $C_{P}$ 
of $G$ formed by $P$ and the arcs $uv_i$and $uv_j$, has consecutive arcs oriented in the same 
direction, $uv_i$ and $v_iw$, and (as $G$ contains no badly oriented cycles) 
has consecutive arcs oriented in the other direction. The latter pair of arcs belong both 
to $P\subset C$, or they are the arcs incident to $v_j$.
In the former case, these arcs are forward with respect to $C$, thus $v_iv$ is backward.
In the latter case, replacing $uv_j$ with $u'v_j$, one has that the incident arcs of $v_j$ in $C$
are oriented in the same direction. this direction is thus the forward direction, and in that case 
also $v_iv$ is backward.

We thus have that the arcs incident to $v_i$ cannot be oriented in the same direction 
(they would form consecutive backward arcs in $C$), and they are not both oriented from
$v_i$ to the other end (they would be both backwards although they have distinct directions).
Now we distinguish cases according to the position of the consecutive forward 
arcs in $C$. We have that:
\begin{itemize}
    \item[a)] there are two consecutive forward arcs in $P\cup\{u'v_j\}$, or
    \item[b)] there are two consecutive forward arcs in $P'\cup\{u'v_j\}$.
\end{itemize}
In case a), the cycle $C_P$ of $G$ has consecutive forward arcs (by replacing if necessary the arc 
$u'v_j$ with $uv_j$). Since this cycle is not badly oriented it also contains consecutive 
backward arcs. According to the orientation of the arcs, those backwards arcs cannot be the arcs 
incident to $u$, or those incident to $v_i$. Thus they belong both to $P\cup\{uv_j\}$, 
but this would imply that $C$ also contains consecutive backward arcs, a contradiction.

In case b), the cycle $C_{P'}$ of $G$ has consecutive forward arcs (by replacing if necessary the arc $u'v_j$ with $u'v_i$). Since this cycle is not badly oriented it also contains consecutive 
backward arcs. According to the orientation of the arcs, those backwards arcs cannot be the arcs incident to $v_i$. This would imply that $C$ also contains consecutive backward arcs, a contradiction.

This concludes the proof of the claim
\end{proof}

So now, by recurrence on the number of vertices we can assume
that $G'$ has a homogeneous labeling, and let $\ell$ be the label of the arcs outgoing from $u'$. 
In that case one can derive a labeling of $G$ by keeping the same labels, and by setting the
label $\ell$ for the arcs outgoing from $u$. It is easy to check that this labeling is homogeneous. 
\end{proof}

Note that Theorem~\ref{thm:labelable-orientation} provides another proof that CBU contains every bipartite graph. Indeed, orienting all the edges from one part toward the other, the direction of the arcs alternate along any cycle, and so there is no badly oriented cycle. Actually, we can go a little further.
\begin{theorem}\label{thm:CBU-chi-c}
Every graph $G$ with circular chromatic number $\chi_c(G) \le 5/2$ belongs to CBU.
\end{theorem}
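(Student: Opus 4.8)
The plan is to reduce membership in CBU to the existence of a homogeneous arc labeling (the equivalence $a)\Leftrightarrow b)$ of Theorem~\ref{thm:CBU_as_SHIFT}), and to produce such a labeling by transporting one along a graph homomorphism. First I would reinterpret the hypothesis: it is classical that $\chi_c(G)\le 5/2$ holds if and only if $G$ admits a homomorphism to the circular clique $K_{5/2}$, which is isomorphic to the $5$-cycle $C_5$. So I would fix a homomorphism $\phi\colon G\to C_5$ and write $C_5=z_0z_1z_2z_3z_4$.

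The second step is to exhibit a homogeneous arc labeling of $C_5$ itself, using the "good orientation" of a $5$-cycle provided by Claim~\ref{cl:45-cycles} (one source and one sink joined by directed paths of lengths $2$ and $3$). Concretely, orient $z_0\to z_1\to z_2$ and $z_0\to z_4\to z_3\to z_2$, and label $\ell(z_0z_1)=\ell(z_0z_4)=0$, $\ell(z_4z_3)=1$, and $\ell(z_1z_2)=\ell(z_3z_2)=2$. A one-line check at each of the five vertices confirms that all ingoing arcs share a label, all outgoing arcs share a (strictly larger) label.

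The crux is the third step: pull this labeling back along $\phi$. For every edge $uv$ of $G$ the pair $\{\phi(u),\phi(v)\}$ is an edge of $C_5$; I orient $uv$ the way $\phi$ orients its image and copy the corresponding label. I would then verify that this is a homogeneous arc labeling of $G$. At a vertex $w$ with $\phi(w)=z$, every outgoing arc of $w$ maps to an outgoing arc of $z$ and hence inherits the common out-label of $z$, every incoming arc inherits the common in-label of $z$, and in $C_5$ the in-label is strictly below the out-label; so the three homogeneity conditions hold at $w$. Acyclicity is automatic, since along each arc $u\to v$ the out-label of $\phi(u)$ equals the in-label of $\phi(v)$, which is strictly less than the out-label of $\phi(v)$, so this quantity strictly increases along every directed path. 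By Theorem~\ref{thm:CBU_as_SHIFT} this gives $G\in$ CBU.

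The main obstacle is that the obvious orientations fail: orienting each edge toward the larger colour, or along the "$+1$" direction of $C_5$, turns a cycle whose image winds once around $C_5$ into a directed cycle, or into a cycle carrying two consecutive equally-directed arcs in only one sense, i.e.\ a badly oriented cycle in the sense of Theorem~\ref{thm:labelable-orientation}. The idea that removes this obstacle is to not build the orientation of $G$ from scratch but to inherit it from an already-good labeling of $C_5$; the oddness of $C_5$ is then absorbed entirely into the finite base check of the second step, and the general principle "a homomorphic image that admits a homogeneous arc labeling pulls one back to the source" does the rest.
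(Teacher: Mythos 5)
Your proof is correct and follows essentially the same route as the paper: the paper likewise uses that $\chi_c(G)\le 5/2$ gives a homomorphism into $K_{5/2}\cong C_5$, notes that $C_5\in$ CBU, and concludes by Theorem~\ref{thm:CBU-homo}, whose proof is exactly your pullback of a homogeneous arc labeling along the homomorphism. You have merely inlined that lemma and made the homogeneous arc labeling of $C_5$ explicit, which matches the paper's argument in substance.
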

\begin{proof}
A graph $G$ with circular chromatic number $\chi_c(G) \le 5/2$ has a
homomorphism into the circular complete graph $K_{5/2}$ that is the 5-cycle. 
As this graph belongs to CBU the theorem follows from Theorem~\ref{thm:CBU-homo}.
\end{proof}

Note that we cannot replace $5/2$ by $8/3$ in Theorem~\ref{thm:CBU-chi-c}, as one can easily check that 
every orientation of $K_{8/3}$ contains a badly oriented cycle.
\begin{problem}
What is the largest $c$ such that every graph $G$ with $\chi_c(G) \le c$ 
(or with $\chi_c(G) < c$) belongs to CBU.
\end{problem}
\begin{theorem}\label{thm:CBU-homo}
Given two graphs $G, H$ such that there is an homomorphism $\gamma\ :\ V(G) \longrightarrow V(H)$,
then if $H \in$ CBU we have that $G \in$ CBU.
\end{theorem}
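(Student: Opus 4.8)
The plan is to use the characterization of Theorem~\ref{thm:CBU_as_SHIFT}, namely that a graph belongs to CBU if and only if it admits a homogeneous arc labeling, and to \emph{pull back} such a labeling of $H$ along the homomorphism $\gamma$. First I would fix a homogeneous arc labeling of $H$: an acyclic orientation together with a labeling $\ell_H$ of its arcs such that, at every vertex of $H$, all outgoing arcs share a common label, all incoming arcs share a common (strictly smaller) label. Since $H\in\,$CBU, such a labeling exists by the equivalence $a)\Leftrightarrow b)$ of Theorem~\ref{thm:CBU_as_SHIFT}.

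Next I would transport this structure to $G$. For each edge $uv$ of $G$, the homomorphism guarantees that $\gamma(u)\gamma(v)$ is an edge of $H$, and $\gamma(u)\neq\gamma(v)$ since $H$ is loopless; hence $\gamma(u)\gamma(v)$ carries a well-defined orientation and label in the fixed labeling of $H$. I orient $uv$ from $u$ to $v$ exactly when $\gamma(u)\gamma(v)$ is oriented from $\gamma(u)$ to $\gamma(v)$, and I assign to $uv$ the same label $\ell_H(\gamma(u)\gamma(v))$. This is well defined, as every edge of $G$ maps to a single oriented, labeled edge of $H$.

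Then I would check that this orientation and labeling of $G$ is homogeneous. The key observation is that $\gamma$ sends every arc outgoing at a vertex $u$ to an arc outgoing at $\gamma(u)$, and every arc incoming at $u$ to an arc incoming at $\gamma(u)$. Consequently all outgoing arcs at $u$ inherit the single outgoing label of $\gamma(u)$, all incoming arcs at $u$ inherit the single incoming label of $\gamma(u)$, and whenever $u$ has both, these two labels are correctly ordered because the corresponding inequality holds at $\gamma(u)$. Acyclicity then comes for free: along any directed path of $G$ the labels strictly increase (at each internal vertex the incoming label is strictly below the outgoing one), so a directed cycle would force some label to be strictly smaller than itself, a contradiction.

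I do not expect a genuine obstacle: the whole content is that ``admitting a homogeneous arc labeling'' is a property preserved under pullback along graph homomorphisms. The only points that require a word of care are that adjacent vertices of $G$ are never identified by $\gamma$ (loopless-ness of $H$) and that the in/out role of each arc at $u$ is faithfully reflected at $\gamma(u)$, both of which are immediate. Having produced a homogeneous arc labeling of $G$, Theorem~\ref{thm:CBU_as_SHIFT} yields $G\in\,$CBU, completing the argument.
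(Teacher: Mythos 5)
Your proposal is correct and follows essentially the same route as the paper: pull back a homogeneous arc labeling of $H$ along $\gamma$ by orienting each edge $uv$ of $G$ as $\gamma(u)\gamma(v)$ and setting $\ell_G(uv)=\ell_H(\gamma(u)\gamma(v))$, then invoke the equivalence $a)\Leftrightarrow b)$ of Theorem~\ref{thm:CBU_as_SHIFT}. The only difference is that you spell out the verification (homogeneity at each vertex and acyclicity via strictly increasing labels along directed paths) that the paper dismisses with ``one can easily check.''
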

\begin{proof}
By Theorem~\ref{thm:CBU_as_SHIFT}, the graph $H$ admits a homogeneous arc labeling, $\ell_H$.
Orient the edges of $G$ in such a way that $uv\in E(G)$ is oriented as the edge
$\gamma(u)\gamma(v)\in E(H)$, that is from $u$ to $v$ if and only if
$\gamma(u)\gamma(v)$ is oriented from $\gamma(u)$ to $\gamma(v)$ in $H$.
Similarly we copy the labeling of $H$'s arcs by setting $\ell_G(uv)=\ell_H(\gamma(u)\gamma(v))$. One can easily check that this is a homogeneous arc labeling of $G$,
and thus that $G$ belongs to CBU.
\end{proof}

\section{Chromatic Number and Independent Sets}

While 2-CBU graphs have chromatic number at most 3 (by Gr\"otzsch's theorem), 3-CBU graphs have unbounded chromatic number. 
\begin{theorem}[Magnant and Martin~\cite{MagnantM11}]
For any $\chi \ge 1$, there exists a graph in 3-CBU with chromatic number $\chi$.
\end{theorem}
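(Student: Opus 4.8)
The plan is to exhibit, for each $k$, an explicit graph $G_k$ realizable as floor--ceiling contacts of interior-disjoint boxes in $\RR^3$ (hence $3$-CBU) and satisfying $\chi(G_k)\ge k$. Triangle-freeness comes for free from the claim that every $d$-CBU graph is triangle-free, so the entire difficulty is to force the chromatic number up while staying inside three dimensions. The natural triangle-free family with unbounded chromatic number is the shift graph $H_m$ (whose chromatic number grows like $\log_2 m$), but Theorem~\ref{thm:D_n} only places $H_m$ in $(m-1)$-CBU, and in fact $H_m$ does not seem to be $3$-CBU: its vertex $(i,j)$ must touch \emph{every} box ending at level $i$ and \emph{every} box starting at level $j$, which would force its footprint to cross one full grid of strips from below and an orthogonal full grid from above, impossible for a single fixed rectangular cross-section. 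So a genuinely three-dimensional, recursive construction is needed rather than a direct appeal to shift graphs.

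\textbf{Geometric setup.} I would read $e_1$ as height: each box is the product of a height interval (along $e_1$) with a rectangular \emph{footprint} in the $(e_2,e_3)$-plane, and two boxes are adjacent exactly when the top face of one lies at the same height as the bottom face of the other and their footprints overlap in a $2$-dimensional region. The key feature, already visible in shift graphs, is that a box may span several height levels, so that it can touch boxes both below and above it. This is what lets the contact graph contain odd cycles and escape bipartiteness: a configuration in which contacts occur only between consecutive levels is automatically bipartite (partition the boxes by the parity of their level) and therefore useless for forcing many colors.

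\textbf{Recursion and lower bound.} Starting from $G_1$ (a single box), I would build $G_{k+1}$ from many disjoint copies of $G_k$ placed side by side with pairwise disjoint footprints over a common height range, and then add a top layer of spanning ``connector'' boxes reaching across the copies so as to force one extra color in every proper coloring. The chromatic lower bound would be proved by induction on $k$: assuming a proper coloring of $G_{k+1}$ with only $k$ colors, a pigeonhole argument over the copies yields two copies whose boundary boxes receive the same color pattern, and a connector box then necessarily realizes a monochromatic contact, a contradiction. Interior-disjointness and the fact that every contact is orthogonal to $e_1$ would be verified directly from the explicit placement of the footprints and height intervals, and triangle-freeness is inherited from the general claim, so each $G_k$ is a legitimate $3$-CBU graph of chromatic number at least $k$.

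\textbf{Main obstacle.} The hard part is reconciling two requirements that pull against each other: a chromatic-boosting step needs enough new adjacencies, yet every new adjacency must be realized as a floor--ceiling contact of interior-disjoint boxes using only the two planar dimensions $e_2,e_3$, where each box is locked to a single footprint over its whole height range. The shift-graph obstruction described above is the prototype of this tension, and the content of the theorem is precisely that the recursion can be arranged so that the chromatic number grows without bound while this rigid three-dimensional realizability is preserved. This is the ingenious point of the Magnant--Martin construction, which one may either reconstruct along the lines sketched here or invoke directly from~\cite{MagnantM11}.
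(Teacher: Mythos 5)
The paper itself contains no proof of this statement: it is quoted verbatim as a theorem of Magnant and Martin, with \cite{MagnantM11} as the sole justification. So your closing fallback --- invoking \cite{MagnantM11} directly --- coincides exactly with what the paper does, and on that reading there is nothing to verify. The substance of your proposal is therefore the attempted reconstruction, and that reconstruction has a genuine gap at its only nontrivial step, the inductive coloring argument.

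As stated --- pigeonhole yields two copies of $G_k$ whose boundary boxes receive the same color pattern, whereupon ``a connector box then necessarily realizes a monochromatic contact'' --- the argument produces no contradiction at all: a connector adjacent to two identically patterned copies simply takes any color absent from its neighborhood, which a proper $k$-coloring permits by definition; two copies looking alike refutes nothing. A Zykov-type boost requires a connector that is \emph{forced} to see all $k$ colors among its neighbors, and arranging this needs either a strengthened induction hypothesis (e.g., that every proper $k$-coloring of $G_k$ makes some prescribed set of ``roof'' boxes rainbow, in many disjoint positions) or one connector for each of the exponentially many transversal patterns --- and then every such connector must be realized as an interior-disjoint box in $\mathbb{R}^3$ whose contacts are exactly the prescribed top-face/bottom-face incidences, with each box locked to a single rectangular footprint in the dimensions $e_2,e_3$ over its whole height range. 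That geometric realization is precisely the content of the Magnant--Martin construction; your ``main obstacle'' paragraph names this difficulty but does not resolve it, so as a standalone proof the sketch is incomplete. Two minor points: the statement asks for chromatic number \emph{exactly} $\chi$, which does follow from $\chi(G_k)\ge k$ because $3$-CBU is closed under induced subgraphs (delete boxes) and deleting a vertex lowers $\chi$ by at most one; and your claim that $H_m$ is not $3$-CBU is unsubstantiated speculation --- the paper only proves $H_m\in(m-1)$-CBU and nowhere rules out smaller dimension --- though nothing in your argument depends on it.
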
\label{thm:Magnant}
However, these graphs have bounded fractional chromatic number, and thus have 
linear size independent sets. Indeed, G. Simonyi and G. Tardos~\cite{ST11} showed that 
shift graphs have fractional chromatic number less than 4. 
As such a bound extends by adding a false twin and by taking a subgraph,
we have the following.

\begin{theorem}~\label{thm:chi_f}
For any graph $G\in$ CBU, $\chi_f(G)<4$, and $\alpha(G)> |V(G)|/4$.
\end{theorem}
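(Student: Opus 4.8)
The plan is to deduce both statements from the single fact, already cited in the surrounding text, that shift graphs satisfy $\chi_f(H_m) < 4$ (Simonyi and Tardos~\cite{ST11}). The strategy is to transport this bound along the structural description of CBU graphs furnished by Theorem~\ref{thm:CBU_as_SHIFT}, and then to convert the resulting fractional-chromatic bound into an independence-number bound by a standard inequality.

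First I would recall that, by the equivalence $a)\Leftrightarrow c)$ of Theorem~\ref{thm:CBU_as_SHIFT}, every $G\in$ CBU is a subgraph of some graph $H_m^t$ obtained from a shift graph $H_m$ by iteratively adding $t$ false twins. Thus it suffices to control $\chi_f$ under two operations: passing to a subgraph, and adding a false twin. For subgraphs, the inequality $\chi_f(H)\le \chi_f(G)$ whenever $H$ is a subgraph of $G$ is immediate, since any $b$-fold coloring with $a$ colors of $G$ restricts to one of $H$. For false twins I would argue that if $v'$ has the same neighborhood as $v$ and is non-adjacent to $v$, then assigning to $v'$ the very same color classes (the same $b$-subset of $[a]$) as $v$ yields a valid fractional coloring of the augmented graph: the neighbors of $v'$ coincide with those of $v$ and hence already avoid $v$'s colors, while $v$ and $v'$ are non-adjacent and so may share colors. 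Hence adding a false twin does not increase $\chi_f$, giving $\chi_f(H_m^t) = \chi_f(H_m) < 4$. Combining the two monotonicity facts yields $\chi_f(G)\le \chi_f(H_m^t) < 4$.

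For the independence number, I would invoke the standard inequality $|V(G)|/\alpha(G) \le \chi_f(G)$, valid for every graph; it follows directly from the LP-duality definition of $\chi_f$, or simply from the observation that a fractional cover of the vertices by independent sets, each of size at most $\alpha(G)$, must carry total weight at least $|V(G)|/\alpha(G)$. Rearranging and using $\chi_f(G) < 4$ gives $\alpha(G) \ge |V(G)|/\chi_f(G) > |V(G)|/4$, as required.

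The genuine content of the theorem lies entirely in the cited Simonyi--Tardos bound $\chi_f(H_m) < 4$; the remaining steps are routine monotonicity properties of the fractional chromatic number. Accordingly, the only point demanding a little care is the false-twin step, where one must phrase the argument at the level of $b$-fold colorings (equivalently, of fractional independent-set covers) rather than of ordinary proper colorings: it is precisely the ability of a twin to reuse its original's colors that prevents $\chi_f$ from growing, whereas this would fail for the ordinary chromatic number. Everything else is bookkeeping.
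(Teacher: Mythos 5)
Your proposal is correct and matches the paper's own (very terse) argument exactly: the paper likewise cites the Simonyi--Tardos bound $\chi_f(H_m)<4$, notes that this bound is preserved under adding false twins and taking subgraphs (hence transfers to all CBU graphs via Theorem~\ref{thm:CBU_as_SHIFT}), and implicitly uses $\alpha(G)\ge |V(G)|/\chi_f(G)$ for the independence bound. You have simply spelled out the monotonicity steps the paper leaves to the reader, and your false-twin argument at the level of $b$-fold colorings is the right way to make that step rigorous.
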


For planar graphs in CBU, this bound on $\chi_f$ can be improved by one, but not more.
\begin{theorem}~\label{thm:chi_f_planar}
For every planar graph $G$ in CBU we have $\chi_f(G) \le \chi(G) \le 3$. On the other hand, for every $n\equiv 2\ (\bmod\ 3)$ there is a $n$-vertex planar graph $G$ in CBU such that $\alpha(G)=(n+1)/3$, and thus $\chi_f(G) \ge n/\alpha(G) = 3 - \frac{3}{n+1}$.
\end{theorem}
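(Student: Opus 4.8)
The statement splits into an upper bound on $\chi_f$ and a matching family showing it is essentially sharp, and I would treat the two parts separately. The upper bound is immediate: every graph in CBU is triangle-free (the first Claim of the paper), so a planar $G \in$ CBU is a triangle-free planar graph and Gr\"otzsch's theorem gives $\chi(G) \le 3$; since $\chi_f(H) \le \chi(H)$ for every graph $H$, this yields $\chi_f(G) \le \chi(G) \le 3$, which is the first inequality.

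For the second part I would write $n = 3k+2$ and first observe that the target is forced from below: by the upper bound just proved $\chi_f(G) \le 3$, and for any graph $\alpha(G) \ge n/\chi_f(G)$, hence $\alpha(G) \ge n/3 = k + \tfrac{2}{3}$, so $\alpha(G) \ge k+1 = (n+1)/3$ for \emph{every} planar $G \in$ CBU on $n$ vertices. The work is therefore to produce, for each such $n$, one planar graph $G_k \in$ CBU with $\alpha(G_k) \le k+1$; combined with the forced bound this gives $\alpha(G_k) = (n+1)/3$, and then $\chi_f(G_k) \ge n/\alpha(G_k) = (3k+2)/(k+1) = 3 - \tfrac{1}{k+1} = 3 - \tfrac{3}{n+1}$, since $n+1 = 3(k+1)$.

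To build the family I would start from $C_5$, which is planar, lies in CBU, and has $\alpha(C_5) = 2 = (5+1)/3$ (the case $k=1$), and construct $G_{k+1}$ from $G_k$ by attaching, inside a face of the current planar embedding, a gadget of three new vertices glued along a short path of $G_k$ so that it closes up into new $5$-cycles, arranging that (i) planarity is preserved and (ii) the independence number goes up by exactly one. The delicate point is that the convenient sufficient condition $\chi_c \le 5/2$ (Theorem~\ref{thm:CBU-chi-c}) is \emph{unavailable}: since $\chi_f(G_k) \to 3$, these graphs admit no homomorphism to $C_5$. CBU membership must instead be certified directly, by exhibiting a homogeneous arc labeling (Theorem~\ref{thm:CBU_as_SHIFT}), equivalently an acyclic orientation with no badly oriented cycle (Theorem~\ref{thm:labelable-orientation}). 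Concretely, I would fix a global source/sink orientation inherited from the base $C_5$ and extend it across each gadget so that every newly created $4$- and $5$-cycle receives the one-source/one-sink pattern of Claim~\ref{cl:45-cycles}, propagate the labels along the orientation, and verify that no cycle becomes badly oriented.

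The main obstacle is controlling the two quantitative features \emph{simultaneously}: staying in CBU while pushing the independence ratio all the way down to $1/3$. These goals pull against each other, because the extremal graphs are precisely those with $\chi_f$ near $3$, forcing a relatively dense triangle-free planar graph in which each $4$-cycle shares edges with several $5$-cycles — exactly the regime where a homogeneous arc labeling is hardest to keep, since a careless orientation of a shared $2$-path drives a $5$-cycle into paths of lengths $1$ and $4$, which is illegal. I would therefore spend most of the proof on a careful inductive description of the gadget together with a compatible orientation, and on the upper bound $\alpha(G_{k+1}) \le \alpha(G_k) + 1$, which I expect to establish by showing that any independent set meeting the three new vertices in more than one of them can be modified, without losing size, into one using at most one of them, thereby reducing to $G_k$.
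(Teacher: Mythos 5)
Your road map coincides with the paper's: Gr\"otzsch's theorem plus triangle-freeness of CBU gives $\chi_f(G)\le\chi(G)\le 3$, and the extremal family is to be built inductively from $C_5$ by adding three vertices per step, certified in CBU by an explicit homogeneous arc labeling (you are right that Theorem~\ref{thm:CBU-chi-c} is unavailable once $\chi_f>5/2$, since a homomorphism to $C_5$ would force $\chi_f\le 5/2$). Your preliminary reduction is a nice touch not in the paper: since $\chi_f(G)\ge n/\alpha(G)$ always, the bound $\chi_f\le 3$ already forces $\alpha(G)\ge\lceil n/3\rceil=(n+1)/3$ for $n\equiv 2 \pmod 3$, so only the upper bound $\alpha\le (n+1)/3$ needs a construction. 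But that is exactly where your proposal stops: the gadget, its orientation and labeling, and the increment $\alpha(G_{k+1})\le\alpha(G_k)+1$ are all deferred to ``most of the proof,'' and they are the entire mathematical content. The paper discharges all three at once by citing the graphs of Jones~\cite{JonesJCTB} --- $J_1=C_5$ and $J_{i+1}$ obtained from $J_i$ by adding $a_{i+1},b_{i+1},c_{i+1}$ with $N(a_{i+1})=\{b_i,b_{i+1}\}$, $N(b_{i+1})=\{a_{i+1},c_{i+1}\}$, $N(c_{i+1})=\{a_i,c_i,b_{i+1}\}$ --- for which planarity and $\alpha=(n+1)/3$ are already proved in the literature; the only new work is then to write down a homogeneous arc labeling (labels $2i,2i,2i+1$ on the arcs leaving level $i$), which the paper does explicitly.

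The concrete gap is your independence-number step. Your proposed local exchange --- any independent set meeting the three new vertices twice can be modified, without losing size, to meet them at most once --- does not go through for the kind of gadget you need: in Jones's gadget the new vertices $a_{i+1}$ and $c_{i+1}$ are \emph{non-adjacent}, so an independent set may legitimately contain both; all you then know is that it avoids $a_i,b_i,c_i$, and you must show that a maximum independent set of $G_k$ avoiding the whole attachment path has size at most $\alpha(G_k)-1$. That requires strengthening the induction hypothesis (this is the substance of Jones's argument), not a size-preserving exchange. A second, smaller caveat: arranging the one-source/one-sink pattern of Claim~\ref{cl:45-cycles} on every new $4$- and $5$-cycle is necessary but not sufficient for CBU membership --- the paper notes, and Lemma~\ref{lem-planar-not-CBU} illustrates, that orientations satisfying all the short-cycle conditions may still admit no homogeneous labeling --- so you must either exhibit the labeling outright, as the paper does, or verify the global no-badly-oriented-cycle condition of Theorem~\ref{thm:labelable-orientation}. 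In short: right strategy, same as the paper's in outline, but as written it is a plan whose completion amounts to rediscovering Jones's construction and its labeling, or citing it.
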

\begin{proof}
The first statement follows from Gr\"otzsch's theorem. The second statement follows from graphs constructed by Jones~\cite{JonesJCTB}, which were proved to have independence number 
$\alpha(G)=(n+1)/3$. Those graphs form a sequence $J_1,J_2,\ldots$ such that $J_1$ is the 5-cycle $(a_1,b_1,c_1,d,e)$, and such that 
$J_{i+1}$ is obtained from $J_i$ by adding three vertices $a_{i+1},b_{i+1},c_{i+1}$ such that $N(a_{i+1})=\{b_i,b_{i+1}\}$, $N(b_{i+1})=\{a_{i+1},c_{i+1}\}$, and $N(c_{i+1}) = \{a_i,c_i,b_{i+1}\}$
(see Figure~\ref{fig:Jones}). It is already known that those graphs are planar, and it does only remain to show that they belong to CBU. Let us do so by exhibiting a homogeneous arc labeling $\ell$. This labeling is such that for any $i\ge 1$ we orient the edges $a_ib_i$ and $b_ic_i$ toward $b_i$, we orient the edges $x_iy_{i+1}$, for $x,y\in\{a,b,c\}$, from $x_i$ towards $y_{i+1}$, and we set $\ell(a_ib_i) = \ell(a_ic_{i+1)} = 2i$, $\ell(c_ib_i)=\ell(c_ic_{i+1)}=2i$, and $\ell(b_ia_{i+1)}=2i+1$.
By examining Figure~\ref{fig:Jones} it is clear that this is a homogeneous arc labeling.
\end{proof}
\begin{figure}
    \centering
    \input{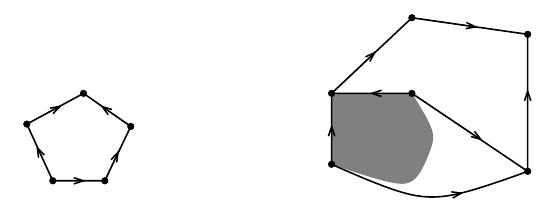_t}
    \includegraphics{Jones}
    \caption{The Jones graphs $J_1$ and $J_{i+1}$, with a homogeneous arc labeling. For every $i\ge 1$, this embedding is such that the path $a_ib_ic_i$ is on the outer-boundary. Thus, adding vertices $a_{i+1},b_{i+1},c_{i+1}$ does not break planarity.}
    \label{fig:Jones}
\end{figure}

Although $2$-CBU lies in the intersection of CBU and planar graphs, it might be the case that the fractional chromatic number of graphs in $2$-CBU is bounded by some $c < 3$. Indeed, Jones graphs $J_i$, for a sufficiently large $i$, seem to not be in $2$-CBU.
\begin{problem}
Is there a $c<3$ such that every graph $G$ in $2$-CBU has fractional chromatic number $\chi_f(G) \le c$ ?
\end{problem}
A positive answer to this question, would give support to two conjectures. Let $\mathcal{P}_{g\ge 5}$ be the set of planar graph with girth at least five, and let $\mathcal{P}^f_{g\ge 4}$ be the set of planar graph with girth at least four, where every 4-cycle bounds a face.
Clearly $\mathcal{P}_{g\ge 5} \subsetneq \mathcal{P}^f_{g\ge 4}$, since these classes avoid Jones graphs it is conjectured that graphs in $\mathcal{P}_{g\ge 5}$, or more generally graphs in $\mathcal{P}^f_{g\ge 4}$, have fractional chromatic number at most $c$, for some $c<3$~\cite{conj-frac-2,conj-frac-1}. However, our problem is not a sub-case of these conjectures (as $K_{2,t}$ belongs to $2$-CBU $\setminus\ \mathcal{P}^f_{g\ge 4}$), nor a super-case (as $\mathcal{P}_{g\ge 5}\setminus 2$-CBU is not empty, by Theorem~\ref{thm:W'_g}).

\section{Computational hardness for many problems}\label{sec:complexity}

We have seen (c.f. Theorem~\ref{thm:box-proper-CBU}, Corollary~\ref{cor:box-CBU}, and 
Corollary~\ref{cor:planar-subd}) that many 1-subdivided graphs belong to CBU, or even to 3- or 4-CBU.
For $(\ge 2)$-subdivided graphs, the picture is even simpler.
\begin{theorem}\label{thm:2-3-subd}
For every graph $G$, if we subdivide every edge at least twice, the obtained graph belongs to 3-CBU.
\end{theorem}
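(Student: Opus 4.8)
The plan is to build an explicit $3$-CBU representation in $\RR^3$, using the direction $e_1$ as a ``height'' along which all contacts occur, so that orienting each edge from its lower box to its upper box yields an acyclic orientation whose arc labels are exactly the $e_1$-coordinates of the contacts. First I would fix such an orientation combinatorially. Number the original vertices $v_1,\dots,v_n$ and, for an edge $v_kv_i$ with $k<i$ subdivided into a path $v_k,c_1,\dots,c_s,v_i$ (with $s\ge 2$), orient it as $v_k\to c_1\to\cdots\to c_s\leftarrow v_i$, so that $c_s$ is a sink (``peak'') adjacent to the larger endpoint. Assigning to every arc out of $v_k$ the label $k$, and choosing strictly increasing labels $k<m_1<\cdots<m_{s-1}=i$ along the climbing path (with the two arcs entering $c_s$ both labelled $i$), one checks this is a homogeneous arc labeling in which every original vertex is a source and every peak $c_s$ is a sink. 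This is where $s\ge 2$ is essential: it guarantees that $c_s$ is distinct from $c_1$, so that among the two boxes meeting $c_s$ from below one is a subdivision box we fully control, rather than being forced to give the endpoint boxes $v_k$ and $v_i$ a common height.

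Next I would turn the labeling into geometry by setting the $e_1$-interval of each box to run from its in-label to its out-label, so that an arc with label $\lambda$ becomes a contact at height $e_1=\lambda$; every intended contact then aligns automatically. Concretely $v_i$ gets $e_1\in[i-1,i]$, the climbing boxes of $v_kv_i$ tile $e_1\in[k,i]$ according to the $m_t$, and the peak $c_s$ gets $e_1\in[i,i+1]$. It then remains only to choose the $(e_2,e_3)$-footprints so that the contact graph is exactly the subdivided graph.

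For the footprints I would let each original vertex $v_i$ be thin in the $e_2$-direction but span the whole range in $e_3$, say $e_2\in[2i,2i+1]$ and $e_3\in[0,N]$; distinct $v_i,v_j$ then never meet since their $e_2$-intervals are disjoint. Each edge is given its own disjoint slab in the $e_3$-direction, and its whole path is routed inside that slab, which reduces the problem to a planar routing in the $(e_1,e_2)$-plane: a vertical column formed by $c_1,\dots,c_{s-1}$ placed directly above $v_k$ at $e_2\in[2k,2k+1]$ and climbing from height $k$ to height $i$, capped by the peak $c_s$ at height $e_1\in[i,i+1]$, realized as a single horizontal connector stretching in $e_2$ from above the column ($e_2=2k$) across to above $v_i$. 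Because two boxes at interior-disjoint heights may freely overlap in footprint, making the turn from the column to the connector costs a height level rather than area, which is exactly what the second subdivision vertex buys us.

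The main obstacle is purely the bookkeeping of \emph{spurious} contacts, and I would devote the bulk of the verification to it. Distinct edges cannot interact because they live in disjoint $e_3$-slabs; within a slab one must check that the climbing column, lying above $v_k$ at $e_2\in[2k,2k+1]$, avoids every intermediate vertex $v_j$ (which sits at a different $e_2$-position), and that the horizontal connector at height $[i,i+1]$ does not touch any vertex $v_j$ it passes over: those all lie at heights $j\le i-1<i$, hence strictly below the connector, so no contact occurs. The only delicate case is $j=i+1$, where $v_{i+1}$ shares the height interval $[i,i+1]$ with the connector; here disjointness is recovered from the $e_2$-coordinates, since $v_{i+1}$ sits at $e_2\ge 2i+2$ while the connector stops at $e_2=2i+1$. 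Checking these finitely many local adjacencies, together with the fact that the chosen $e_1$-intervals force contacts exactly at the labelled arcs, completes the construction and shows the subdivided graph is $3$-CBU.
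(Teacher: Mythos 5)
Your construction is correct and takes essentially the same approach as the paper's proof: both give explicit coordinates for a $3$-CBU representation in which every original vertex is a long box spanning the third dimension, each subdivision path is confined to its own private $e_3$-slab, and the path is realized as an L-shape consisting of a straight stack of boxes along the contact direction $e_1$ plus a single ``turn'' box, which is exactly what the second subdivision vertex pays for. The only difference is cosmetic: the paper turns first at the lower endpoint and then runs along the higher endpoint's row (so each path is oriented monotonically from one endpoint to the other), whereas you run first above the lower endpoint and turn at the top (so original vertices are sources and each peak is a sink) --- mirror images of the same argument.
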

\begin{proof}
Let us denote $v_1,\ldots,v_n$ the vertices of $G$, and let $m=|E(G)|$. To construct a CBU 
representation for any $(\ge 2)$-subdivision, we start by assigning each vertex $v_i$ to the 
box $[3i,3i+1]\times [n-i,n-i+1]\times[0,2m]$. Then consider each edge $e$ of $G$ in any given 
order. For the $k^\text{th}$ edge $e$ assume it links $v_i$ and $v_j$, for some $i<j$, and assume 
$e$ is replaced by the path $(v_i,u_1,\ldots,u_r,v_j)$ for some $r\ge 2$. Here, $u_1$ is assigned to 
$[3i+1,3i+2]\times [n-j,n-i+1]\times[2k-1,2k]$, while the vertices $u_\ell$ with $2\le \ell \le r$
are assigned to $[3i+2+(\ell-2)(3j-3i-2)/(r-1),3i+2+(\ell-1)(3j-3i-2)/(r-1)]\times [n-j,n-j+1]\times[2k-1,2k]$ (see Figure~\ref{fig:2-3-subd}). One can easily check that the obtained representation is a 3-CBU representation of the subdivided graph.
\end{proof}
\begin{figure}
    \centering
    \includegraphics{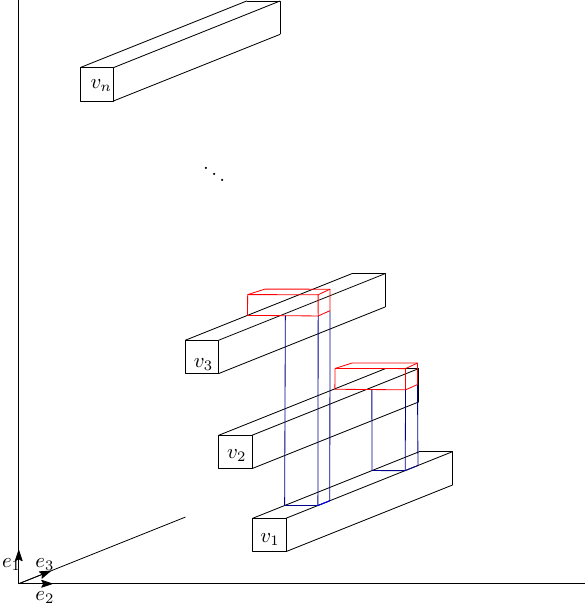}
    \caption{Construction of a 3-CBU representation of a $2$-subdivision of a graph.}
    \label{fig:2-3-subd}
\end{figure}

\begin{corollary}\label{cor:3-CBU-apx}
The problems of \textsc{Minimum Feedback Vertex Set} and 
\textsc{Cutwidth} are NP-hard, even when restricted to 3-CBU graphs.
The problems \textsc{Maximum Cut}, \textsc{Minimum Vertex Cover}, \textsc{Minimum Dominating Set}, and \textsc{Minimum Independent Dominating Set} are APX-hard, even when restricted to 3-CBU graphs.
\end{corollary}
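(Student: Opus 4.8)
The plan is to derive every hardness statement from Theorem~\ref{thm:2-3-subd}: for each of the six problems we reduce from a graph $G$ on which the problem is already known to be hard, using the graph $G'$ obtained by subdividing every edge of $G$ exactly twice. By Theorem~\ref{thm:2-3-subd}, $G'$ belongs to 3-CBU, so it suffices to control how the optimum of each problem changes when passing from $G$ to $G'$. For the two NP-hardness claims a value-preserving (or value-controlled) reduction is enough, whereas for the four APX-hardness claims we must exhibit an $L$-reduction, i.e. an affine relation between the optima together with a way to pull back a solution of $G'$ to a solution of $G$ of controlled cost.

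First the NP-hard problems. For \textsc{Minimum Feedback Vertex Set}, subdividing an edge only inserts degree-2 vertices and leaves the cycle structure unchanged up to the obvious bijection; a standard exchange argument shows that any subdivision vertex used in a minimum solution can be swapped for an original endpoint, so the feedback vertex number is unchanged, $\mathrm{fvs}(G')=\mathrm{fvs}(G)$, and NP-hardness transfers verbatim. For \textsc{Cutwidth}, I would invoke the known fact that a fixed-length subdivision changes the cutwidth in a controlled (essentially additive/linear) way within a linear layout, so that the NP-completeness of \textsc{Cutwidth} on bounded-degree graphs is preserved by the 2-subdivision; the reduction is polynomial since each edge is subdivided a constant number of times.

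For the APX-hard problems I start from instances of bounded maximum degree, where \textsc{Maximum Cut} and \textsc{Minimum Vertex Cover} are already APX-hard. Subdividing an edge $uv$ into a path $uabv$ of length three contributes at most three cut edges, exactly three iff $u$ and $v$ lie on opposite sides and otherwise at most two; choosing the colors of $a,b$ optimally gives $\mathrm{maxcut}(G') = 2|E(G)| + \mathrm{maxcut}(G)$. Similarly a minimum vertex cover needs exactly one interior vertex of each such path in addition to a cover of $G$, giving $\mathrm{vc}(G') = |E(G)| + \mathrm{vc}(G)$. In both cases, restricting a solution of $G'$ to $V(G)$ yields a solution of $G$ whose cost differs from the optimum by at most the gap in $G'$, so the parameter $\beta=1$; and since $\mathrm{maxcut}(G)\ge |E(G)|/2$ while $\mathrm{vc}(G)\ge |E(G)|/\Delta$ on bounded-degree instances, the additive terms $2|E(G)|$ and $|E(G)|$ are bounded by a constant times the optimum, supplying the $L$-reduction parameter $\alpha$. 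These $L$-reductions establish APX-hardness on 3-CBU graphs. For \textsc{Minimum Dominating Set} and \textsc{Minimum Independent Dominating Set} I would likewise set up $L$-reductions to the 2-subdivision, tracking how dominating the two interior vertices of each subdivided edge forces an additive number of vertices linearly related to $|E(G)|$, again starting from bounded-degree APX-hard instances.

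The main obstacle is the APX side, and specifically the domination-type problems: unlike \textsc{Maximum Cut} and \textsc{Minimum Vertex Cover}, the optimum of \textsc{Minimum Dominating Set} (and of its independent variant) does not transform under subdivision by a clean closed formula, so the delicate point is to produce the affine optimum relation together with the cost-preserving pull-back map required by the $L$-reduction, while keeping the additive term linearly bounded by the optimum by working on bounded-degree graphs. Once those affine relations are in hand, Theorem~\ref{thm:2-3-subd} supplies membership in 3-CBU and the corollary follows.
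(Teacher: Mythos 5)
Your overall strategy---membership via Theorem~\ref{thm:2-3-subd} plus transfer of hardness through the $2$-subdivision---is exactly the paper's, and several parts match its proof closely. For \textsc{Minimum Feedback Vertex Set} and \textsc{Cutwidth} the paper likewise just observes that subdividing an edge does not change the solution; for \textsc{Maximum Cut} it uses the same identity $mc(G_{2\text{-sub}}) = mc(G) + 2|E(G)|$, bounding the additive term by $mc(G_{2\text{-sub}})\ge |E(G_{2\text{-sub}})|/2 = 3|E(G)|/2$, so your restriction to bounded-degree instances is not even needed there. Your explicit $L$-reduction for \textsc{Minimum Vertex Cover} via $vc(G_{2\text{-sub}}) = vc(G) + |E(G)|$ is also sound, although the paper handles that problem by citation rather than by a direct argument.

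The genuine gap is precisely where you flagged it: \textsc{Minimum Dominating Set} and \textsc{Minimum Independent Dominating Set}. You defer the required ``affine optimum relation'' to later, but no clean relation of the form $\gamma(G_{2\text{-sub}}) = \gamma(G) + c\,|E(G)|$ exists: the $2$-subdivision of $K_3$ is $C_9$, with $\gamma(C_9)=3$, whereas $\gamma(K_3)+|E(K_3)|=4$. Domination in the subdivision does not decompose into a per-edge additive cost on top of a solution for $G$ (subdivision vertices can dominate for free across what used to be an edge), so the pull-back map your $L$-reduction needs does not exist as sketched, and building a bespoke gap-preserving reduction here is a genuine piece of work. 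The paper avoids this entirely: it cites Chleb\'{\i}k and Chleb\'{\i}kov\'a~\cite{ChebikC07}, who prove these problems (together with \textsc{Minimum Vertex Cover}) APX-hard \emph{already on} $6$-subdivided graphs; since every such graph is a $(\ge 2)$-subdivision, Theorem~\ref{thm:2-3-subd} places it in 3-CBU and APX-hardness transfers with no reduction analysis at all. Without that citation, or a replacement for it, your proof of the corollary is incomplete for these two problems.
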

\begin{proof}
For \textsc{Minimum Feedback Vertex Set} and \textsc{Cutwidth}, this follows from the fact that these problems are NP-hard, and that for any instance, subdividing an edge does not change the solution. For \textsc{Maximum Cut}, it follows from its APX-hardness and the fact that the maximum cut of a graph $G$ and its 2-subdivision $G_{2\text{-sub}}$ verify $mc(G) = mc(G_{2\text{-sub}})- 2|E(G)|$ and $3|E(G)|/2 = |E(G_{2\text{-sub}})|/2 \le mc(G_{2\text{-sub}}) \le |E(G_{2\text{-sub}})| = 3|E(G)|$. The other problems are shown APX-hard even when restricted to $6$-subdivided graphs~\cite{ChebikC07}.
\end{proof}

When restricted to 2-CBU some of these problems become simpler to handle, 
as every graph in 2-CBU is planar. 
Indeed, the \textsc{Maximum Cut} problem turns out to be polynomial time 
solvable~\cite{MCUT}, while \textsc{Minimum Vertex Cover}, 
\textsc{Minimum Dominating Set}, and \textsc{Minimum Independent Dominating Set}
admit PTAS~\cite{Baker,LiptonTarjan} (with standard techniques), such as 
\textsc{Minimum Feedback Vertex Set}~\cite{FVS}. 
However, many problems remain NP-hard when restricted to 2-CBU.

\begin{theorem}\label{thm:2-CBU-nph}
The problems \textsc{Maximum Independent Set}, \textsc{Minimum Vertex Cover}, \textsc{Minimum Dominating Set},  \textsc{Hamiltonian Path}, and \textsc{Hamiltonian cycle} are NP-complete,
even when restricted to 2-CBU graphs.
\end{theorem}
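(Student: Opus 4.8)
The plan is to prove NP-completeness by reduction from the same five problems restricted to planar graphs of maximum degree three, where all of them are already known to be NP-complete (for \textsc{Hamiltonian Path} and \textsc{Hamiltonian cycle} even on cubic planar graphs). Membership in NP is immediate once the graph is given. The entire difficulty is to arrange that the instances produced by the reduction are genuinely 2-CBU, i.e.\ that they admit a contact representation by interior-disjoint axis-parallel rectangles whose contacts are all vertical segments. Recall that in such a representation every edge is oriented left-to-right, so a 2-CBU graph carries a planar acyclic structure in which, at each rectangle, the left neighbours share one abscissa and the right neighbours another; Theorem~\ref{thm:W'_g} shows that this is strictly stronger than planarity, so the gadgets must be designed so as to avoid the enclosure obstruction exhibited there (a cycle forced to bound a face while its vertices require connections coming from the enclosed side).

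For \textsc{Maximum Independent Set} and \textsc{Minimum Vertex Cover} (which are complementary, $\tau = n-\alpha$), I would start from a subcubic planar instance $G$ and replace it by the subdivision $G'$ obtained by subdividing each edge exactly twice; this changes the parameter in the standard controlled way, $\alpha(G') = \alpha(G)+|E(G)|$, so the reduction is correct, and $G'$ is automatically triangle-free. The key lemma to prove is then that such a doubly-subdivided subcubic planar graph is 2-CBU — a strengthening, for bounded degree and with the extra subdivision, of the weaker $3$-CBU statement of Theorem~\ref{thm:2-3-subd}. I would prove it constructively: fix a planar embedding and an acyclic orientation of $G$, route every subdivided edge as a rectangle staircase that progresses monotonically rightwards while moving freely in the vertical direction, and realize each degree-three branch vertex by a rectangle fanning its neighbours into a left group and a right group. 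For \textsc{Minimum Dominating Set} I would use the same realization together with a subdivision that preserves the hardness of domination (as in the $6$-subdivisions of~\cite{ChebikC07}), adjusting the wire gadgets so that the extra degree-two vertices dominate predictably.

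For \textsc{Hamiltonian Path} and \textsc{Hamiltonian cycle} a subdivision-based reduction fails: subdividing every edge uniformly forces a Hamiltonian cycle to traverse each subdivided path entirely, which collapses the problem to deciding whether $G$ is itself a cycle. I would therefore give a direct gadget reduction from \textsc{Hamiltonian cycle} on cubic (or grid) planar graphs: each vertex is replaced by a 2-CBU vertex gadget and each edge by a 2-CBU wire gadget, engineered so that the only ways a Hamiltonian cycle can cross a gadget correspond exactly to the local choices available to a Hamiltonian cycle of the source graph, and so that the gadgets glue, along a planar embedding of the source, into a single 2-CBU graph. \textsc{Hamiltonian Path} then follows by the usual endpoint-forcing modification.

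The main obstacle, common to all cases, is the realizability lemma: proving that the constructed graph admits a vertical-contact rectangle representation. Two conditions must be verified simultaneously — that every required adjacency is realized as a vertical contact, and that no two non-adjacent rectangles accidentally touch — and they interact with the global requirement that the whole picture be a single left-to-right (hence acyclic) layout, which a generic planar embedding does not supply for free. Concretely, I expect the bulk of the work to lie in showing that the chosen orientation of the (subdivided) source graph can be drawn as non-crossing rectangle staircases inside nested \emph{lens} regions, so that every cycle of the source appears as a source-to-sink pair joined by two rightward paths, exactly as forced by Claim~\ref{cl:45-cycles}, rather than as the forbidden enclosing configuration of Theorem~\ref{thm:W'_g}; checking that the gadget boundaries remain mutually non-touching is the delicate, figure-heavy verification that I would carry out by explicit coordinates.
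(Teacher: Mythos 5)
Your reduction framework is sound in outline (the parameter bookkeeping for the double subdivision, $\alpha(G')=\alpha(G)+|E(G)|$, is correct, and you rightly notice that uniform subdivision destroys Hamiltonicity reductions), but the proof has two genuine gaps, both of which you flag yourself as ``the bulk of the work'' without closing them. First, everything for \textsc{Maximum Independent Set}, \textsc{Minimum Vertex Cover} and \textsc{Minimum Dominating Set} rests on the unproved realizability lemma that $2$-subdivisions of subcubic planar graphs are $2$-CBU. This is a real strengthening of Theorem~\ref{thm:2-3-subd} (which only gives $3$-CBU), and your sketch --- ``fix a planar embedding and an acyclic orientation, route staircases'' --- does not engage with the actual difficulty: in a $2$-CBU representation \emph{every} cycle, not just every face, must decompose into a single source--sink pair joined by two rightward paths (the global analogue of Claim~\ref{cl:45-cycles}), so an arbitrary embedding plus an arbitrary acyclic orientation will generically be unrealizable; one must construct a bipolar-type orientation compatible with the nesting of all cycles and then verify non-contact of non-adjacent rectangles. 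Theorem~\ref{thm:series-parallel} and Theorem~\ref{thm:W'_g} show that girth and planarity alone give no such guarantee, so this lemma cannot be waved through. Second, the \textsc{Hamiltonian Path}/\textsc{Hamiltonian Cycle} case is a pure placeholder: no vertex or wire gadget is exhibited, and designing gadgets that simultaneously control traversal behavior \emph{and} assemble into a graph admitting a global left-to-right rectangle layout is precisely the whole problem, not a routine afterthought.

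The paper avoids both obstacles by a much shorter route that you brush past when you parenthetically mention grid graphs: it proves directly that all \emph{grid graphs} (induced subgraphs of the grid) are $2$-CBU via the single explicit assignment $(i,j)\mapsto [i+j-1,i+j]\times[2i-2j,2i-2j+3]$ --- induced subgraphs come for free since one may simply delete boxes --- and then cites the known NP-hardness of \textsc{Hamiltonian Path}, \textsc{Hamiltonian Cycle} and \textsc{Domination} on grid graphs. For \textsc{Maximum Independent Set} and \textsc{Minimum Vertex Cover} it uses a mild variant $R'(n_1,n_2)$ of the grid, for which the same coordinates are locally modified, together with existing hardness for induced subgraphs of $R'$. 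In other words, the realizability problem is solved once, by one formula, for a class already known to be hard, instead of per-reduction gadget engineering. To repair your proof you would either have to actually prove your realizability lemma and construct the Hamiltonicity gadgets, or replace your source problems by their grid-graph versions and supply the explicit grid representation --- at which point you have reproduced the paper's argument.
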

\begin{proof}
As these problems belong to NP, it remains to show that they are NP-hard for 2-CBU graphs.
Let us first show that the induced subgraphs of grids (so called \emph{grid graphs}) 
belong to 2-CBU. Consider the $n\times n$ grid $G$ such that
$V(G) = \{1,\ldots,n\}\times\{1,\ldots,n\}$, and such that the neighbors of any vertex 
$(i,j)$ are $\{(i,j-1)(i-1,j),(i,j+1),(i+1,j)\} \cap \{1,\ldots,n\}\times\{1,\ldots,n\}$.
Since it suffices to delete some boxes to obtain an induced subgraph,
the claim follows by constructing a 2-CBU representation for any such grid $G$. 
This construction is obtained by mapping any vertex $(i,j)$ to the box
$[i+j-1,i+j]\times[2i-2j,2i-2j+3]$ (see Figure~\ref{fig:grid2CBU}). 
As \textsc{Domination}~\cite{DOM}, \textsc{Hamiltonian Path}, and 
\textsc{Hamiltonian cycle}~\cite{HAM} are NP-hard for grid graphs, those problems 
are NP-hard for 2-CBU graphs.
\begin{figure}
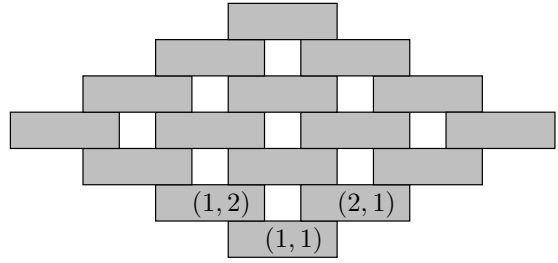

    \centering
   \ifx\XFigwidth\undefined\dimen1=0pt\else\dimen1\XFigwidth\fi
\divide\dimen1 by 6794
\ifx\XFigheight\undefined\dimen3=0pt\else\dimen3\XFigheight\fi
\divide\dimen3 by 3194
\ifdim\dimen1=0pt\ifdim\dimen3=0pt\dimen1=2000sp\dimen3\dimen1
  \else\dimen1\dimen3\fi\else\ifdim\dimen3=0pt\dimen3\dimen1\fi\fi
\tikzpicture[x=+\dimen1, y=+\dimen3]
{\ifx\XFigu\undefined\catcode`\@11
\def\temp{\alloc@1\dimen\dimendef\insc@unt}\temp\XFigu\catcode`\@12\fi}
\XFigu2000sp
\ifdim\XFigu<0pt\XFigu-\XFigu\fi
\clip(2228,-9472) rectangle (9022,-6278);
\tikzset{inner sep=+0pt, outer sep=+0pt}
\pgfsetlinewidth{+15\XFigu}
\pgfsetfillcolor{.!25}
\filldraw (4950,-9000) rectangle (6300,-9450);
\filldraw (5400,-9000) rectangle (4050,-8550);
\filldraw (4500,-8550) rectangle (3150,-8100);
\filldraw (4950,-8100) rectangle (6300,-8550);
\filldraw (5850,-8550) rectangle (7200,-9000);
\filldraw (6750,-8100) rectangle (8100,-8550);
\filldraw (7200,-8100) rectangle (5850,-7650);
\filldraw (5400,-8100) rectangle (4050,-7650);
\filldraw (3600,-8100) rectangle (2250,-7650);
\filldraw (3150,-7200) rectangle (4500,-7650);
\filldraw (4950,-7200) rectangle (6300,-7650);
\filldraw (6750,-7200) rectangle (8100,-7650);
\filldraw (4050,-6750) rectangle (5400,-7200);
\filldraw (4950,-6300) rectangle (6300,-6750);
\filldraw (5850,-6750) rectangle (7200,-7200);
\filldraw (7650,-7650) rectangle (9000,-8100);
\pgfsetfillcolor{.}
\pgftext[base,left,at=\pgfqpointxy{5400}{-9325}] {\fontsize{10}{21.6}\usefont{T1}{ptm}{m}{n}$(1,1)$}
\pgftext[base,left,at=\pgfqpointxy{6300}{-8875}] {\fontsize{10}{21.6}\usefont{T1}{ptm}{m}{n}$(2,1)$}
\pgftext[base,left,at=\pgfqpointxy{4500}{-8875}] {\fontsize{10}{21.6}\usefont{T1}{ptm}{m}{n}$(1,2)$}
\endtikzpicture%
        \caption{2-CBU representation of the $4\times 4$ grid.}
        \label{fig:grid2CBU}
\end{figure}

For the problems \textsc{Maximum Independent Set} and \textsc{Minimum Vertex Cover},
we have to consider a variant of grid graphs, the graph $R'(n_1,n_2)$ depicted 
in Figure~\ref{fig:R'2CBU}, and it is easy to see how to modify the construction above
in order to obtain a 2-CBU representation of this type of graphs. Again, this implies
that every induced subgraph of such a graph belongs to 2-CBU. As the problems 
\textsc{Maximum Independent Set} and \textsc{Minimum Vertex Cover} are NP-hard 
for this class (see the proof of Theorem 10 in~\cite{CLIQUE}), those problems are
NP-hard for 2-CBU graphs
\end{proof}
\begin{figure}
    \centering
    \includegraphics[scale=0.45]{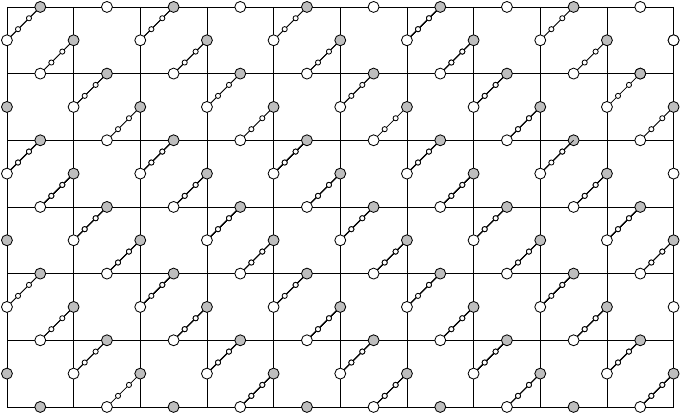}
\hspace{0.5cm}
\ifx\XFigwidth\undefined\dimen1=0pt\else\dimen1\XFigwidth\fi
\divide\dimen1 by 4994
\ifx\XFigheight\undefined\dimen3=0pt\else\dimen3\XFigheight\fi
\divide\dimen3 by 1394
\ifdim\dimen1=0pt\ifdim\dimen3=0pt\dimen1=2250sp\dimen3\dimen1
  \else\dimen1\dimen3\fi\else\ifdim\dimen3=0pt\dimen3\dimen1\fi\fi
\tikzpicture[x=+\dimen1, y=+\dimen3]
{\ifx\XFigu\undefined\catcode`\@11
\def\temp{\alloc@1\dimen\dimendef\insc@unt}\temp\XFigu\catcode`\@12\fi}
\XFigu2250sp
\ifdim\XFigu<0pt\XFigu-\XFigu\fi
\clip(3128,-9022) rectangle (8122,-7628);
\tikzset{inner sep=+0pt, outer sep=+0pt}
\pgfsetlinewidth{+15\XFigu}
\pgfsetfillcolor{.!25}
\filldraw (5400,-9000) rectangle (4050,-8550);
\filldraw (5850,-8550) rectangle (7200,-9000);
\filldraw (7200,-8100) rectangle (5850,-7650);
\filldraw (5400,-8100) rectangle (4050,-7650);
\filldraw (4950,-8100) rectangle (5400,-8325);
\filldraw (4950,-8325) rectangle (5400,-8550);
\filldraw (5850,-8100) rectangle (6300,-8325);
\filldraw (5850,-8325) rectangle (6300,-8550);
\filldraw (4500,-8100) rectangle (3150,-8550);
\filldraw (6750,-8100) rectangle (8100,-8550);
\pgfsetfillcolor{.}
\pgftext[base,left,at=\pgfqpointxy{4275}{-8815}] {\fontsize{9}{16.8}\usefont{T1}{ptm}{m}{n}$(i-1,j)$}
\pgftext[base,left,at=\pgfqpointxy{6075}{-8815}] {\fontsize{9}{16.8}\usefont{T1}{ptm}{m}{n}$(i,j-1)$}
\pgftext[base,left,at=\pgfqpointxy{6075}{-7915}] {\fontsize{9}{16.8}\usefont{T1}{ptm}{m}{n}$(i+1,j)$}
\pgftext[base,left,at=\pgfqpointxy{4275}{-7915}] {\fontsize{9}{16.8}\usefont{T1}{ptm}{m}{n}$(i,j+1)$}
\pgftext[base,left,at=\pgfqpointxy{3150}{-8365}] {\fontsize{8}{16.8}\usefont{T1}{ptm}{m}{n} $(i-1,j+1)$}
\pgftext[base,left,at=\pgfqpointxy{6750}{-8365}] {\fontsize{8}{16.8}\usefont{T1}{ptm}{m}{n} $(i+1,j-1)$}
\endtikzpicture%
    \caption{The graph $R'(n_1,n_2)$ and the local modification to obtain its 2-CBU representation. From the 2-CBU representation of the grid given above, one has to delete the box of every vertex $(i,j)$, where $i$ and $j$ are even, and if $i+j \equiv 2 \bmod 4$ one has to replace the box by 4 smaller boxes.}
    \label{fig:R'2CBU}
\end{figure}


\bibliographystyle{plain}
\bibliography{biblio.bib}

\end{document}